%% file: ms.tex
\pdfoutput=1
\documentclass[%
 reprint,           
 superscriptaddress,
 amsmath,amssymb,
 aps,
 prapplied,         
 nofootinbib,       
]{revtex4-2}

\usepackage{graphicx}   
\usepackage{dcolumn}    
\usepackage{bm}         
\usepackage{amsmath}
\usepackage[breaklinks=true]{hyperref}   

\newtheorem{proposition}{Proposition}
\newtheorem{definition}{Definition}
\newtheorem{theorem}{Theorem}
\newtheorem{lemma}{Lemma}
\newtheorem{corollary}{Corollary}

\graphicspath{{figures/}}

\begin{document}

\title{Casimir--electrostatic pull-in in nanoelectromechanical actuators:
Differentiable design sensitivities and the damping-dependent
collapse boundary}

\author{N.~S. Akintsov}
\email{akintsov777@ntu.edu.cn}
\altaffiliation[ORCID: ]{0000-0002-1040-1292}
\affiliation{%
 School of Artificial Intelligence and Computer Science,
 Nantong University, Nantong 226019, China
}%

\author{A.~P. Nevecheria}
\email{artiom.nevecherya@gmail.com}
\altaffiliation[ORCID: ]{0000-0001-6736-4691}
\affiliation{%
 Department of Mathematical and Computer Methods,
 Kuban State University, Krasnodar 350040, Russia
}%

\author{S.~N. Andreev}
\email{andreev@cir-innovations.ru}
\altaffiliation[ORCID: ]{0000-0003-3588-2894}
\affiliation{%
 Joint-Stock Company ``Center for Research and Development'',
 Moscow 101000, Russia
}%

\author{Qing-Hua Qin}
\email{qinghua.qin@smbu.edu.cn}
\altaffiliation[ORCID: ]{0000-0003-0948-784X}
\affiliation{%
 Institute of Advanced Interdisciplinary Technology,
 Shenzhen MSU-BIT University, Shenzhen 518172, China
}%

\date{\today}

\begin{abstract}
Nanoelectromechanical actuators operating at sub-100-nm gaps collapse through a
pull-in instability set by competing electrostatic and Casimir forces. The
quasi-static fold that bounds their safe operating range has been known in
closed form for three decades, together with the Casimir ceiling above which no
static equilibrium survives, which fixes the smallest gap a given stiffness and
area can hold open against the quantum vacuum. That fold does not give the
threshold reached from rest, its dependence on damping, or the design
sensitivities of either. We train a physics-informed neural network in a
rapidity coordinate that maps the movable pull-in pole to infinity, which keeps
the residual bounded across the collapse threshold where fixed-step
Runge--Kutta integration steps into unphysical states. Differentiating the
trained surrogate returns pull-in-voltage sensitivities that match the
closed-form fold to a relative error of $3\times10^{-6}$ and inverts a device
specification to a gap of $97.036\,\mathrm{nm}$ at a target actuation voltage.
Applied to the from-rest boundary, which carries no closed form once the damping
is finite, it supplies the same sensitivities where no analytic root exists. We
prove that this boundary is bracketed by two closed-form curves, that it is
nondecreasing in the damping ratio, that it merges with the fold once the
damping ratio exceeds $2^{-1/4}$, and that the gap closes as
$(\tau_*-\tau)^{2/5}$. Numerically the merger already occurs at $0.396$, and
the growth of the collapse time changes there from logarithmic to inverse
square root. The classical-limit bound on the thermal Lifshitz derating is at
the percent level, and the physical shift at these gaps lies orders of
magnitude below it.
\end{abstract}

\maketitle

\begin{figure*}
 \includegraphics[width=\textwidth]{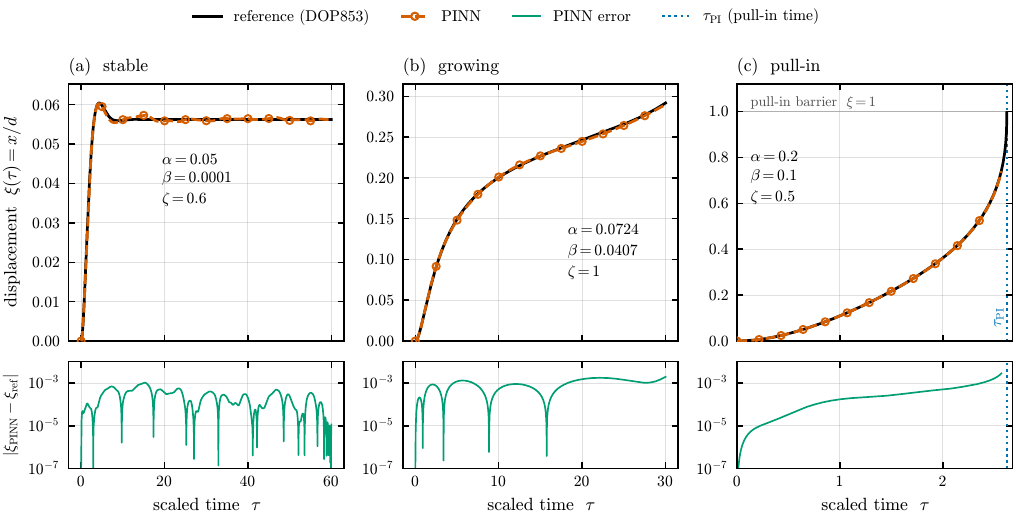}%
 \caption{\label{fig:trajectories}%
  Representative normalized-displacement trajectories $\xi(\tau)=x/d$ for the
  three dynamical regimes of Eq.~\eqref{eq:nd}: (a) stable oscillation
  ($\alpha=0.05$, $\beta=10^{-4}$, $\zeta=0.6$) settling to $\xi\approx0.056$;
  (b) slow monotonic approach to a Casimir-shifted equilibrium
($\xi\approx0.29$)
  ($\alpha=0.0724$, $\beta=0.0407$, $\zeta=1.0$); and
  (c) pull-in collapse ($\alpha=0.20$, $\beta=0.10$, $\zeta=0.5$): the gap
  closes at $\tau_{\mathrm{PI}}=2.6226$ (dotted line), and the last $2\%$ of
  that interval carries $\xi$ from $0.742$ to the barrier $\xi=1$ (gray line).
  Solid curves are the adaptive DOP853 reference, integrated through to the
  barrier; open markers are the physics-informed neural-network (PINN)
  solution, drawn only on its training window $[0,0.98\,\tau_{\mathrm{PI}}]$.
The strip under each panel gives the
  pointwise error $|\xi_{\mathrm{PINN}}-\xi_{\mathrm{ref}}|$ on a common
  five-decade scale, with peak deviations $1.05\times10^{-3}$,
  $1.95\times10^{-3}$, and $2.83\times10^{-3}$.}
\end{figure*}

\input{sections/intro}

\input{sections/model}

\input{sections/pinn}

\input{sections/results}

\begin{figure}
 \includegraphics[width=\columnwidth]{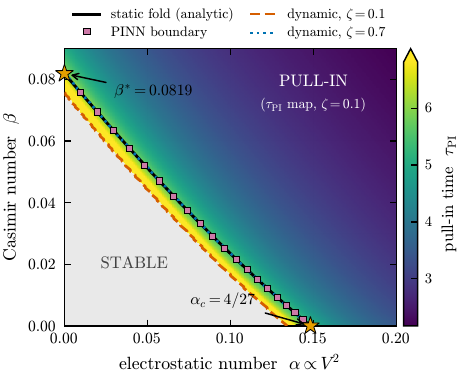}%
 \caption{\label{fig:phase_diagram}%
  Pull-in phase diagram in the $(\alpha,\beta)$ control-parameter plane at
  damping $\zeta=0.1$. The color map gives the dynamic pull-in time
  $\tau_{\mathrm{PI}}(\alpha,\beta)$ for trajectories launched from rest; the
  gray region is stable. The solid line is the analytic quasi-static fold,
  Eq.~\eqref{eq:fold}, the square markers are the boundary located by the
  parametric PINN, and the dashed and dotted lines are the from-rest dynamic
  boundaries at $\zeta=0.1$ and $\zeta=0.7$. Stars mark the endpoints
  $\alpha_c=4/27$ at $\beta=0$ and $\beta^\ast=256/3125=0.0819$ at $\alpha=0$.
  The dynamic boundary at $\zeta=0.1$ lies inside the fold by a
  kinetic-overshoot margin of mean width $0.0117$ in $\alpha$, and
  $\tau_{\mathrm{PI}}$ diverges as the boundary is approached.}
\end{figure}

\begin{figure}
 \includegraphics[width=\columnwidth]{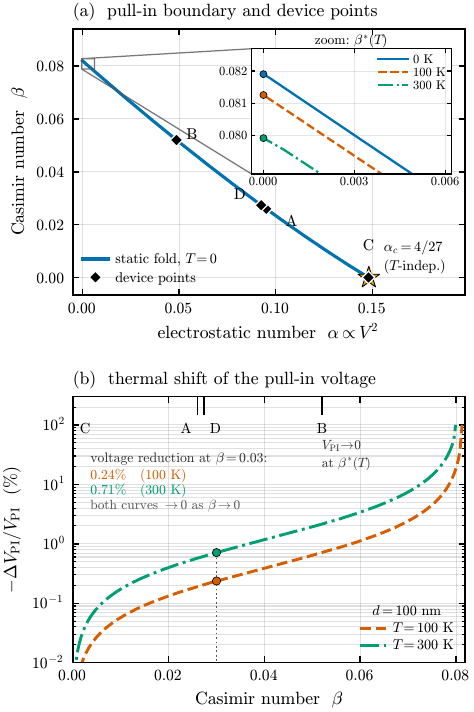}%
 \caption{\label{fig:lifshitz}%
  Classical-limit upper bound on the thermal Lifshitz contraction of the
  pull-in boundary for a $d=100$~nm gap. (a)~Static boundary in the
  $(\alpha,\beta)$ plane at $T=0$ with the four device operating points of
  Table~\ref{tab:devices}. The $\beta=0$ endpoint $\alpha_c=4/27$ is purely
  electrostatic and therefore temperature independent, and device~C
  ($\beta=8.1\times10^{-5}$) sits on it. The inset resolves the Casimir-axis
  intercept, which contracts from $\beta^\ast=0.0819$ at $0$~K to $0.0813$ at
  $100$~K and $0.0799$ at $300$~K. Outside that corner the three boundaries
  separate by less than the line width. (b)~Relative reduction of the
  pull-in voltage,
  $-\Delta V_{\mathrm{PI}}/V_{\mathrm{PI}}=1-\sqrt{\alpha_c(\beta,T)/
  \alpha_c(\beta,0)}$, with both boundaries interpolated onto a common $\beta$
  grid. It vanishes as $\beta\to0$, passes $0.24\%$ ($100$~K) and $0.71\%$
  ($300$~K) at $\beta=0.03$, and grows without bound as $\beta$ approaches
  $\beta^\ast(T)$, where $V_{\mathrm{PI}}\to0$. Ticks along the top axis mark
  the $\beta$ coordinates of devices A--D. Both panels evaluate
  Eq.~\eqref{eq:lifshitz} at a gap far below the thermal wavelength
  ($d\ll\lambda_T=7.6\,\mu$m at $300$~K), so they bound the thermal derating
  rather than report it; the physical correction at this gap follows
  Eq.~\eqref{eq:app-lowT-P} and is five to seven orders of magnitude
  smaller, depending on the temperature.}
\end{figure}

\begin{figure}
 \includegraphics[width=\columnwidth]{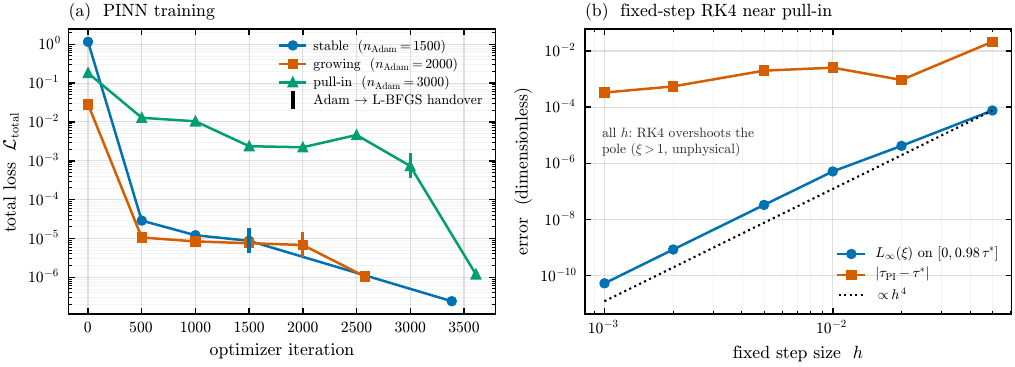}%
 \caption{\label{fig:training}%
  Numerical diagnostics. (a) Composite-loss histories for the three PINN
  trajectory trainings. Each run hands over from Adam to L-BFGS at its own
  iteration, $1500$, $2000$, and $3000$, marked by a vertical tick in the
  color of that curve. The abscissa counts Adam steps up to the tick and
  L-BFGS function evaluations after it, and the L-BFGS phase is logged only
  at its final evaluation, so the closing segment of each curve spans
  unsampled iterations. (b) Fixed-step classical fourth-order Runge--Kutta
  (RK4) near the pull-in pole: for $h\le0.01$ the displacement error follows
  the fourth-order prediction to within $2.5\%$, while every tested step
  $h\in\{0.05,\dots,0.001\}$ overshoots the movable singularity $\xi\ge1$,
  whereas the adaptive reference and the rapidity-coordinate PINN remain well
  posed. The error in $\tau_{\mathrm{PI}}$ is not monotone in $h$ because it
  is set by the detection of the barrier crossing rather than by the order of
  the integrator.}
\end{figure}

\begin{figure*}
 \includegraphics[width=\textwidth]{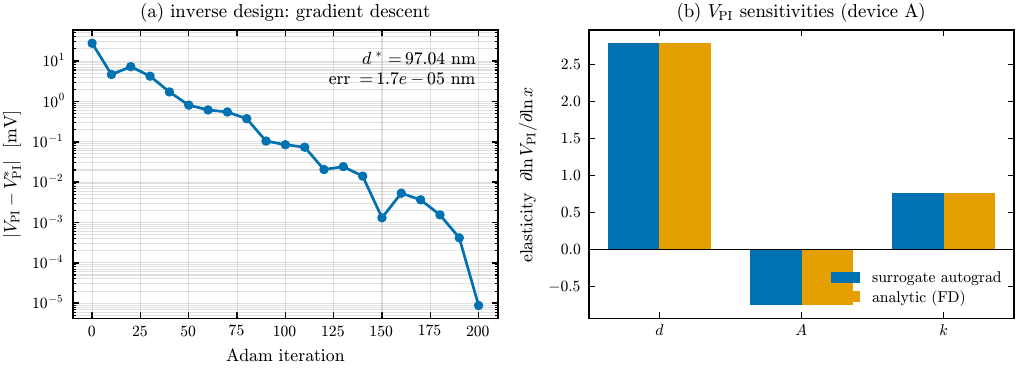}%
 \caption{\label{fig:inverse}%
  The differentiable surrogate as a design tool. (a)~Gradient-based inverse
  design: with area and stiffness fixed, Adam on the gap through the
  differentiable fold network drives the pull-in voltage to the target
  $V_{\mathrm{PI}}^{\ast}=0.30$~V, converging to $d^{\ast}=97.04$~nm and
  reproducing the analytic root to $\sim\!10^{-5}$~nm. (b)~Pull-in-voltage
  elasticities $\partial\ln V_{\mathrm{PI}}/\partial\ln x$ for device~A with
  respect to gap $d$, area $A$, and stiffness $k$, obtained by automatic
  differentiation through the surrogate (blue) and validated against a
  finite-difference evaluation of the analytic pipeline (orange). Each pair
  carries its elasticity and the relative disagreement of the underlying
  derivative, $3.4\times10^{-6}$ or smaller. Table~S5 of the Supplemental
  Material repeats the comparison at device~B, where the Casimir loading is
  twice as large.}
\end{figure*}

\begin{figure}
 \includegraphics[width=\columnwidth]{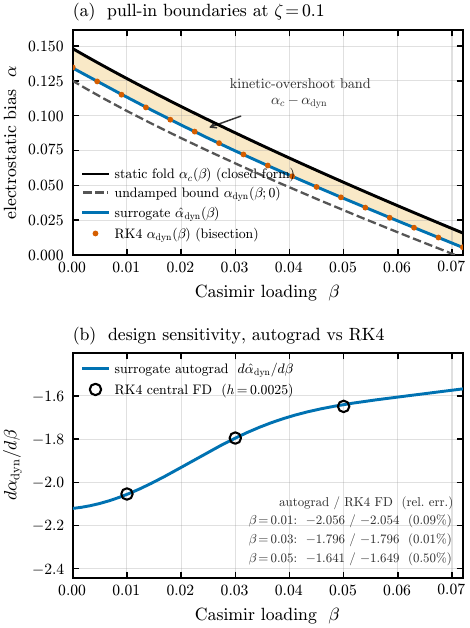}%
 \caption{\label{fig:dynbound}%
  The differentiable surrogate on a boundary that carries no closed form at
  finite damping. (a)~The from-rest dynamic pull-in threshold
  $\alpha_{\mathrm{dyn}}(\beta)$ at $\zeta=0.1$ (vermillion markers, located
  by RK4 bisection, every second point drawn) lies below the closed-form
  static fold $\alpha_c(\beta)$ (black) by the shaded kinetic-overshoot band,
  of mean width $0.0117$ in $\alpha$. The gray dashed curve is the undamped
  from-rest boundary, closed form~\cite{McLellan2016}; it and the fold bracket
  the threshold at every damping ratio (Theorem~\ref{thm:bracket}). A
differentiable surrogate (blue) fitted
  to the $33$ RK4 points reproduces the boundary to a root-mean-square error
  of $1.5\times10^{-4}$ in $\alpha$. (b)~Its automatic-differentiation design
  sensitivity $d\alpha_{\mathrm{dyn}}/d\beta$ (blue), against a central finite
  difference of the independently recomputed boundary at
  $h=2.5\times10^{-3}$ (open circles). The surrogate returns $-2.056$,
  $-1.796$, and $-1.641$ at $\beta=0.01$, $0.03$, and $0.05$, agreeing with
  the finite difference to $0.09\%$, $0.01\%$, and $0.5006\%$. The surrogate returns design sensitivities where the threshold
  is defined only implicitly by the integrated equation of motion.}
\end{figure}

\input{sections/tab_devices}

\input{sections/conclusions}

\begin{acknowledgments}
This work was partially supported by the State Assignment of the Ministry of
Education and Science of the Russian Federation (Project No. FZEN 2023-0006),
the
Nantong Science and Technology Plan Project (Grant Nos. JC2020137 and
JC2020138),
the Key Research and Development Program of Jiangsu Province of China (Grant No.
BE2021013-1), the National Natural Science Foundation of Jiangsu Province of
China
(Grant No. BK20201438), and in part by the Natural Science Research Project of
Jiangsu Provincial Institutions of Higher Education (Grant Nos. 1120KJA510002
and
20KJB510010).

\end{acknowledgments}

Conceptualization, Akintsov; methodology, Akintsov; formal analysis,
Nevecheria, Akintsov and Andreev; investigation, Andreev and Qin; software,
Nevecheria; validation, Nevecheria and Qin; visualization, Akintsov and
Nevecheria; writing of the original draft, Akintsov and Nevecheria; review and
editing, Akintsov and Nevecheria; supervision, Andreev.

The authors declare no competing interests.

\section*{Data availability}
The analysis code, trained network weights, training logs, and the scripts
that regenerate every figure, every table, and every number of the main text
and of the Supplemental Material are openly archived at
Zenodo~\cite{Zenodo2026} and run deterministically from a single fixed seed on
CPU. The archive also carries the checks of Appendix~
ef{sec:app-dyn}
that bear on the barrier constant, on the damping at which the band
closes, and on the departure from the degenerate equilibrium.

\input{sections/appendix_lifshitz}

\input{sections/appendix_dynamics}

\bibliography{refs}

\clearpage

\makeatletter
\let\@hangfrom@section\@hang@from
\let\@sectioncntformat\@seccntformat
\makeatother

\setcounter{section}{0}
\setcounter{equation}{0}
\setcounter{table}{0}
\setcounter{figure}{0}
\renewcommand{\thesection}{S\arabic{section}}
\renewcommand{\thesubsection}{S\arabic{section}\,\Alph{subsection}}
\renewcommand{\theequation}{S\arabic{equation}}
\renewcommand{\thetable}{S\arabic{table}}
\renewcommand{\thefigure}{S\arabic{figure}}

\begin{center}
{\large\bfseries Supplemental Material}
\end{center}

\input{sections/sm_body}

\end{document}

%% file: sections/intro.tex
\section{\label{sec:intro}Introduction}

Electrostatically actuated nanoelectromechanical systems (NEMS) and their
larger microelectromechanical (MEMS) counterparts convert an applied voltage
into controlled mechanical displacement, and they underpin capacitive
switches, resonant sensors, and tunable-gap actuators. In every such device a
movable electrode is driven across a narrow vacuum gap toward a fixed
electrode, and the useful travel range is bounded by an electromechanical
collapse known as pull-in: beyond a critical bias the elastic restoring force
can no longer balance the attractive load, the electrode snaps to contact, and
the device may stick. At gaps below \(100\,\mathrm{nm}\) two attractive loads
compete. The electrostatic pressure of a parallel-plate capacitor scales as
\((d-x)^{-2}\) with the instantaneous gap \(d-x\), while the quantum-vacuum
Casimir force~\cite{Casimir1948} scales as \((d-x)^{-4}\) and therefore
dominates as the gap closes. Both are softening: their gradients reinforce the
displacement rather than oppose it. Precision measurements of the Casimir
interaction in MEMS torsional oscillators~\cite{Chan2001,Decca2007} and the
review of Casimir physics between real
materials~\cite{Klimchitskaya2009} have established that at these separations
the Casimir term is not a perturbation but a leading contribution to the force
balance, a point developed for actuator design in a series of NEMS pull-in
studies~\cite{Batra2007,Batra2008,Koochi2010,Zhang2014,Javor2021,Xu2022,Elsaka2024}.

Pull-in is, mathematically, a saddle-node (fold) bifurcation: as the drive
increases, the stable equilibrium and an unstable saddle approach, merge, and
annihilate at a critical normalized displacement \(\xi_{\mathrm{PI}}=x/d\).
For a purely electrostatic actuator the classical value is
\(\xi_{\mathrm{PI}}=1/3\). Adding the steeper Casimir term moves the fold
to
smaller displacement and smaller bias, lowering the safe operating window. The
governing equation carries a movable singularity at \(\xi\to1\), where the gap
vanishes and the forces diverge. Series expansions, Galerkin reductions, and
other analytic approximations are accurate for small displacement but lose
uniform validity as the trajectory approaches this pole, the regime that sets
the pull-in threshold.

The quasi-static fold itself is settled. Serry, Walliser, and Maclay fixed the
Casimir-only collapse at \(\xi_{\mathrm{PI}}=1/5\) and the ceiling
\(256/3125\)~\cite{Serry1995}; Buks and Roukes gave the first-order shift of
the classical electrostatic threshold \(4/27\)~\cite{Nathanson1967} in the
Casimir strength~\cite{BuksRoukes2001}; and the locus interpolating the two
endpoints appears in Refs.~\cite{Palasantzas2005,Batra2007,LinZhao2007}, the
last of which writes the same damped equation with the same two control
parameters. What the fold does not supply is the behavior away from it: the
collapse time across the control plane, the threshold reached from rest rather
than through a slow voltage ramp, and the design sensitivities of either. All
three require the trajectory, and the trajectory is what the movable pole
obstructs.

Physics-informed neural networks (PINNs) offer a mesh-free route to this
problem. A PINN represents the solution of an ordinary differential equation
(ODE) by a neural network and trains it by minimizing a loss that embeds the
ODE residual together with the initial and boundary
constraints~\cite{Raissi2019,Wang2021,McClenny2023,Arzani2023,CayusoPRD2025},
so that the governing physics is enforced directly rather than through a
discretized time step. This formulation is attractive here because a
fixed-step integrator such as the fourth-order Runge--Kutta method (RK4)
becomes stiff and loses accuracy as the trajectory nears the movable pole,
where the right-hand side grows without bound. A network whose loss is aware of
the singularity can remain accurate in the same regime.

The pull-in barrier has a structural counterpart in relativistic dynamics. As
\(\xi\to1\) the dimensionless gap \(1-\xi\) plays the role of the factor
\(1-v^2/c^2\) for a charged particle accelerated in a laser field, where the
velocity approaches the light cone \(v\to c\)~\cite{Dodin2003}. This
correspondence, which rests on the rapidity description of relativistic
particle motion~\cite{AkintsovNevecheria2023},
motivates the regularization used here. For the relativistic problem we have
trained a physics-informed network on the same kinematics with the Lorentz
invariant imposed as a soft penalty~\cite{AkintsovPPCF2026}; the present work
instead builds the rapidity into the parametrization, so that the movable pole
is removed rather than penalized.

The applied payoff is concrete, and it lives in the design and methodology
layer over pull-in physics that is already well characterized. Locating the
fold across \((\alpha,\beta)\) yields the pull-in voltage as a design curve and
quantifies how the always-on Casimir load limits miniaturization at a fixed
stiffness and area. From this chart we assemble a differentiable design pipeline
for Casimir-limited NEMS actuation: the Casimir ceiling \(\beta^{*}\) sets a
floor on the achievable gap, automatic differentiation through the trained
surrogate returns exact design sensitivities, and gradient descent through the
same object inverts a device specification. Two features carry the contribution.
The surrogate is singularity-aware: parametrized in a rapidity coordinate, it
stays well posed on a pole-free domain in the near-collapse regime where
fixed-step integration steps into unphysical states, an advantage that does not
depend on whether a closed form exists. And because the static fold has a closed
form, we use it as a ground-truth oracle that checks every link of the pipeline,
a strength rather than a limitation. Once finite-temperature Lifshitz
corrections are included, the chart also bounds a temperature dependence of the
stability threshold. The remainder of the paper
is organized as follows: Sec.~\ref{sec:model} sets out the physical model, its
dimensionless form, the thermal correction, and the relativistic analogy;
Sec.~\ref{sec:pinn} describes the PINN methodology and its verification against
RK4; Sec.~\ref{sec:results} presents the trajectories, phase diagram, and
thermal boundary shift; and Sec.~\ref{sec:conclusions} summarizes the findings.

%% file: sections/model.tex
\section{\label{sec:model}Physical Model}

We model the actuator as a single-degree-of-freedom (SDOF) lumped
oscillator~\cite{Batra2008b}: a
movable electrode of effective mass \(m\), linear stiffness \(k\), and
mechanical damping \(\gamma_d\), displaced by \(x\) toward a fixed electrode
across an initial vacuum gap \(d\). The two electrodes are treated as parallel
plates of area \(A\), the standard approximation when the gap is small compared
with the lateral size.

\subsection{\label{sec:model-eom}Equation of motion}

Newton's second law for the movable electrode reads
\begin{equation}
\label{eq:eom}
m\,\ddot{x}+\gamma_d\,\dot{x}+kx
=\frac{\varepsilon_0 A V_0^{2}}{2\,(d-x)^{2}}
+\frac{\pi^{2}\hbar c\,A}{240\,(d-x)^{4}} .
\end{equation}
The left-hand side collects inertia, viscous damping, and the linear elastic
restoring force. On the right, the electrostatic term is the parallel-plate
capacitor attraction under an applied bias \(V_0\), with \(\varepsilon_0\)
the vacuum permittivity, and the Casimir term is the pressure between ideal
conductors, \(\pi^{2}\hbar c/[240\,(d-x)^{4}]\), integrated over the plate
area~\cite{Casimir1948,Milton2004}. Both loads are attractive and diverge as
the gap \(d-x\) closes. The material and geometry dependence of the Casimir
pressure, neglected in the ideal-conductor form used here, is reviewed
in Refs.~\cite{Klimchitskaya2009,Bordag2009,Batra2007}.

Two modeling choices enter the force model. First, the analysis assumes
the parallel-plate (Decca-type) configuration~\cite{Decca2007}. Sphere--plate
experiments~\cite{Chan2001} carry a different force exponent and are treated
through the proximity-force approximation, which would rescale the geometric
prefactor. Second, the ideal-conductor Casimir pressure used here overestimates
the real gold force at a \(100\,\mathrm{nm}\) gap by close to a factor of
two: the finite-conductivity reduction factor at that separation is
\(\eta_F=0.48\)~\cite{LambrechtReynaud2000}, while surface roughness accounts
for about \(1\%\)~\cite{Klimchitskaya2009,Bordag2009}.
Consequently \(\beta\), \(V_{\mathrm{PI}}\), and the thermal shift computed
below
are ideal-conductor upper bounds. Substituting realistic optical data for gold
would reduce \(\beta\) and rescale all dependent predictions accordingly.

\subsection{\label{sec:model-nd}Dimensionless formulation}

Introduce the normalized displacement \(\xi=x/d\in[0,1)\), the dimensionless
time \(\tau=\omega_0 t\) with natural frequency \(\omega_0=\sqrt{k/m}\), and the
quality factor \(Q=m\omega_0/\gamma_d\). Writing an overdot for
differentiation with respect to \(\tau\) and using \(d-x=d(1-\xi)\),
Eq.~\eqref{eq:eom} becomes
\begin{equation}
\label{eq:nd}
\ddot{\xi}+2\zeta\,\dot{\xi}+\xi
=\frac{\alpha}{(1-\xi)^{2}}+\frac{\beta}{(1-\xi)^{4}} .
\end{equation}
The state \(\xi\to1\) is pull-in, at which the gap collapses. The three
dimensionless groups are defined in Table~\ref{tab:params}.
\begin{table}[t]
\caption{\label{tab:params}Dimensionless control parameters of
Eq.~\eqref{eq:nd}.}
\begin{ruledtabular}
\begin{tabular}{lcl}
Parameter & Definition & Role \\ \colrule
\(\alpha\) & \(\varepsilon_0 A V_0^{2}/(2kd^{3})\) & electrostatic strength \\
\(\beta\)  & \(\pi^{2}\hbar c\,A/(240kd^{5})\)     & Casimir strength \\
\(\zeta\)  & \(\gamma_d/(2m\omega_0)=1/(2Q)\)      & damping ratio \\
\end{tabular}
\end{ruledtabular}
\end{table}
Here \(\alpha\) is tunable through the bias, whereas \(\beta\) is fixed by the
device geometry once \(k\), \(A\), and \(d\) are set.

Static equilibria are the fixed points of Eq.~\eqref{eq:nd}, i.e. the roots of
\begin{equation}
\label{eq:equilibria}
g(\xi)\equiv\frac{\alpha}{(1-\xi)^{2}}+\frac{\beta}{(1-\xi)^{4}}-\xi=0 .
\end{equation}
A saddle-node (fold) occurs where a stable node and the neighboring saddle
merge, so that \(g=g'=0\) simultaneously. With the pull-in gap
\(u=1-\xi_{\mathrm{PI}}\), these two conditions read
\begin{equation}
\label{eq:fold-system}
\alpha\,u^{-2}+\beta\,u^{-4}=1-u,\qquad
2\alpha\,u^{-3}+4\beta\,u^{-5}=1,
\end{equation}
which are linear in \((\alpha,\beta)\) with determinant
\(4u^{-7}-2u^{-7}=2u^{-7}\neq0\), so solving gives the closed form
\begin{equation}
\label{eq:fold}
\alpha_c(u)=\tfrac12\,u^{2}\,(4-5u),\qquad
\beta_c(u)=\tfrac12\,u^{4}\,(3u-2).
\end{equation}
Positivity of both forces restricts the fold locus to \(u\in[2/3,4/5]\), that
is \(\xi_{\mathrm{PI}}\in[1/5,1/3]\). The endpoints are
\(\alpha_c=4/27\) at \(\beta=0\) (the classical electrostatic pull-in at
\(\xi_{\mathrm{PI}}=1/3\)~\cite{Nathanson1967}) and
\(\beta^{*}=256/3125\approx0.0819\) at \(\alpha=0\)
(\(\xi_{\mathrm{PI}}=1/5\)), beyond which the Casimir force alone collapses the
gap at zero bias.

Equation~\eqref{eq:fold} is a known result, and we restate its derivation only
because every later section is written in the pull-in gap \(u\). The
\(\alpha=0\) endpoint and the value \(1/5\) are due to Serry, Walliser, and
Maclay~\cite{Serry1995}; the slope at the opposite endpoint is due to Buks and
Roukes~\cite{BuksRoukes2001}; and the locus itself appears in
Refs.~\cite{Palasantzas2005,Batra2007,LinZhao2007}, of which
Ref.~\cite{LinZhao2007} treats the same damped oscillator with the same
\(\alpha\) and \(\beta\). The fold
locus \(\{\alpha_c(u),\beta_c(u)\}\) is the pull-in boundary mapped in
Fig.~\ref{fig:phase_diagram}. As a representative operating point, a gold NEMS
plate with \(A\sim100\,\mu\mathrm{m}^{2}\), \(d\sim100\,\mathrm{nm}\), and
\(k\sim0.5\,\mathrm{N/m}\) gives \(\beta\approx0.026\) and a pull-in voltage
\(V_{\mathrm{PI}}=\sqrt{2\alpha_c kd^{3}/\varepsilon_0
A}\approx0.33\,\mathrm{V}\).

\subsection{\label{sec:model-lifshitz}Thermal Lifshitz correction}

At finite temperature the Casimir pressure acquires a thermal correction from
the Lifshitz theory of fluctuating fields~\cite{Lifshitz1956}. In the classical,
high-temperature limit (the regime in which the gap approaches the thermal
wavelength \(\lambda_T=\hbar c/(k_B T)\), so that \(d\gtrsim\lambda_T\)), the
leading term enhances the Casimir strength by a factor evaluated at the
instantaneous gap,
\begin{equation}
\label{eq:lifshitz}
\beta\;\longrightarrow\;
\beta\!\left[\,1+\frac{60\,\zeta(3)}{\pi^{3}}\,
\frac{k_B T\,(d-x)}{\hbar c}\right]
\qquad(d\gtrsim\lambda_T),
\end{equation}
with \(\zeta(3)\approx1.202\) the Ap\'ery constant\footnote{Here \(\zeta(3)\)
denotes the Riemann zeta value (Ap\'ery's constant), not the damping ratio
\(\zeta\).} and
\(c_L\equiv60\zeta(3)/\pi^{3}\approx2.33\), derived in
Appendix~\ref{sec:app-lifshitz} as the ratio of the classical Casimir pressure
to its zero-temperature value. Because the bracket is linear in
\((d-x)=d(1-\xi)\), it adds a \((1-\xi)^{-3}\) contribution sitting between the
electrostatic and Casimir terms,
\begin{equation}
\label{eq:lifshitz-split}
\frac{\beta\,[1+\kappa T(1-\xi)]}{(1-\xi)^{4}}
=\frac{\beta}{(1-\xi)^{4}}+\frac{\beta\kappa T}{(1-\xi)^{3}},
\end{equation}
where \(\kappa(d)=c_L\,k_B d/(\hbar c)\); at
\(d=100\,\mathrm{nm}\), \(\kappa=1.016\times10^{-4}\,\mathrm{K}^{-1}\).

The relevant scale is the thermal wavelength \(\lambda_T=\hbar c/(k_B T)=
7.6\,\mu\mathrm{m}\) at \(300\,\mathrm{K}\). At the sub-100-nm gaps that are the
focus of this work, \(d\ll\lambda_T\), so the device operates deep in the
low-temperature regime where Eq.~\eqref{eq:lifshitz} does not apply: the true
thermal correction to the ideal-metal Casimir force is higher-order in
\(d/\lambda_T\) and is sub-percent, as established both experimentally and
theoretically~\cite{Klimchitskaya2009,Bordag2009,Bimonte2014}. We therefore
treat the enhancement of Eq.~\eqref{eq:lifshitz}, evaluated at \(d=100\,
\mathrm{nm}\) (about \(0.76\%\) at \(100\,\mathrm{K}\) and \(2.29\%\) at
\(300\,\mathrm{K}\) at the pull-in gap), as a classical-limit upper bound, not
as a physical prediction at that gap. Where nonzero, the correction is
one-sided and always
destabilizing~\cite{Bezerra2016,Bimonte2014,Klimchitskaya2022}. The
classical-limit upper bound is mapped in Fig.~\ref{fig:lifshitz}.

\subsection{\label{sec:model-analogy}Analogy with relativistic particle
dynamics}

Equation~\eqref{eq:nd} shares its analytic structure with the motion of a
charged particle accelerated in a laser
field~\cite{Dodin2003,AkintsovPPCF2026}.
Both are second-order ODEs whose right-hand side carries a movable pole at an
unreachable boundary---\(\xi\to1\) for the actuator, \(v\to c\) for the
particle---and both admit the same remedy, a rapidity coordinate in which
the boundary recedes to infinity. Here that coordinate is
\(\xi=1-e^{-\theta}\), which maps the pole to \(\theta\to\infty\). The
correspondence is summarized in Table~\ref{tab:analogy}.
\begin{table}[t]
\caption{\label{tab:analogy}Structural correspondence between the NEMS pull-in
problem and the relativistic particle in a laser field.}
\begin{ruledtabular}
\begin{tabular}{ll}
NEMS oscillator & Laser-driven particle \\ \colrule
electrostatic strength \(\alpha\) & normalized amplitude \(a_0\) \\
Casimir strength \(\beta\) & radiation-reaction parameter \(\eta\) \\
pull-in \(\xi\to1\) & light cone \(v\to c\) (movable pole) \\
\((\alpha,\beta)\) fold & \((a_0,\eta)\) energy bifurcation \\
pull-in loss term & light-cone loss regularization \\
\end{tabular}
\end{ruledtabular}
\end{table}
This is a formal, structural analogy between two movable-pole
problems that share a rapidity variable, not a Lorentz-covariant identity: the
NEMS dynamics is nonrelativistic, and the mapping is exploited only to import
the rapidity regularization, familiar from the relativistic description of
particle motion, into the pull-in setting.

%% file: sections/pinn.tex
\section{\label{sec:pinn}Physics-Informed Neural-Network Methodology}

We approximate the single-trajectory solution $\xi(\tau)$ of the pull-in
oscillator with a physics-informed neural network (PINN), a parametric
surrogate whose parameters we fit by minimizing the governing ordinary
differential equation (ODE) residual, evaluated through the network's own
automatic derivatives rather than through labeled solution
data~\cite{Raissi2019}. The surrogate is smooth, mesh-free, and analytically
differentiable. Trained in the rapidity coordinate of
Sec.~\ref{sec:pinn-rapidity}, it represents the approach to pull-in on a
pole-free domain.

\begin{definition}[Collocation residual; movable pole]\label{def:residual}
\sloppy
The \emph{collocation residual} is the pointwise value of the governing ODE
operator applied to the network output at a sampled collocation point. The
mean square of this residual over the sampled points is the physics loss. A
\emph{movable pole} is a singularity of the solution whose location depends
on the initial data and parameters, as opposed to a fixed singularity of the
ODE coefficients.
\end{definition}

\subsection{\label{sec:pinn-arch}Architecture}

The map $\tau\mapsto\xi(\tau)$ is a fully connected multilayer perceptron (MLP)
with four hidden layers of $64$ units each, $\tanh$ activation, and Xavier
initialization. The scalar time input is affinely rescaled to $[-1,1]$ over the
training window and, when needed, augmented with Fourier features
$\{\sin(k\omega\tau),\cos(k\omega\tau)\}$, with $\omega$ the dimensionless
natural frequency. We train in double precision. The
networks are small enough that this carries negligible cost while easing the
$\sim\!10^{-3}$ accuracy target.

The initial conditions are imposed by construction, through a hard ansatz.
Writing
$g(\tau)=1-e^{-\tau}$, so that $g(0)=0$ and $\dot g(0)=1$, we set
\begin{equation}
\xi_{\rm NN}(\tau)=\xi_0+v_0\,g(\tau)+g(\tau)^2\,\mathcal N(\tau;\Theta),
\label{eq:hardic}
\end{equation}
with $\mathcal N$ the raw network of weights $\Theta$. Then
$\xi_{\rm NN}(0)=\xi_0$ and $\dot\xi_{\rm NN}(0)=v_0$ hold by construction, so
the initial-condition loss can be dropped. We optimize with Adam for warm-up and
refine with L-BFGS. All trainings are CPU-only and deterministic from a
fixed seed ($42$). The complete per-regime protocol, including collocation
counts, optimizer schedules, and wall-clock times, is given in the Supplemental
Material~\cite{SupplMat}, Sec.~S1.

Plain $\tanh$ networks proved adequate for short and moderate windows, but the
long-horizon stable regime ($T=60$) floored near
$E_\infty\!\approx\!6.5\times10^{-3}$: a depth-limited $\tanh$ MLP
spends its capacity resolving many oscillation periods. Adding two Fourier input
modes tuned to the damped natural frequency
$\omega_d=\sqrt{1-\zeta^2}\approx0.8$ (not the undamped $\omega=1$), together
with a longer
L-BFGS polish, restored the target accuracy $\sim\!10^{-3}$. The input encoding,
not raw
depth, sets the attainable error on long trajectories.

\subsection{\label{sec:pinn-loss}Composite loss}

We minimize the composite objective
\begin{equation}
\mathcal L=\mathcal L_{\rm ODE}
          +\lambda_1\,\mathcal L_{\rm IC}
          +\lambda_2\,\mathcal L_{\rm pull\text{-}in},
\label{eq:loss}
\end{equation}
with the physics term the mean square of the collocation residual of
Definition~\ref{def:residual} over $N$ collocation points,
\begin{equation}
\begin{aligned}
\mathcal L_{\rm ODE}&=\frac1N\sum_{i=1}^{N}r(\tau_i)^2,\\
r&=\ddot\xi+2\zeta\dot\xi+\xi-\frac{\alpha}{(1-\xi)^2}-\frac{\beta}{(1-\xi)^4},
\end{aligned}
\label{eq:residual}
\end{equation}
where $\dot\xi$ and $\ddot\xi$ come from \texttt{torch.autograd.grad} applied to
the network output. The term $\mathcal L_{\rm IC}$ penalizes departures from
$\xi(0)=\xi_0$, $\dot\xi(0)=v_0$, and the soft barrier
$\mathcal L_{\rm pull\text{-}in}$ penalizes $\xi\ge1-\delta$ (and $\xi<0$).
Under the rapidity coordinate together with the hard-IC parametrization
\eqref{eq:hardic}, both vanish identically and their weights
$\lambda_1,\lambda_2$ are inert; we retain $\mathcal L_{\rm pull\text{-}in}$
only for the raw parametrization, which does not enforce $\xi<1$ by
construction. In the runs reported here we use $N=1600$--$3000$ collocation
points per regime.

A well-documented PINN failure mode is imbalance among the gradients of the loss
terms, which lets one term dominate the parameter update and stalls
training~\cite{Wang2021,McClenny2023}. Removing $\mathcal L_{\rm IC}$ through
the
hard ansatz eliminates the worst of this imbalance at its source. For the soft
terms that remain, we recommend gradient-balanced or self-adaptive weighting
schemes~\cite{Wang2021,McClenny2023} over hand-tuned constants.

\subsection{\label{sec:pinn-rapidity}Rapidity regularization of the movable
pole}

The force terms $(1-\xi)^{-2}$ and $(1-\xi)^{-4}$ carry a movable pole: released
above the fold, the trajectory closes the gap $1-\xi$ at a finite time $\tau_*$
that depends on the initial data and parameters. We remove this coordinate pole
with the substitution
\begin{equation}
\xi=1-e^{-\theta},\qquad \theta=-\log(1-\xi)\ge0,
\label{eq:rapidity}
\end{equation}
which maps $\xi\in[0,1)$ onto $\theta\in[0,\infty)$ so that the barrier
$\xi\to1$ becomes $\theta\to\infty$. Inserting \eqref{eq:rapidity} into the
governing equation, using $(1-\xi)^{-2n}=e^{2n\theta}$ (so the electrostatic and
Casimir terms become $\alpha\,e^{2\theta}$ and $\beta\,e^{4\theta}$), and
multiplying through by $e^{\theta}$ (so that
$e^{\theta}(1-e^{-\theta})=e^{\theta}-1$
and the exponents advance to $3$ and $5$) yields the transformed
ODE
\begin{equation}
\ddot\theta=\dot\theta^{2}-2\zeta\dot\theta-\bigl(e^{\theta}-1\bigr)
           +\alpha\,e^{3\theta}+\beta\,e^{5\theta}
\label{eq:thetaode}
\end{equation}
with right-hand side $f(\theta,\dot\theta)$. The gap $1-\xi=e^{-\theta}$ decays
like the relativistic Lorentz factor
$1/\gamma=\operatorname{sech}\varphi$, i.e.\ as $2e^{-\varphi}$ at large
$\varphi$, so $\theta$ plays the
role of a ``rapidity'' and $\xi=1$ that of a horizon approached but never
reached~\cite{Dodin2003}. This is a formal chart, not
a Lorentz boost: $\theta$ carries no group-theoretic meaning and generates no
boost additivity. The analogy is structural: an unbounded coordinate that
saturates exponentially.

Training in the rapidity coordinate uses the residual of
Eq.~\eqref{eq:thetaode} directly. Writing $r_\theta$ for the $\theta$-domain
collocation residual,
\begin{equation}
\begin{aligned}
r_\theta&=\ddot\theta
  -\bigl[\dot\theta^{2}-2\zeta\dot\theta-(e^{\theta}-1)
         +\alpha\,e^{3\theta}+\beta\,e^{5\theta}\bigr]\\
 &=\ddot\theta+2\zeta\dot\theta+(e^{\theta}-1)-\dot\theta^{2}
   -\alpha\,e^{3\theta}-\beta\,e^{5\theta},
\end{aligned}
\label{eq:residual-theta}
\end{equation}
the corresponding physics loss is the mean square over $N$ collocation points,
\begin{equation}
\mathcal L_{\rm ODE}^{\theta}=\frac1N\sum_{i=1}^{N}r_\theta(\tau_i)^2 .
\label{eq:loss-theta}
\end{equation}
Initial conditions are again imposed exactly by a hard ansatz in the rapidity
variable,
\begin{equation}
\theta_{\rm NN}(\tau)=\theta_0+\tau\,\dot\theta_0+\tau^{2}\,\mathcal
N(\tau;\Theta),
\label{eq:hardic-theta}
\end{equation}
so that $\theta_{\rm NN}(0)=\theta_0$ and $\dot\theta_{\rm NN}(0)=\dot\theta_0$
hold by construction. The stable and growing regimes are trained with the raw
residual~\eqref{eq:residual}. The pull-in regime is trained with
$\mathcal L_{\rm ODE}^{\theta}$ in the rapidity coordinate, to which
Proposition~\ref{prop:reg} applies.

\begin{proposition}[Pole regularization]\label{prop:reg}
Under the map \eqref{eq:rapidity}, the chart $\theta=-\log(1-\xi)$ is a
$C^\infty$ diffeomorphism of $[0,1)$ onto $[0,\infty)$, with smooth inverse
$\xi=1-e^{-\theta}$, and $\theta(\tau)$ obeys
\eqref{eq:thetaode}, whose right-hand side $f$ is real-analytic (entire) on
$\mathbb R^2$ with no singularity at any finite $\theta$. Consequently, on every
compact interval $\theta\in[0,\theta_{\max}]$ the ODE residual, and hence the
physics loss $\mathcal L_{\rm ODE}$, are bounded and $C^\infty$, so the PINN
optimization is posed on a pole-free domain.
\end{proposition}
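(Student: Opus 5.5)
The plan has three parts: the chart itself, the chain-rule derivation of Eq.~\eqref{eq:thetaode}, and boundedness of the residual on compacta.

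\emph{The chart.} The map $\phi(\xi)=-\log(1-\xi)$ is $C^\infty$ on $[0,1)$ because $1-\xi>0$ there. Its derivative $\phi'(\xi)=(1-\xi)^{-1}\ge1$ is positive, so $\phi$ is strictly increasing. We have $\phi(0)=0$ and $\phi(\xi)\to\infty$ as $\xi\to1^-$, so by continuity $\phi$ is a bijection onto $[0,\infty)$. Its inverse $\psi(\theta)=1-e^{-\theta}$ is the restriction of an entire function, hence $C^\infty$. Since $\psi'(\theta)=e^{-\theta}\neq0$, the two maps are mutually inverse $C^\infty$ diffeomorphisms of the half-open intervals. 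At the endpoint $0$ we read smoothness as one-sided, or note that both formulas extend smoothly to a neighbourhood of $0$.

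\emph{The transformed equation.} For a $C^2$ solution $\xi(\tau)$ of Eq.~\eqref{eq:nd} with $\xi<1$, set $\theta=\phi(\xi)$. The chain rule gives
\[
\dot\xi=e^{-\theta}\dot\theta,\qquad \ddot\xi=e^{-\theta}\bigl(\ddot\theta-\dot\theta^{2}\bigr),
\]
and we use $(1-\xi)^{-2}=e^{2\theta}$ and $(1-\xi)^{-4}=e^{4\theta}$. Substituting into Eq.~\eqref{eq:nd} and multiplying by $e^{\theta}>0$ reproduces Eq.~\eqref{eq:thetaode}. Because $e^{\theta}$ never vanishes, the two equations are equivalent on the respective domains. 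The right-hand side
\[
f(\theta,p)=p^{2}-2\zeta p-(e^{\theta}-1)+\alpha e^{3\theta}+\beta e^{5\theta}
\]
is a polynomial in $p$ plus finite combinations of exponentials in $\theta$. It is therefore entire on $\mathbb C^2$, and in particular real-analytic on $\mathbb R^2$ with no finite singularity. All derivatives of $f$ are again of this form, hence continuous.

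\emph{Boundedness of the residual and loss.} The network output $\theta_{\rm NN}$ of Eq.~\eqref{eq:hardic-theta} is $C^\infty$ in $\tau$ for a $\tanh$ MLP. So on the training window $[0,T]$, the functions $\theta_{\rm NN}$, $\dot\theta_{\rm NN}$ and $\ddot\theta_{\rm NN}$ are continuous and bounded. In particular $(\theta_{\rm NN},\dot\theta_{\rm NN})$ stays in a compact set $K\subset\mathbb R^2$, with $\theta\in[0,\theta_{\max}]$ as in the statement, or more generally in any bounded range. Then $r_\theta=\ddot\theta-f(\theta,\dot\theta)$ is a composition of smooth maps, so it is $C^\infty$ in $\tau$ and bounded by
\[
\sup_{[0,T]}|\ddot\theta|+\max_K|f|<\infty.
\]
The loss $\mathcal L^\theta_{\rm ODE}$ of Eq.~\eqref{eq:loss-theta} is a finite mean of squares of such values. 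It is therefore finite and $C^\infty$ in the weights $\Theta$, since the output and its $\tau$-derivatives depend smoothly on $\Theta$ through compositions of affine maps and $\tanh$.

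\emph{Main obstacle: the meaning of ``bounded''.} The only delicate point is a careful statement of what boundedness means. The bound $\max_K|f|$ grows like $\beta e^{5\theta_{\max}}$, so the claim is boundedness on compacta, not uniformly. I would stress that the movable pole of Eq.~\eqref{eq:nd} at finite $\tau_*$ becomes $\theta\to\infty$. A chart cannot remove a genuine blow-up in finite time; it only relocates it to infinity. So the statement is confined to compact $\theta$-ranges, which is exactly what any finite training window realizes. By contrast, in the original coordinate even the residual of a smooth network sampled near $\xi\to1$ is unbounded for $\xi$ arbitrarily close to $1$. The point of the proposition is that in the rapidity chart no such neighbourhood exists at any finite $\theta$.
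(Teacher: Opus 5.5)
Your proposal is correct and follows the paper's proof sketch essentially step for step. You use positivity of the derivative to get the diffeomorphism, show that $f$ is entire as a finite sum of exponentials plus a quadratic in $\dot\theta$, and bound $r_\theta$ through the compact range of $(\theta_{\rm NN},\dot\theta_{\rm NN})$ on $[0,T]$. Your explicit chain-rule derivation of Eq.~\eqref{eq:thetaode}, which the paper carries out in the text before the proposition, and your remark that the loss is smooth in the weights $\Theta$ are useful sharpenings rather than a different route.
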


\noindent\emph{Proof sketch.} Since $1-\xi=e^{-\theta}>0$, the inverse
$\theta=-\log(1-\xi)$ is smooth and strictly increasing
($d\xi/d\theta=e^{-\theta}>0$), giving the diffeomorphism. The right-hand side
of
\eqref{eq:thetaode} is a finite sum of exponentials $e^{k\theta}$ and a
quadratic
in $\dot\theta$; each summand is entire and finite sums are entire, so $f$ has
no
finite-$\theta$ pole, unlike the original field at $\xi=1$. Since $\theta_{\rm
NN}$
and $\dot\theta_{\rm NN}$ are continuous on the compact interval $[0,T]$, they
take values in a compact set $K$, on which the entire right-hand side $f$ is
bounded; hence $\mathcal L_{\rm ODE}^{\theta}$ is finite and smooth. On this
compact $(\theta,\dot\theta)$ set $f$ attains a finite maximum, and the network
with its automatic derivatives is $C^\infty$ in $\tau$, so $r_\theta$ and its
mean square are bounded and smooth.\hfill$\square$

PINNs for ODEs with singular points were applied by Cayuso \textit{et al.}
to the Legendre, hypergeometric, aether, and spherical-accretion
equations~\cite{CayusoPRD2025}, all of which carry fixed regular singular
points whose location is set by the coefficients. Boundary data there are
imposed by a hard ansatz, and for the Legendre case the envelope is
$1-e^{-(x+1)}$, the same construction we use in Eq.~\eqref{eq:hardic}; their
exponential output maps enforce positivity of the solution, and the residual is
minimized in the original variable throughout. The rapidity substitution of
Eq.~\eqref{eq:rapidity} acts on the dependent variable instead. It sends the
movable pole at $\xi=1$, whose location depends on the initial data and on
$(\alpha,\beta,\zeta)$, to $\theta\to\infty$, and the network is trained on the
residual of the transformed equation, Eq.~\eqref{eq:residual-theta}, whose
right-hand side is entire on $\mathbb{R}^{2}$ by
Proposition~\ref{prop:reg}. Related transformations of the independent variable
desingularize self-similar blow-up~\cite{WangBlowup2023,BuddWilliams2010}.
\emph{Remark (scope).}
Proposition~\ref{prop:reg} regularizes the coordinate pole, not the
finite-time blow-up in $\tau$: one still has $\theta(\tau)\to\infty$ as
$\tau\to\tau_*$, since no change of the dependent variable removes a genuine
finite-time escape of the flow. We therefore train on a window $T<\tau_*$
(equivalently up to a target $\xi_{\max}$), where
$\theta\le\theta_{\max}<\infty$
and the hypotheses hold.

\subsection{\label{sec:pinn-verify}Verification against Runge--Kutta}

We take as ground truth an adaptive DOP853 (Dormand--Prince) integrator with
$\text{rtol}=\text{atol}=10^{-12}$. In the pull-in case
($\alpha=0.2$, $\beta=0.1$, $\zeta=0.5$) it halts cleanly at
$\xi\approx0.999996$ (minimum gap $3.9\times10^{-6}$, $\tau_*\approx2.6226$)
without overshoot or non-finite values: an error-controlled solver refines its
step where the field steepens and does not blow up.

Against this reference the PINN attains supremum-norm ($E_\infty$) displacement
errors of
$1.05\times10^{-3}$ (stable), $1.95\times10^{-3}$ (growing), and
$2.83\times10^{-3}$ (pull-in, rapidity coordinate). Here $E_\infty$ and $E_2$
denote the sup-norm and root-mean-square errors against the reference, reserving
$\mathcal L$ for the training loss. Splitting the pull-in
trajectory, the regular part ($\xi<0.5$) reaches $E_\infty\approx
8.5\times10^{-4}$ while the near-pole part ($\xi\ge0.5$) reaches $2.8\times
10^{-3}$. The velocity is the weak spot, and the rapidity chart is why: with
$1-\xi\propto(\tau_*-\tau)^{2/5}$ the chart velocity $\dot\theta$ reaches $7.4$
on this window (Corollary~\ref{cor:rapidity}), so a fixed relative error in
$\theta$ is amplified in $\dot\xi$. Its error grows to $\approx3.4\times
10^{-2}$ as collapse sharpens. Table~\ref{tab:pinn-verify} collects the
per-regime metrics, and Figs.~\ref{fig:trajectories} and~\ref{fig:training}
show the trajectories and the training diagnostics. Section~S4 of the
Supplemental Material~\cite{SupplMat} tabulates the step study behind these
statements.

Fixed-step classical fourth-order Runge--Kutta (RK4) is fourth-order accurate
away from the pole, but at every tested step $h\in\{0.05,\dots,0.001\}$ it
overshoots into the unphysical region $\xi\ge1$ near the movable pole, where
$(1-\xi)^{-4}$ changes sign and produces a spurious ``blow-up.'' The defensible
conclusion is thus representational rather than a blanket accuracy claim:
fixed-step RK4 steps across the movable singularity, whereas the adaptive
reference and the rapidity-regularized PINN both remain well posed.
Training a single-trajectory network costs $\approx1$--$4$~min on CPU, far
exceeding a single adaptive solve. The surrogate is advantageous when its
differentiability and mesh-free continuity are reused across many downstream
sensitivity, inverse, or control evaluations.

\begin{table}[t]
\caption{\label{tab:pinn-verify}%
PINN accuracy against the adaptive DOP853 reference, per regime. The pull-in
trajectory (rapidity coordinate) is split into its regular part ($\xi<0.5$) and
near-pole part ($\xi\ge0.5$). Errors are in the displacement $\xi$ unless marked
$\dot\xi$.}
\begin{ruledtabular}
\begin{tabular}{lccc}
Regime & $E_\infty(\xi)$ & $E_2(\xi)$ & $E_\infty(\dot\xi)$\\
\hline
Stable (raw)              & $1.05\times10^{-3}$ & $3.50\times10^{-4}$ &
$7.8\times10^{-4}$\\
Growing (raw)             & $1.95\times10^{-3}$ & $1.06\times10^{-3}$ &
$1.1\times10^{-3}$\\
Pull-in, regular          & $8.5\times10^{-4}$  & $3.2\times10^{-4}$  & ---\\
Pull-in, near pole        & $2.83\times10^{-3}$ & $1.49\times10^{-3}$ &
$3.4\times10^{-2}$\\
\end{tabular}
\end{ruledtabular}
\end{table}

%% file: sections/results.tex
\section{\label{sec:results}Results and Discussion}

All dimensionless quantities refer to Eq.~\eqref{eq:nd} and the control
parameters $(\alpha,\beta,\zeta)$ of Table~\ref{tab:params}.

\subsection{Time trajectories and the three dynamical regimes}

Figure~\ref{fig:trajectories} shows the displacement $\xi(\tau)$ released from
rest for three operating points that span the qualitative behavior of
Eq.~\eqref{eq:nd}. A stable point ($\alpha=0.05$, $\beta=10^{-4}$,
$\zeta=0.6$) executes a damped approach and settles onto the stable node at
$\xi\approx0.056$, overshooting before the viscous term removes the excess
energy. An intermediate point ($\alpha=0.0724$, $\beta=0.0407$,
$\zeta=1.0$) climbs monotonically to a larger equilibrium near $\xi\approx0.29$
without collapsing: the bias sits below the fold, but the Casimir pre-load
pushes the fixed point well past the small-displacement range. A supercritical
point ($\alpha=0.20$, $\beta=0.10$, $\zeta=0.5$) lies above the fold and
collapses at a finite time. The collapse is abrupt rather than gradual: the
last stretch $\xi:0.74\to1$ occupies
only the final $\sim\!2\%$ of the trajectory. The movable pole fixes the rate,
$1-\xi=(25\beta/6)^{1/5}(\tau_*-\tau)^{2/5}$ to leading order
(Theorem~\ref{thm:exponent}), so the velocity diverges as
$(\tau_*-\tau)^{-3/5}$. Against the adaptive DOP853 reference the
physics-informed neural network (PINN) reproduces all three trajectories to
maximum pointwise errors below $3\times10^{-3}$, the pull-in case in the
rapidity coordinate of Eq.~\eqref{eq:rapidity}; the per-regime metrics are
collected in Table~\ref{tab:pinn-verify}.

\subsection{Pull-in phase diagram}

Figure~\ref{fig:phase_diagram} maps the $(\alpha,\beta)$ plane at fixed damping
$\zeta=0.1$. The color scale gives the dynamic pull-in time
$\tau_{\mathrm{PI}}(\alpha,\beta)$ obtained by integrating Eq.~\eqref{eq:nd}
from rest until the gap closes; the solid curve is the analytic static fold of
Eq.~\eqref{eq:fold}. The grid, the integrator settings, the pull-in criterion,
and the summary statistics quoted below are collected in Sec.~S3 of the
Supplemental Material~\cite{SupplMat}. The dynamic
boundary separates initial conditions that pull in from those that settle,
and it sits inside the static fold by a mean gap $\Delta\alpha\approx
0.0117$ in the electrostatic strength (median $0.0121$, maximum $0.0141$), so a
device released from rest collapses at a bias below the quasi-static threshold.
The excess is kinetic: a mass released above its final equilibrium overshoots
and crosses the barrier that a slow voltage ramp would not reach. Raising the
damping suppresses the overshoot, monotonically in
$\zeta$ (Corollary~\ref{cor:mono}; for $\beta=0$ see also
Ref.~\cite{Flores2017}), and above a finite damping ratio the two boundaries
coincide exactly, because the orbit released from rest then has no speed left at
the fold. That the two never coincide, asserted in
Refs.~\cite{Flores2017,CassaniMiyasita2024}, does not hold; the point is
settled in Appendix~
ef{sec:app-dyn}. Theorem~\ref{thm:barrier} proves this for $\zeta\ge2^{-1/4}$ at every
$\beta$; the threshold itself is lower, $\zeta_c=0.3959$ at $\beta=0$, and the
band closes quadratically in $\zeta_c-\zeta$. The offset measured at
$\zeta=0.7$, $\Delta\alpha=-6.6\times10^{-5}$, is one fifteenth of a grid cell
and of the opposite sign, and is therefore sweep resolution about an exact
zero. Of the sampled plane, a fraction $0.73$
pulls in dynamically against $0.68$ statically, the difference being the
overshoot band ($5.3\%$ of cells). The fold endpoints reproduce the analytic
values $\alpha_c=4/27\approx0.148$ at $\beta=0$ and $\beta^{*}=256/3125\approx
0.082$ at $\alpha=0$ to within one grid cell. The pull-in time is largest
immediately past threshold, up to $\tau_{\mathrm{PI}}\approx15.8$ against a
floor of $2.2$ deep in the collapse region, but the mechanism at this damping is
not the critical slowing down of a saddle-node. At $\zeta=0.1$ the boundary lies
strictly inside the fold, the equilibrium that the orbit passes is a hyperbolic
saddle, and the transit time grows as
$-\lambda_{+}^{-1}\ln(\alpha-\alpha_{\mathrm{dyn}})$ rather than as
$(\alpha-\alpha_c)^{-1/2}$ (Proposition~\ref{prop:slowing}). Fitting the cells
nearest the boundary returns a logarithmic law in every sampled row, with
prefactors $-1.44$ to $-1.26$ against the predicted $-\lambda_{+}^{-1}=-1.362$
at $\beta=0$. The inverse-square-root law is recovered above
$\zeta\approx0.4$, where the two boundaries merge. This slow, force-limited
pull-in
just above threshold is the dynamical counterpart of the softening reported for
electrostatically and Casimir-loaded
microstructures~\cite{Batra2007,Koochi2010,Zhang2014}.

A parametric PINN trained as a pure-residual equilibrium surrogate reproduces
the fold locus of Eq.~\eqref{eq:fold} to a mean deviation
$|\Delta\alpha|\approx8\times10^{-8}$. The fold is a double-root (tangency)
locus, where the pull-in indicator
is second-order insensitive to the field, so boundary agreement flatters the
surrogate. The fair measure of the learned equilibrium field is its
root-mean-square error in the pull-in gap, $\mathrm{RMSE}(u_*)\approx3\times
10^{-3}$ (the pointwise maximum reaches $\approx0.13$, confined to the
degenerate $\alpha,\beta\to0$ corner where the equilibrium minimizer is
ill-conditioned). Both numbers describe the same network, and we quote them
together to avoid overstating the accuracy.

\subsection{Thermal Lifshitz shift: a classical-limit bound}

Figure~\ref{fig:lifshitz} bounds how temperature displaces the fold through the
classical-limit Lifshitz correction of
Eqs.~\eqref{eq:lifshitz}--\eqref{eq:lifshitz-split}. The curves are a
high-temperature upper bound, not the physical shift: they evaluate
Eq.~\eqref{eq:lifshitz} at a $d=100\,\mathrm{nm}$ gap far below the thermal
wavelength $\lambda_T=7.6\,\mu\mathrm{m}$ ($300\,\mathrm{K}$), where the
classical term is not quantitatively valid. Within this bound the boundary
contracts monotonically as $T$ rises, entirely on the Casimir side: the
$\beta=0$ endpoint is temperature independent, while the Casimir-only
intercept moves from $\beta^{*}=0.0819$ ($0\,\mathrm{K}$) to $0.0799$
($300\,\mathrm{K}$). The bound is one-sided and destabilizing, since heating
enhances the effective Casimir attraction through the
$\beta\kappa T/(1-\xi)^{3}$ term of Eq.~\eqref{eq:lifshitz-split}. At a
representative $\beta=0.03$ it caps the
pull-in-voltage reduction (with $V_{\mathrm{PI}}\propto\sqrt{\alpha_c}$) at
$\Delta V_{\mathrm{PI}}/V_{\mathrm{PI}}\approx-0.24\%$ ($100\,\mathrm{K}$) and
$-0.71\%$ ($300\,\mathrm{K}$). This bound is $\beta$-dependent, ranging across
the
individual devices of Table~\ref{tab:devices} from $-0.003\%$ for the
electrostatic-dominated device~C to $-1.07\%$ for the most Casimir-loaded
device~B. For the sub-100-nm devices studied here, however,
$d\ll\lambda_T$ and the true correction to the ideal-metal Casimir force is
higher-order in $d/\lambda_T$ and
sub-percent~\cite{Klimchitskaya2009,Bordag2009,Bimonte2014}. It displaces the
actuation threshold measurably only
as the gap approaches $\lambda_T$ (micron scale) or at elevated
temperature~\cite{Lifshitz1956,Bezerra2016,Bimonte2014,Klimchitskaya2022}.

\subsection{Numerical verification and the movable pole}

Figure~\ref{fig:training} collects the training diagnostics: the composite-loss
histories converge under Adam warm-up and L-BFGS refinement to residuals of
order $10^{-6}$--$10^{-7}$, and the step study contrasts the integrators at the
pole. At every tested step $h\in\{0.05,\dots,0.001\}$ the fixed-step
fourth-order Runge--Kutta (RK4) scheme steps across the movable pole into the
unphysical region $\xi\ge1$, while the adaptive DOP853 reference and the
rapidity-regularized PINN, which keeps $\xi<1$ by construction on the
transformed, pole-free domain of Eq.~\eqref{eq:thetaode}, both stay well posed.
The point is representational rather than a blanket accuracy claim
(Sec.~\ref{sec:pinn-verify}).

\subsection{\label{sec:applied}Implications for device design}

For gold-coated NEMS capacitive actuators and Casimir oscillators the fold
locus supplies predictive pull-in-voltage estimates through
$V_{\mathrm{PI}}=\sqrt{2\alpha_c kd^{3}/
\varepsilon_0 A}$: the representative parameters of
Table~\ref{tab:devices} give $V_{\mathrm{PI}}\approx0.33\,\mathrm{V}$ for the
100-nm and 50-nm gaps and $\approx3.7\,\mathrm{V}$ for the stiffer 200-nm
device, with the Casimir pre-load cutting the threshold below the classical
$\beta=0$ estimate at the smaller gaps. Ignoring it over-predicts the stable
window and invites stiction~\cite{Chan2001,Zhang2014}. Because
$\beta\propto A/(kd^{5})$ and no static equilibrium survives above the Casimir
ceiling $\beta^{*}$, shrinking the gap collapses the available voltage window
as $d^{-5}$ and fixes the smallest gap that a given stiffness and area can hold
open against the quantum vacuum---a constraint that stiffness or area
engineering must respect. The thermal shift of Sec.~\ref{sec:results}\,C
is a classical-limit upper bound, and for the sub-100-nm devices considered
here ($d\ll\lambda_T$) the realistic derating is sub-percent: derating and the
associated on-chip thermometry enter the design budget only as the gap
approaches $\lambda_T$ or at elevated temperature. The critical slowing down
near the fold makes the threshold region a soft, high-responsivity operating
point for near-threshold force, mass, and voltage
sensing~\cite{Javor2021,Xu2022,Elsaka2024}.

The differentiable surrogate is itself a design tool.
Figure~\ref{fig:inverse} shows two uses that a direct integrator does not
provide. First, because
the parametric fold network is smooth in its inputs, automatic differentiation
returns exact design sensitivities at no extra cost. The boundary slope
$d\alpha_c/d\beta$ obtained by differentiating through the network matches the
closed-form value $-1/u^{2}$ to a relative error of $\sim\!10^{-6}$, and the
same
chain rule delivers the pull-in-voltage elasticities
$\partial\ln V_{\mathrm{PI}}/\partial\ln x$ with respect to gap, area, and
stiffness [$+2.79$, $-0.76$, $+0.76$ for device~A of Table~\ref{tab:devices};
Fig.~\ref{fig:inverse}(b)], again validated against the analytic pipeline to
$\sim\!10^{-6}$. At device~B, which carries twice the Casimir loading, the same
elasticities are $+6.04$, $-1.41$, and $+1.41$, so the surrogate tracks a
quantity that more than doubles across the tabulated devices. Second, the surrogate inverts a device specification by
gradient
descent: fixing area and stiffness and targeting
$V_{\mathrm{PI}}=0.30\,\mathrm{V}$,
Adam on the gap through the differentiable network reaches its tolerance in
$201$ steps [Fig.~\ref{fig:inverse}(a)] at $d=97.04\,\mathrm{nm}$, reproducing the
analytic root to femtometer accuracy. Each inverse solve here takes
$1.4\,\mathrm{s}$,
so the advantage over a one-dimensional root find is not raw latency; it is that
the same trained, differentiable object supplies boundary evaluations, exact
sensitivities, and backpropagated inverse designs within one framework. On this
problem the fold has a closed form, and we use it as a ground-truth oracle: it
is
the analytic root that every sensitivity and inverse solve above is validated
against. The construction itself does not require that oracle. Applied to the
from-rest dynamic pull-in boundary $\alpha_{\mathrm{dyn}}(\beta)$ at
$\zeta=0.1$,
a threshold defined only implicitly by the integrated equation of motion once
the damping is finite, a small differentiable surrogate fitted to $33$
RK4-located boundary points (fit RMSE $1.5\times10^{-4}$ in $\alpha$) returns
the
design sensitivity $d\alpha_{\mathrm{dyn}}/d\beta$ by automatic differentiation
as
$-2.06$, $-1.80$, and $-1.64$ at $\beta=0.01$, $0.03$, and $0.05$, each matching
a
central finite difference of the independently recomputed boundary to within
$0.51\%$ (Fig.~\ref{fig:dynbound}); the largest of the three deviations is
$0.5006\%$, at $\beta=0.05$. At zero damping the
threshold is not implicit: McLellan \textit{et al.} solve the same double-root
condition in closed form and prove that the resulting curve is the unique one
separating oscillation from finite-time touchdown~\cite{McLellan2016}, and its
$\beta\to0$ limit is the classical undamped value $\alpha=1/8$ at
$\xi=1/2$~\cite{LeusElata2008}. That curve and the fold of Eq.~\eqref{eq:fold}
bracket the from-rest threshold at every damping, since dissipation can only
raise it and no equilibrium survives above the fold.

One caveat governs how these dynamic results transfer to the devices of
Table~\ref{tab:devices}. Those sit at $\zeta=5\times10^{-5}$ to
$5\times10^{-4}$, set by quality factors of $10^{3}$ to $10^{4}$, whereas the
phase diagram and the closure of the band are computed at $\zeta=0.1$ to $1$;
a damping ratio of $0.4$ would mean $Q\approx1.25$, a squeeze-film-limited
device rather than a resonant one. At the tabulated quality factors the
from-rest threshold sits essentially on the undamped curve, so the operating
bound to use is the lower edge of the bracket, and the kinetic-overshoot margin
to budget is its full width, $4/27-1/8=0.023$ at $\beta=0$, rather than the
$0.0117$ found at $\zeta=0.1$.

\paragraph*{Limitations.}
Several idealizations bound the scope of these estimates. (i) The dynamics is a
lumped single-degree-of-freedom reduction; a
distributed clamped cantilever carries a spectrum of modes, and the pole
structure survives only per mode. (ii) The force model is parallel-plate
(Decca-type); a real sphere--plate or roughened geometry carries a different
force exponent, captured only approximately by the proximity-force
approximation. (iii) The pull-in boundary is computed from the quasi-static fold
and its from-rest dynamic counterpart, not from a full driven-and-damped
response under a specific actuation waveform. (iv) The Casimir load is the
ideal-conductor form; finite conductivity, surface roughness, and the
low-temperature thermal regime are neglected, so $\beta$, $V_{\mathrm{PI}}$, and
the quoted thermal shifts are upper bounds (Secs.~\ref{sec:model} and
\ref{sec:results}\,C). (v) The PINN velocity remains the weakest link near
collapse, with $E_\infty(\dot\xi)\approx3.4\times10^{-2}$. Quantitative
benchmarking against a specific fabricated device, and relaxation of these
idealizations, are left to future work.

%% file: sections/tab_devices.tex
\begin{table*}
\caption{\label{tab:devices}%
Representative gold-coated NEMS capacitive actuators and their predicted
electrostatic pull-in behavior. The model is a lumped single-degree-of-freedom
oscillator with parallel-plate electrostatic and ideal-conductor Casimir loads
across the quoted rest gap~$d$ (Sec.~\ref{sec:model}); geometries are
order-of-magnitude engineering values spanning published Casimir MEMS/NEMS
oscillators. The dimensionless Casimir strength is
$\beta=\pi^{2}\hbar c A/(240\,k d^{5})$; the pull-in voltage follows from the
static fold $\alpha_c(u)=u^{2}(4-5u)/2$, $\beta_c(u)=u^{4}(3u-2)/2$ evaluated at
each device's $\beta$, via $V_\mathrm{PI}=\sqrt{2\alpha_c k d^{3}/\varepsilon_0 A}$.
$V_\mathrm{PI}(0\,\mathrm{K})$ is the zero-temperature threshold;
$V_\mathrm{PI}(300\,\mathrm{K})$ applies the classical (high-temperature)
Lifshitz correction to the Casimir term at each device's own gap, and
$\Delta V_\mathrm{PI}=[V_\mathrm{PI}(300\,\mathrm{K})/V_\mathrm{PI}(0\,\mathrm{K})-1]$.
The $V_\mathrm{PI}(300\,\mathrm{K})$ and $\Delta V_\mathrm{PI}$ columns are
classical-limit \emph{upper bounds}: at the sub-100-nm gaps of devices A, B, and
D the gap lies far below the thermal wavelength ($d\ll\lambda_T=7.6\,\mu$m at
$300$~K), so the realistic thermal shift is sub-percent. All devices lie below
the Casimir ceiling $\beta^{*}=256/3125\approx0.0819$, so a finite pull-in
voltage is defined.}
\begin{ruledtabular}
\begin{tabular}{lcccccccc}
Device & $A$ ($\mu$m$^{2}$) & $d$ (nm) & $k$ (N/m) & $Q$ & $\beta$
 & $V_\mathrm{PI}(0\,\mathrm{K})$ (V) & $V_\mathrm{PI}(300\,\mathrm{K})$ (V)
 & $\Delta V_\mathrm{PI}$ (\%) \\
\colrule
A\footnote{MEMS gold plate.}
  & $100$ & $100$ & $0.5$ & $10^{4}$ & $2.60\times10^{-2}$ & $0.328$ & $0.326$ & $-0.57$ \\
B\footnote{Sub-100-nm NEMS.}
  & $25$  & $50$  & $2.0$ & $10^{3}$ & $5.20\times10^{-2}$ & $0.332$ & $0.328$ & $-1.07$ \\
C\footnote{Stiff, wide-gap NEMS.}
  & $100$ & $200$ & $5.0$ & $5\times10^{3}$ & $8.13\times10^{-5}$ & $3.656$ & $3.656$ & $-0.003$ \\
D\footnote{Intermediate-gap NEMS.}
  & $50$  & $75$  & $1.0$ & $2\times10^{3}$ & $2.74\times10^{-2}$ & $0.420$ & $0.418$ & $-0.46$ \\
\end{tabular}
\end{ruledtabular}
\end{table*}

%% file: sections/conclusions.tex
\section{\label{sec:conclusions}Conclusions}

We reduced a gold-coated nanoelectromechanical (NEMS) actuator under competing
electrostatic and Casimir loads to the single dimensionless equation of motion
\eqref{eq:nd}, governed by the electrostatic strength $\alpha$, the Casimir
strength $\beta$, and the damping ratio $\zeta$. In the $(\alpha,\beta)$ plane
the static saddle-node fold \eqref{eq:fold} bounds the stable operating range,
running from the classical electrostatic pull-in $\alpha_c=4/27$ to the Casimir
ceiling $\beta^{*}=256/3125$, above which no equilibrium survives at any bias.
Integrating from rest places the dynamic pull-in boundary inside this fold by a
mean gap $\Delta\alpha=0.0117$ in the underdamped regime, a kinetic
overshoot that shrinks monotonically with damping and vanishes at a finite
damping ratio, proved for $\zeta\ge2^{-1/4}$ and located numerically at
$\zeta_c=0.3959$, above which the two boundaries coincide exactly. Below it the pull-in time grows logarithmically as
the boundary is approached, the orbit escaping a hyperbolic saddle; above it
the two boundaries merge and the saddle-node inverse-square-root law takes
over. In its
classical high-temperature limit the Lifshitz correction contracts the boundary
monotonically, leaving the pure-electrostatic endpoint fixed while lowering the
actuation voltage by an upper bound of $0.24\%$ at $100\,\mathrm{K}$ and
$0.71\%$ at $300\,\mathrm{K}$. For the sub-100-nm gaps studied here the gap lies
far below the thermal wavelength, so the realistic thermal shift is sub-percent
and reaches the percent level only as the gap approaches $\lambda_T$ or at
elevated temperature. A physics-informed neural network (PINN), regularized in a
rapidity coordinate that maps the movable pole to infinity, reproduces the
transient trajectories to $\sim\!10^{-3}$ and the fold locus to a learned-field
accuracy of $\sim\!3\times10^{-3}$, remaining well posed where fixed-step
Runge--Kutta (RK4) steps into unphysical states.

Two features carry the applied value of this work. First, the
rapidity-regularized surrogate stays well posed on a pole-free domain in the
near-collapse regime, a representational advantage that does not depend on
whether a closed form exists. Second, the phase diagram, its thermal bound,
and the trained solutions close into a differentiable design pipeline:
predictive pull-in-voltage estimates, a
hard geometric floor on gap miniaturization set by the Casimir ceiling
$\beta^{*}$, autodiff design sensitivities, and gradient-based inverse design in
one framework, with a temperature-derating bound in the classical limit. The
underlying pull-in physics is well characterized. The contribution here is the
design and methodology layer built over it. Natural extensions are an enabling
co-design with no closed-form fold, a direct comparison with fabricated Casimir
oscillators, and a move from the lumped model to the distributed modes of a
clamped cantilever, where the pole structure survives per mode. Because
the same movable-pole regularization governs a charged particle
approaching the light cone in a laser field~\cite{Dodin2003},
the framework developed here connects to that relativistic problem and to the
broader class of singularity-aware surrogates for stiff, nonlinear dynamics.

%% file: sections/appendix_lifshitz.tex
\appendix

\section{\label{sec:app-lifshitz}Classical-limit thermal coefficient of the
Casimir pressure}

The thermal factor in Eq.~\eqref{eq:lifshitz} is fixed by the ratio of the
classical (high-temperature) Casimir pressure to its zero-temperature value.

\subsection{\label{sec:app-sum}Matsubara sum for ideal mirrors}

For two plane mirrors separated by a vacuum gap \(a\), the Lifshitz free energy
per unit area is a sum over the Matsubara frequencies
\(\xi_n=2\pi n k_BT/\hbar\)~\cite{Lifshitz1956,Bordag2009},
\begin{equation}
\label{eq:app-lifshitz-sum}
\begin{split}
\mathcal{F}(a,T)={}&\frac{k_BT}{2\pi}\sideset{}{'}\sum_{n=0}^{\infty}
 \int_{0}^{\infty}\!k\,dk\\
&\times\sum_{\sigma=\mathrm{TE},\mathrm{TM}}
 \ln\!\left[1-r_{\sigma}^{2}\,e^{-2aq_n}\right],
\end{split}
\end{equation}
with \(q_n=(k^{2}+\xi_n^{2}/c^{2})^{1/2}\), \(k\) the wave vector along the
plates, and the prime halving the \(n=0\) term. Ideal mirrors have
\(r_{\mathrm{TE}}^{2}=r_{\mathrm{TM}}^{2}=1\) at every frequency, so both
polarizations survive at \(n=0\) and the polarization sum contributes a factor
of two. Substituting \(q^{2}=k^{2}+\xi_n^{2}/c^{2}\), \(k\,dk=q\,dq\), and then
\(y=2aq\),
\begin{equation}
\label{eq:app-scaled}
\mathcal{F}(a,T)=\frac{k_BT}{4\pi a^{2}}\left[\tfrac12 G(0)
+\sum_{n=1}^{\infty}G(n\tau)\right],
\end{equation}
where
\begin{equation}
\label{eq:app-Gdef}
\begin{split}
G(z)&=\int_{z}^{\infty}\!y\,\ln\!\left(1-e^{-y}\right)dy ,\\
\tau&=\frac{2a\xi_1}{c}=\frac{4\pi a}{\lambda_T} ,
\end{split}
\end{equation}
and \(\lambda_T=\hbar c/(k_BT)\). The single dimensionless variable \(\tau\)
carries the whole temperature dependence.

\subsection{\label{sec:app-limits}Classical and zero-temperature limits}

For \(a\gg\lambda_T\) we have \(\tau\gg1\), and each \(n\ge1\)
term of
Eq.~\eqref{eq:app-scaled} is suppressed by \(e^{-n\tau}\). Only the \(n=0\)
term remains. With \(G(0)=\int_{0}^{\infty}y\ln(1-e^{-y})\,dy=-\zeta(3)\),
\begin{equation}
\label{eq:app-Fcl}
\begin{split}
\mathcal{F}_{\mathrm{cl}}(a,T)&=-\frac{\zeta(3)\,k_BT}{8\pi a^{2}} ,\\
P_{\mathrm{cl}}&=-\frac{\partial\mathcal{F}_{\mathrm{cl}}}{\partial a}
 =-\frac{\zeta(3)\,k_BT}{4\pi a^{3}} ,
\end{split}
\end{equation}
the negative sign denoting attraction. In the opposite limit \(\tau\to0\) the
sum in Eq.~\eqref{eq:app-scaled} becomes an integral,
\(\tau^{-1}\int_{0}^{\infty}G(z)\,dz\), and one integration by parts gives
\(\int_{0}^{\infty}G(z)\,dz=\int_{0}^{\infty}y^{2}\ln(1-e^{-y})\,dy
=-2\zeta(4)=-\pi^{4}/45\). Using \(\tau=4\pi a k_BT/(\hbar c)\), the
temperature cancels and
\begin{equation}
\label{eq:app-F0}
\mathcal{F}_{0}(a)=-\frac{\pi^{2}\hbar c}{720\,a^{3}} ,\qquad
P_{0}=-\frac{\partial\mathcal{F}_{0}}{\partial a}
=-\frac{\pi^{2}\hbar c}{240\,a^{4}} ,
\end{equation}
which is the Casimir pressure used in Eq.~\eqref{eq:eom}. The two numerical
coefficients differ by the differentiation: \(720\) belongs to the free
energy and \(240\) to the pressure.

Dividing Eq.~\eqref{eq:app-Fcl} by Eq.~\eqref{eq:app-F0},
\begin{equation}
\label{eq:app-cL}
\frac{P_{\mathrm{cl}}}{P_{0}}=\frac{240\,\zeta(3)}{4\pi^{3}}\,
\frac{k_BT\,a}{\hbar c}
=\frac{60\,\zeta(3)}{\pi^{3}}\,\frac{k_BT\,a}{\hbar c} ,
\end{equation}
so that \(c_L=60\zeta(3)/\pi^{3}=2.326\).
Equation~\eqref{eq:lifshitz} uses \(c_L\) in the interpolating form
\(P=P_{0}\,[1+c_L k_BT a/(\hbar c)]\), whose linear term reproduces
\(P_{\mathrm{cl}}\) exactly once that term dominates. The bracket reaches
\(2\) at \(a=\lambda_T/c_L=0.43\,\lambda_T\), that is at
\(3.3\,\mu\mathrm{m}\) for \(T=300\,\mathrm{K}\).

\subsection{\label{sec:app-lowT}Gaps below the thermal wavelength}

For \(a\ll\lambda_T\) the linear term of Eq.~\eqref{eq:app-cL} is not the
leading thermal correction. Extending \(G\) to an even function of \(z\) and
applying the Poisson summation formula to Eq.~\eqref{eq:app-scaled} gives
\begin{equation}
\label{eq:app-poisson}
\begin{split}
\tfrac12 G(0)+\sum_{n=1}^{\infty}G(n\tau)
={}&-\frac{\pi^{4}}{45\,\tau}-\frac{\zeta(3)\,\tau^{2}}{8\pi^{2}}\\
&+\frac{\tau^{3}}{720}+O\!\left(e^{-4\pi^{2}/\tau}\right),
\end{split}
\end{equation}
with no term of order \(\tau\). Inserted into Eq.~\eqref{eq:app-scaled},
\begin{equation}
\label{eq:app-lowT-F}
\mathcal{F}(a,T)=-\frac{\pi^{2}\hbar c}{720\,a^{3}}
-\frac{\zeta(3)(k_BT)^{3}}{2\pi(\hbar c)^{2}}
+\frac{\pi^{2}a\,(k_BT)^{4}}{45\,(\hbar c)^{3}} .
\end{equation}
The \((k_BT)^{3}\) term does not depend on \(a\) and drops out of the pressure,
so the first three orders in \(k_BTa/(\hbar c)\) all vanish and
\begin{equation}
\label{eq:app-lowT-P}
P(a,T)=-\frac{\pi^{2}\hbar c}{240\,a^{4}}
\left[1+\frac{16}{3}\left(\frac{a}{\lambda_T}\right)^{4}\right] .
\end{equation}
The thermal term equals the blackbody radiation pressure
\(\pi^{2}(k_BT)^{4}/[45(\hbar c)^{3}]\). At \(a=100\,\mathrm{nm}\) the relative
correction is \(1.9\times10^{-9}\) at \(100\,\mathrm{K}\) and
\(1.6\times10^{-7}\) at \(300\,\mathrm{K}\), against
\(c_L k_BTa/(\hbar c)=1.0\times10^{-2}\) and \(3.0\times10^{-2}\) from the
linear form. The linear form therefore overstates the ideal-mirror thermal
correction by five to seven orders of magnitude at this gap, the smaller
factor at $300\,\mathrm{K}$. For real gold the
low-temperature thermal correction is larger than
Eq.~\eqref{eq:app-lowT-P} because finite conductivity spoils the ideal
reflectivity, but it stays below the percent level at
sub-100-nm gaps~\cite{Klimchitskaya2009,Bordag2009,Bimonte2014}.

\subsection{\label{sec:app-numbers}Numbers used in the main text}

With \(c_L=60\zeta(3)/\pi^{3}\) and \(k_B/(\hbar c)=436.70\,
\mathrm{m^{-1}\,K^{-1}}\), the coefficient of
Eq.~\eqref{eq:lifshitz-split} at \(d=100\,\mathrm{nm}\) is
\begin{equation}
\label{eq:app-kappa}
\kappa(d)=c_L\,\frac{k_B d}{\hbar c}
=1.016\times10^{-4}\,\mathrm{K^{-1}} .
\end{equation}
Evaluated at the pull-in gap \(1-\xi_{\mathrm{PI}}\approx0.75\), the bracket of
Eq.~\eqref{eq:lifshitz} enhances the Casimir load by \(0.76\%\) at
\(100\,\mathrm{K}\) and \(2.29\%\) at \(300\,\mathrm{K}\). Solving the fold
conditions of Sec.~\ref{sec:model-nd} with the extra \((1-\xi)^{-3}\) term
moves the Casimir-axis intercept from \(\beta^{*}=256/3125=0.0819\) to
\(0.0813\) and \(0.0799\), and lowers the pull-in voltage at \(\beta=0.03\) by
\(0.24\%\) and \(0.71\%\). Every entry is one-sided and destabilizing, and
every entry is a classical-limit upper bound evaluated far outside the range in
which Eq.~\eqref{eq:lifshitz} holds.

%% file: sections/appendix_dynamics.tex
\section{\label{sec:app-dyn}Rigorous statements for the pull-in dynamics}

This appendix collects proofs for statements the main text uses
without derivation, together with the numerical checks that accompany
them.

Throughout, \(u=1-\xi\in(0,1]\) is the gap, an overdot is
\(d/d\tau\), and
\begin{equation}
\label{eq:app-g}
g(\xi)=\frac{\alpha}{u^{2}}+\frac{\beta}{u^{4}}-\xi ,\qquad
\phi(u)=(1-u)\,u^{4}-\alpha\,u^{2} ,
\end{equation}
so that Eq.~\eqref{eq:nd} reads \(\ddot\xi+2\zeta\dot\xi=g(\xi)\).
The potential of Eq.~\eqref{eq:nd} is
\begin{equation}
\label{eq:app-V}
V(\xi)=\frac{\xi^{2}}{2}-\frac{\alpha}{1-\xi}
        -\frac{\beta}{3(1-\xi)^{3}} ,\qquad V'=-g ,
\end{equation}
and \(E=\dot\xi^{2}/2+V(\xi)\) obeys \(\dot E=-2\zeta\dot\xi^{2}\).
We fix \(\alpha\ge0\), \(\beta\ge0\), \((\alpha,\beta)\neq(0,0)\), and
\(\zeta\ge0\).

The closed-form fold locus of Eq.~\eqref{eq:fold} is not new: its
Casimir endpoint \(\beta^{*}=256/3125\) at \(\xi_{\mathrm{PI}}=1/5\)
appears in Ref.~\cite{Serry1995}, its slope at \(\beta=0\) in
Ref.~\cite{BuksRoukes2001}, and the full parametric pair in Eq.~(13)
of Ref.~\cite{Palasantzas2005} and in
Refs.~\cite{Batra2007,LinZhao2007}, the electrostatic endpoint
\(\xi_{\mathrm{PI}}=1/3\) going back to Ref.~\cite{Nathanson1967}.
The undamped from-rest threshold used in
Sec.~\ref{sec:app-dyn-bracket} is Eq.~(4.14) of
Ref.~\cite{McLellan2016}, and the existence, continuity, and strict
monotonicity in the damping of that threshold at \(\beta=0\) are
proved in Ref.~\cite{Flores2017}. What is added here is the level of
rigor, together with a correction to
Ref.~\cite{Flores2017} recorded in Sec.~\ref{sec:app-dyn-bracket}:
an exact count of equilibria, the nondegeneracy conditions that make
the fold a generic saddle-node, the bracket itself, the collapse
exponent, the divergence law of the pull-in time, and the a
posteriori bound.

\subsection{\label{sec:app-dyn-static}Equilibria and the fold}

\begin{lemma}[Reduction]\label{lem:reduction}
A point \(\xi\in[0,1)\) is an equilibrium of Eq.~\eqref{eq:nd} if and
only if \(u=1-\xi\in(0,1)\) and \(\phi(u)=\beta\). At such a point
\(g'(\xi)=u^{-4}\phi'(u)\), so the equilibrium of the planar system
\(\dot\xi=v\), \(\dot v=-2\zeta v+g(\xi)\) is a hyperbolic saddle when
\(\phi'(u)>0\) and, for \(\zeta>0\), an asymptotically stable node or
focus when \(\phi'(u)<0\).
\end{lemma}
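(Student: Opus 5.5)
Three things need checking: an algebraic equivalence, a derivative identity, and the standard eigenvalue classification for a damped one-degree-of-freedom oscillator.

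\emph{Equilibrium condition.} An equilibrium of \(\dot\xi=v\), \(\dot v=-2\zeta v+g(\xi)\) has \(v=0\) and \(g(\xi)=0\). With \(u=1-\xi\) and \(\xi=1-u\), multiplying \(g=0\) by \(u^{4}>0\) gives
\[
\alpha u^{2}+\beta-(1-u)u^{4}=0,
\]
that is, \(\phi(u)=\beta\). The step is reversible because \(u^{4}\neq0\) on \((0,1]\).

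\emph{Excluding \(u=1\).} The point \(\xi=0\) would require \(\alpha+\beta=0\). This is ruled out by \((\alpha,\beta)\neq(0,0)\) together with \(\alpha,\beta\ge0\). Hence every equilibrium has \(u\in(0,1)\), which proves the first assertion.

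\emph{The derivative identity.} Write \(g(\xi)=u^{-4}\bigl(\beta-\phi(u)\bigr)\) and differentiate in \(\xi\) using \(du/d\xi=-1\):
\[
g'(\xi)=4u^{-5}\bigl(\beta-\phi(u)\bigr)+u^{-4}\phi'(u).
\]
At an equilibrium the first term vanishes because \(\phi(u)=\beta\), leaving \(g'=u^{-4}\phi'(u)\). As a cross-check, differentiating \(g\) directly gives \(2\alpha u^{-3}+4\beta u^{-5}-1\). Substituting \(\beta=(1-u)u^{4}-\alpha u^{2}\) reduces this to \(4u^{-1}-5-2\alpha u^{-3}\). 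This equals \(u^{-4}\bigl(4u^{3}-5u^{4}-2\alpha u\bigr)=u^{-4}\phi'(u)\), as claimed.

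\emph{Classification.} The Jacobian at \((\xi,0)\) is
\[
\begin{pmatrix}0&1\\ g'&-2\zeta\end{pmatrix},
\]
with trace \(-2\zeta\) and determinant \(-g'\).

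If \(\phi'(u)>0\), then \(g'>0\) and the determinant is negative. The eigenvalues \(-\zeta\pm\sqrt{\zeta^{2}+g'}\) are real and of opposite sign, so the equilibrium is a hyperbolic saddle for every \(\zeta\ge0\).

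If \(\phi'(u)<0\) and \(\zeta>0\), the determinant is positive and the trace is negative. Both eigenvalues then have negative real part: a node when \(\zeta^{2}\ge -g'\) and a focus otherwise. Linearization (Hartman--Grobman) gives asymptotic stability.

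No step is a real obstacle. The only point needing care is the role of the hypothesis \((\alpha,\beta)\neq(0,0)\), which is exactly what excludes the endpoint \(u=1\).
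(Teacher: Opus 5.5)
Your proof is correct and follows the paper's route. As in the paper, you clear the factor $u^{4}$ to get $\phi(u)=\beta$, exclude $u=1$ via $(\alpha,\beta)\neq(0,0)$, differentiate $u^{4}g=\beta-\phi(u)$ and evaluate where $\phi=\beta$, and classify the equilibrium from the trace $-2\zeta$ and determinant $-g'(\xi)$ of the linearization; your direct cross-check of $g'$ and the node/focus split by the sign of $\zeta^{2}+g'$ are small additions the paper leaves implicit.
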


\noindent\textit{Proof.} Multiplying \(g(\xi)=0\) by \(u^{4}>0\) gives
\(\alpha u^{2}+\beta-(1-u)u^{4}=0\), that is \(\phi(u)=\beta\). The
value \(u=1\) would force \(\alpha+\beta=0\) and is excluded. From
Eq.~\eqref{eq:app-g}, \(u^{4}g=-[\phi(u)-\beta]\). Differentiating in
\(u\), using \(d/d\xi=-d/du\), and evaluating where \(\phi(u)=\beta\)
leaves \(g'(\xi)=u^{-4}\phi'(u)\). The linearization has
characteristic polynomial \(\lambda^{2}+2\zeta\lambda-g'(\xi)\), whose
roots have product \(-g'(\xi)\) and sum \(-2\zeta\). For \(g'>0\) the
roots are real of opposite sign; for \(g'<0\) and \(\zeta>0\) both
roots have negative real part.\hfill\(\square\)

\begin{theorem}[Classification and nondegeneracy]\label{thm:static}
Let \(\alpha_c\) and \(\beta_c\) be as in Eq.~\eqref{eq:fold}.
\begin{enumerate}
\item The map \(u\mapsto(\alpha_c(u),\beta_c(u))\) is a bijection of
\([2/3,4/5]\) onto the fold locus. On that interval
\(\alpha_c'(u)=u(8-15u)/2<0\) and
\(\beta_c'(u)=u^{3}(15u-8)/2>0\), so \(\beta_c\) maps \([2/3,4/5]\)
onto \([0,256/3125]\) bijectively and
\(\alpha_c\circ\beta_c^{-1}\) is a strictly decreasing bijection of
\([0,256/3125]\) onto \([0,4/27]\), with slope
\(d\alpha_c/d\beta=-1/u^{2}\).
\item Write \(\alpha_c(\beta)\) for that decreasing function. For
\(\beta>256/3125\), Eq.~\eqref{eq:nd} has no equilibrium at any
\(\alpha\ge0\). For \(0\le\beta\le256/3125\) the number of equilibria
in \([0,1)\) is \(2\) if \(\alpha<\alpha_c(\beta)\), \(1\) if
\(\alpha=\alpha_c(\beta)\), and \(0\) if \(\alpha>\alpha_c(\beta)\).
When there are two, the one with the smaller gap is a saddle, and
for \(\zeta>0\) the one with the larger gap is asymptotically stable.
\item At a fold point \(\xi_f=1-u\),
\begin{equation}
\label{eq:app-nondeg}
\begin{split}
\frac{\partial^{2}g}{\partial\xi^{2}}
 &=\frac{6\alpha_c}{u^{4}}+\frac{20\beta_c}{u^{6}}
 =\frac{15u-8}{u^{2}}\in\left[\tfrac92,\tfrac{25}{4}\right],\\
\frac{\partial g}{\partial\alpha}
 &=\frac{1}{u^{2}}\in\left[\tfrac{25}{16},\tfrac94\right].
\end{split}
\end{equation}
Both are bounded away from zero, so for \(\zeta>0\) the fold is a
generic saddle-node bifurcation unfolded transversally by \(\alpha\).
\end{enumerate}
\end{theorem}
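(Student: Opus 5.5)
The plan is to work entirely in the gap variable \(u=1-\xi\), using Lemma~\ref{lem:reduction}. Equilibria then correspond to roots of \(\phi(u)=\beta\) on \((0,1)\), with the saddle/stable type fixed by the sign of \(\phi'(u)\). I will treat the three parts in order, because part 2 rests on the monotonicity established in part 1.

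\emph{Part 1.} Differentiating Eq.~\eqref{eq:fold} directly gives
\[
\alpha_c'(u)=\tfrac12\,(8u-15u^{2}),\qquad
\beta_c'(u)=\tfrac12\,(15u^{4}-8u^{3}).
\]
On \([2/3,4/5]\) we have \(15u\ge 10>8\), so \(\alpha_c'<0\) and \(\beta_c'>0\). The endpoint values are
\[
\beta_c(2/3)=0,\quad \beta_c(4/5)=256/3125,\quad \alpha_c(2/3)=4/27,\quad \alpha_c(4/5)=0 .
\]
Strict monotonicity of \(\beta_c\) gives the bijections, and \(\alpha_c\circ\beta_c^{-1}\) is strictly decreasing. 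Its slope is
\[
\frac{d\alpha_c}{d\beta}=\frac{\alpha_c'}{\beta_c'}=\frac{u(8-15u)}{u^{3}(15u-8)}=-u^{-2}.
\]
Injectivity of the full map \(u\mapsto(\alpha_c,\beta_c)\) follows from injectivity of \(\beta_c\). Positivity of both force strengths forces \(u\in[2/3,4/5]\), so this parametrization covers the entire fold locus.

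\emph{Part 2.} This is the main step. For fixed \(\alpha\ge0\), study \(\phi(u)=u^4-u^5-\alpha u^2\) on \((0,1)\). Write \(\phi'(u)=u\,\psi(u)\) with
\[
\psi(u)=4u^2-5u^3-2\alpha .
\]
Then \(\psi'(u)=u(8-15u)\), so \(\psi\) increases on \((0,8/15)\) and decreases on \((8/15,1)\). Since \(\psi(0)=-2\alpha\le0\) and \(\psi(1)=-1-2\alpha<0\), \(\phi'\) has at most two zeros in \((0,1)\). Combined with \(\phi(0)=0\) and \(\phi(1)=-\alpha\le0\), this shows \(\phi\) is unimodal on the relevant range.

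\begin{itemize}
\item If \(\alpha>0\), \(\phi\) first decreases to a negative minimum, then rises to a unique positive maximum \(M(\alpha)\), then falls back to \(-\alpha\).
\item If \(\alpha=0\), \(\phi\) increases and then decreases.
\end{itemize}
Any level \(\beta\ge0\) therefore meets the graph only on the hump around the maximum. The level \(\beta=0\) needs separate care: there the roots are \(\phi=0\) with \(u>0\), and the small-\(u\) root with \(\phi\le 0\) lies on the decreasing branch near \(0\), where \(\phi<0\), so it does not count. I expect this corner to be the main place where the argument needs checking.

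Consequently the number of roots is 2, 1 or 0 according as \(\beta<M(\alpha)\), \(\beta=M(\alpha)\) or \(\beta>M(\alpha)\). The maximum point \(u_m\) satisfies \(\phi=\beta\) and \(\phi'=0\) simultaneously, which is exactly the system Eq.~\eqref{eq:fold-system}. Hence \(M(\alpha)=\beta_c(u_m)\) with \(\alpha=\alpha_c(u_m)\), so \(\beta<M(\alpha)\) is equivalent, via part 1 monotonicity, to \(\alpha<\alpha_c(\beta)\).

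For \(\beta>256/3125=\max_{\alpha}M(\alpha)\) no root exists. This uses that \(M\) is decreasing in \(\alpha\), since \(\partial\phi/\partial\alpha=-u^2<0\), so the maximum over \(\alpha\) is \(M(0)=\max_u (u^4-u^5)=256/3125\), attained at \(u=4/5\).

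To classify the two equilibria, note that the root with smaller \(u\) lies on the increasing branch, where \(\phi'>0\), so it is a saddle. The root with larger \(u\) lies on the decreasing branch, where \(\phi'<0\), so for \(\zeta>0\) it is stable by Lemma~\ref{lem:reduction}.

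\emph{Part 3.} Compute \(g_{\xi\xi}=6\alpha u^{-4}+20\beta u^{-6}\) and substitute Eq.~\eqref{eq:fold}. This gives
\[
3(4-5u)u^{-2}+10(3u-2)u^{-2}=(15u-8)u^{-2}.
\]
This expression is increasing on \([2/3,4/5]\), since its derivative is \((16-15u)u^{-3}>0\), with endpoint values \(9/2\) and \(25/4\). Similarly, \(\partial g/\partial\alpha=u^{-2}\) is decreasing, with endpoint values \(9/4\) and \(25/16\).

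The fold has a single zero eigenvalue, because the trace is \(-2\zeta\ne0\). Together with nonzero quadratic and transversality coefficients, this gives the standard Sotomayor conditions for a generic saddle-node.
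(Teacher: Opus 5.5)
Your decomposition is the paper's: the factorization $\phi'=u\psi$, the hump argument, matching the hump top to the fold system, and the sign of $\phi'$ for the saddle/stable assignment. Parts~1 and~3 are fine. Your comparison $\partial\phi/\partial\alpha=-u^{2}<0$ for the Casimir ceiling is, if anything, more direct than the paper's argument.

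The step that fails is in Part~2. It is not true that for every $\alpha>0$ the function $\phi$ rises to a unique \emph{positive} maximum.
\begin{itemize}
\item Since $\max\psi=\psi(8/15)=256/675-2\alpha$, for $\alpha\ge128/675$ you have $\psi\le0$ on $(0,1)$. Then $\phi$ decreases strictly from $\phi(0^{+})=0$, has no interior maximum, and $M(\alpha)$ is undefined.
\item For $4/27\le\alpha<128/675$ the hump exists, but its top is $\beta_c(u_m)$ with $u_m\in(8/15,2/3]$, which is $\le0$.
\end{itemize}
Your equivalence ``$\beta<M(\alpha)\iff\alpha<\alpha_c(\beta)$ via part 1 monotonicity'' uses $\beta_c$ only on $[2/3,4/5]$, so it covers only $\alpha\le4/27$. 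The count ``$0$ if $\alpha>\alpha_c(\beta)$'' is therefore unproved on the rest of the range. Taken literally, a positive $M$ at $\alpha=0.16$, $\beta=0$ would even produce two roots where there are none.

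Either of two repairs works. The paper treats $\alpha\ge128/675$ separately ($\phi<0$ on $(0,1)$, so no root). For $0\le\alpha<128/675$ it uses that $u_m(\alpha)\in(8/15,4/5]$ decreases in $\alpha$ while $\beta_c$ increases on all of $(8/15,1)$. Hence $M(\alpha)=\beta_c(u_m(\alpha))$ is strictly decreasing whether or not it is positive, which gives the equivalence directly. Alternatively, your own comparison does it in one line: for $\alpha>\alpha_c(\beta)$,
$\phi_\alpha(u)=\phi_{\alpha_c(\beta)}(u)-(\alpha-\alpha_c(\beta))u^{2}<\phi_{\alpha_c(\beta)}(u)\le\beta$ on $(0,1)$. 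The hump analysis is then needed only for $\alpha\le4/27$, where your argument is correct.

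Your ``$\beta=0$ corner'' remark aims at a non-issue. There is no small-$u$ root to discard, since $\phi<0$ on $(0,u_1]$ for $\alpha>0$, and that branch is empty for $\alpha=0$. The real exception is $(\alpha,\beta)=(0,0)$, where $\xi=0$ is the only equilibrium and the count $2$ is false. It is removed by the standing assumption $(\alpha,\beta)\neq(0,0)$, which Lemma~\ref{lem:reduction} uses to exclude $u=1$, and you should invoke it explicitly.
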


\noindent\textit{Proof.} Write \(\phi'(u)=u\,\psi(u)\) with
\(\psi(u)=4u^{2}-5u^{3}-2\alpha\). Since \(\psi'(u)=u(8-15u)\),
\(\psi\) rises on \((0,8/15)\) and falls on \((8/15,1)\), with
\(\psi(0)=-2\alpha\le0\), \(\psi(1)=-1-2\alpha<0\), and
\(\max\psi=\psi(8/15)=256/675-2\alpha\).

If \(\alpha\ge128/675\) then \(\psi\le0\) on \((0,1)\), so \(\phi\) is
nonincreasing there. With \(\phi(0^{+})=0\) this gives \(\phi<0\) on
\((0,1)\), and \(\phi(u)=\beta\ge0\) has no root. If
\(0<\alpha<128/675\), \(\psi\) has exactly two zeros
\(u_1<8/15<u_2\), and \(\phi\) falls on \((0,u_1)\), rises on
\((u_1,u_2)\), and falls on \((u_2,1)\). If \(\alpha=0\), the first
interval is empty and \(u_2=4/5\). For \(\alpha>0\) one has \(\phi<0\) on
\((0,u_1]\), hence \(\phi<\beta\) there; for \(\alpha=0\) that
interval is empty. In both cases \(\phi(1)=-\alpha\le0\),
with \(\phi(1)=\beta\) only in the excluded case
\(\alpha=\beta=0\). All roots of \(\phi=\beta\) therefore lie in
\((u_1,1)\), where \(\phi\) rises to \(M(\alpha)\equiv\phi(u_2)\) and
then falls, giving exactly two roots if \(\beta<M\), one if
\(\beta=M\), and none if \(\beta>M\). The two roots straddle
\(u_2\), so \(\phi'>0\) at the smaller and \(\phi'<0\) at the larger.
Lemma~\ref{lem:reduction} then assigns the saddle and the stable
equilibrium.

The maximum is attained where \(\psi(u_2)=0\), that is
\(\alpha=u_2^{2}(4-5u_2)/2=\alpha_c(u_2)\), and substituting this into
\(\phi\) gives \(M=u_2^{4}(3u_2-2)/2=\beta_c(u_2)\), which is
Eq.~\eqref{eq:fold}. Nonnegativity of \(\alpha_c\) and \(\beta_c\)
confines the locus to \(u\in[2/3,4/5]\), an interval on which
\(8-15u<0\); the signs of \(\alpha_c'\) and \(\beta_c'\) follow, and
so does item 1, the slope being \(\alpha_c'/\beta_c'=-1/u^{2}\).
Because \(u_2\) decreases with \(\alpha\) while \(\beta_c\) increases
with \(u\), the function \(M(\alpha)=\beta_c(u_2(\alpha))\) is
strictly decreasing, with \(M(0)=256/3125\). Strict decrease makes
\(\beta<M(\alpha)\) equivalent to \(\alpha<\alpha_c(\beta)\), and
\(\beta>256/3125\) leaves no root at any \(\alpha\). This is item 2.

For item 3, differentiate Eq.~\eqref{eq:app-g} twice and insert
Eq.~\eqref{eq:fold}:
\(6\alpha_c/u^{4}+20\beta_c/u^{6}=3(4-5u)/u^{2}+10(3u-2)/u^{2}
=(15u-8)/u^{2}\), which runs from \(9/2\) at \(u=2/3\) to \(25/4\) at
\(u=4/5\). At the fold \(g=g'=0\), so the Jacobian of the planar
system has the simple eigenvalue \(0\), with eigenvector \((1,0)\),
and the eigenvalue \(-2\zeta\neq0\). The quadratic coefficient of the
reduction to the center manifold is
\(\partial^{2}g/\partial\xi^{2}\neq0\), and the derivative of the
unfolding along that eigenvector is
\(\partial g/\partial\alpha=u^{-2}\neq0\). These are the two standard
conditions for a generic saddle-node unfolded
transversally.\hfill\(\square\)

We sampled \(2997\) points of the quadrant \(\alpha\in[0,0.22]\),
\(\beta\in[0,0.10]\) at random, excluding a band of half-width
\(2\times10^{-4}\) in \(\alpha\) about the fold, and counted sign
changes of \(\phi(u)-\beta\) on a grid of \(4\times10^{5}\) points in
\(u\). The count of Theorem~\ref{thm:static} is reproduced at every
sampled point.

\subsection{\label{sec:app-dyn-bracket}A bracket on the from-rest
dynamic threshold}

The from-rest orbit is monotone in \(\xi\) as long as \(\dot\xi>0\),
which turns Eq.~\eqref{eq:nd} into a scalar problem for
\(p(\xi)=\dot\xi^{2}/2\),
\begin{equation}
\label{eq:app-pode}
\frac{dp}{d\xi}=g(\xi)-2\zeta\sqrt{2p} ,\qquad p(0)=0 .
\end{equation}
\begin{lemma}[Scalar reduction]\label{lem:pode}
Equation~\eqref{eq:app-pode} has a unique forward solution, and the
orbit released from rest at \(\xi=0\) reaches \(\xi=1\) in finite time
if and only if the maximal nonnegative solution of
Eq.~\eqref{eq:app-pode} exists on \([0,1)\). That solution is
nondecreasing in \(\alpha\) and nonincreasing in \(\zeta\).
\end{lemma}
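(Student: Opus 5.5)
The plan is to pass between the second-order equation \eqref{eq:nd} and Eq.~\eqref{eq:app-pode} along the first monotone stretch of the orbit. Here the \emph{maximal nonnegative solution} means the solution continued from \(\xi=0\) for as long as \(p>0\). I would take the two claims in turn: (i) existence and uniqueness near the singular point \(p=0\), together with the equivalence with finite-time arrival at \(\xi=1\); (ii) monotonicity in \(\alpha\) and \(\zeta\) by a comparison argument.

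\emph{Uniqueness at the start.} The right-hand side \(F(\xi,p)=g(\xi)-2\zeta\sqrt{2p}\) is locally Lipschitz in \(p\) only for \(p>0\), so the origin needs separate treatment. Since \(g(0)=\alpha+\beta>0\), any nonnegative solution satisfies \(p(\xi)=g(0)\,\xi+o(\xi)\) and is positive on some \((0,\epsilon)\). I would not prove uniqueness by hand. Instead, given such a solution, set \(\tau(\xi)=\int_0^{\xi}ds/\sqrt{2p(s)}\); this integral is finite because \(p\sim g(0)s\). Inverting gives \(\xi(\tau)\), which solves the Lipschitz planar system \(\dot\xi=v\), \(\dot v=-2\zeta v+g(\xi)\) with \(\xi(0)=0\), \(v(0)=0\). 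Uniqueness for that system forces uniqueness of \(p=\dot\xi^2/2\) along the monotone stretch, and away from \(p=0\) the ordinary Picard theory continues it.

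\emph{Equivalence.} Suppose the maximal solution is positive on all of \([0,1)\). Then \(\dot\xi>0\) throughout, and the arrival time is \(\int_0^1 d\xi/\sqrt{2p}\). Near \(u=0\) one has \(dp/d\xi\ge g-2\zeta\sqrt{2p}\), with \(g\gtrsim \alpha u^{-2}\) or \(\beta u^{-4}\). A short comparison gives \(p\gtrsim \alpha/u\) (respectively \(\beta/(3u^3)\)): the damping term \(2\zeta\sqrt{2p}\) grows more slowly than \(g\), so it cannot prevent this. The integrand is then \(O(u^{1/2})\) or better, and the time is finite.

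Conversely, suppose \(p\) first vanishes at some \(\xi_1<1\). Then \(g(\xi_1)\le0\). If \(g(\xi_1)=0\), the point \(\xi_1\) is an equilibrium; it is approached only asymptotically (by uniqueness) and \(\xi=1\) is never reached. If \(g(\xi_1)<0\), the orbit turns back with energy \(E\le V(\xi_1)\), and \(E\) is nonincreasing afterwards. Because \(g<0\) on \((\xi_1,\xi_s)\), where \(\xi_s\) is the next zero of \(g\) (a saddle by Theorem~\ref{thm:static}, and one must exist since \(g\to+\infty\) as \(\xi\to1\)), we get \(V(\xi_s)>V(\xi_1)\ge E\). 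Every later point satisfies \(V(\xi)\le E\), so \(\xi\) never passes \(\xi_s\). Hence the orbit reaches \(\xi=1\) if and only if \(p\) stays positive on \([0,1)\).

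\emph{Monotonicity.} This is the step needing most care, again because of the non-Lipschitz point at \(p=0\). \(F\) is nondecreasing in \(\alpha\) (\(\partial_\alpha g=u^{-2}>0\)) and nonincreasing in \(\zeta\). Take \(\alpha'>\alpha\). Both solutions start like \(g(0)\xi\) with \(g'(0)\) ordered, so \(p_{\alpha'}>p_\alpha\) on a small \((0,\epsilon)\). On any interval where both are positive the standard comparison theorem applies, because \(F\) is Lipschitz in \(p\) there. So \(p_{\alpha'}\ge p_\alpha\) as long as \(p_\alpha\) exists, and the existence interval of \(p_{\alpha'}\) contains that of \(p_\alpha\). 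For \(\zeta'>\zeta\) the initial slopes coincide. In that case I would first compare \(p_{\zeta'}\) with the solution for \(\zeta\) and a slightly smaller \(\alpha\), giving a strict ordering, and then let that \(\alpha\) return to its original value using continuous dependence on parameters. Handling this equal-start case is the main technical obstacle I anticipate.
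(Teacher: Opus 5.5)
Your outline has the same skeleton as the paper's proof: reduction to $p(\xi)$, a lower bound on $p$ near the pole for finite arrival, the energy barrier at the first zero of $p$ for the converse, and comparison for monotonicity. Your $\alpha$ step works because the initial slopes $\alpha'+\beta>\alpha+\beta$ already differ (it is $g(0)$, not $g'(0)$, that sets them). The genuine gap is the step you flag yourself, monotonicity in $\zeta$. As written, your perturbation goes the wrong way. Comparing $p_{\zeta'}(\cdot;\alpha)$ with $p_{\zeta}(\cdot;\alpha-\delta)$ starts from $p_{\zeta'}>p_{\zeta}$ near $\xi=0$, which is the opposite of the inequality you want. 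That ordering does not even propagate, because
\[
F_{\zeta'}(\xi,p;\alpha)-F_{\zeta}(\xi,p;\alpha-\delta)=\frac{\delta}{u^{2}}-2(\zeta'-\zeta)\sqrt{2p}
\]
has no fixed sign. The shift must go the other way: use $\alpha+\delta$ in the $\zeta$ problem, or $\alpha-\delta$ in the $\zeta'$ problem. Then the fields are strictly ordered and a first-touch argument keeps the strict ordering. Even so, you would still owe continuous dependence on $\delta$ through the non-Lipschitz point $p=0$, which you have not supplied.

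The missing idea, and the one the paper uses, removes the obstacle altogether. For $\zeta\ge0$, $F(\xi,p)=g(\xi)-2\zeta\sqrt{2p}$ is \emph{nonincreasing in} $p$, so no Lipschitz bound is needed at $p=0$. Suppose $p_1'=F_1(\xi,p_1)$ and $p_2'=F_2(\xi,p_2)$, with $F_2\le F_1$, $F_1$ nonincreasing in $p$, and $p_1(0)=p_2(0)$. Then on any interval where $p_2>p_1$,
\[
(p_2-p_1)'\le F_1(\xi,p_2)-F_1(\xi,p_1)\le 0 ,
\]
so $p_2-p_1$, which starts at $0$, can never become positive. This one argument covers all three claims:
\begin{itemize}
\item With $F_1=F_2$ it is forward uniqueness from $p(0)=0$. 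Your passage through the planar system is valid but unnecessary.
\item With $F_1=F_\zeta$, $F_2=F_{\zeta'}$ it is the $\zeta$ monotonicity from equal initial data, with no perturbation and no limit.
\item With $F_1=F_{\alpha'}$, $F_2=F_\alpha$ it is the $\alpha$ case.
\end{itemize}
One smaller point in the forward direction: the claim that ``the damping term grows more slowly than $g$'' presupposes a bound on $p$. Get it first from $p'\le g$, i.e.\ $p\le V(0)-V(\xi)=O(u^{-3})$. Then $\int 2\zeta\sqrt{2p}\,d\xi=O(u^{-1/2})$, which gives $p\ge \beta/(3u^{3})\,[1+o(1)]$, or $p\ge \alpha/u-O(1)$ at $\beta=0$. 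This is the estimate the paper imports from Theorem~\ref{thm:exponent}.
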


\noindent\textit{Proof.} While \(\dot\xi>0\) we take \(\xi\) as
the independent variable, and
\(dp/d\xi=\ddot\xi=-2\zeta\dot\xi+g(\xi)\), which is
Eq.~\eqref{eq:app-pode}. The right-hand side is continuous and
nonincreasing in \(p\), hence one-sided Lipschitz: if \(p_1>p_2\)
solve the same equation then \((p_1-p_2)'\le0\), which gives forward
uniqueness and the differential comparison principle. Since
\(\partial g/\partial\alpha=u^{-2}>0\) and \(-2\zeta\sqrt{2p}\)
decreases in \(\zeta\), the right-hand side increases in \(\alpha\)
and decreases in \(\zeta\). Comparison then gives the stated
monotonicity.

For the criterion, suppose \(p>0\) on \((0,1)\). Then \(\xi\) rises
monotonically, and the estimate established in the proof of
Theorem~\ref{thm:exponent}, which uses only that \(u\) decreases to
zero and is therefore not circular here, gives
\(p=\bigl[\beta/(3u^{3})+\alpha/u\bigr]\bigl[1+o(1)\bigr]\). Hence
\(d\tau=d\xi/\sqrt{2p}\) is \(O(u^{3/2})\,du\) for \(\beta>0\) and
\(O(u^{1/2})\,du\) for \(\beta=0\), integrable either way, so
\(\xi=1\) is reached at a finite \(\tau_*\). Conversely, suppose \(p\) first
vanishes at \(\xi_t<1\). At that point \(p'(\xi_t)=g(\xi_t)\); a touch
with \(g(\xi_t)>0\) is impossible, because \(p\) was decreasing into
\(\xi_t\). Hence \(g(\xi_t)\le0\), so \(V'(\xi_t)\ge0\) and \(\xi_t\)
lies in the interval \([\xi_n,\xi_s]\) between the stable equilibrium
and the saddle, on which \(V\) is nondecreasing. From that moment
\(E\le V(\xi_t)\le V(\xi_s)\), and crossing \(\xi_s\) would require
\(\dot\xi^{2}/2=E-V(\xi_s)\le0\). Equality leaves only the asymptotic
approach to \(\xi_s\), which is not a finite-time
collapse.\hfill\(\square\)

At \(\beta=0\), Ref.~\cite{Flores2017} proves the strict form of the
monotonicity that follows, together with the limits \(1/8\) and
\(4/27\). What the comparison argument adds is the nonstrict
statement at \(\beta>0\).

\begin{corollary}[Threshold and its monotonicity]\label{cor:mono}
For fixed \(\beta\) and \(\zeta\) the set of \(\alpha\) for which the
from-rest orbit collapses is an interval unbounded above; write
\(\alpha_{\mathrm{dyn}}(\beta;\zeta)\) for its infimum. Then
\(\alpha_{\mathrm{dyn}}\) is nondecreasing in \(\zeta\).
\end{corollary}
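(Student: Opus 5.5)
The plan is to derive both assertions of Corollary~\ref{cor:mono} from the comparison statements in Lemma~\ref{lem:pode}. The collapse criterion there is: the orbit from rest collapses if and only if the maximal nonnegative solution $p(\,\cdot\,;\alpha,\zeta)$ of Eq.~\eqref{eq:app-pode} exists on $[0,1)$. In the form we use, this means $p$ stays strictly positive on $(0,1)$ and never terminates at a turning point $\xi_t<1$.

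The first step is the interval property in $\alpha$. Fix $\beta$ and $\zeta$, and suppose the orbit collapses at some $\alpha_1$. Take $\alpha_2>\alpha_1$.
\begin{enumerate}
\item By Lemma~\ref{lem:pode} the solution is nondecreasing in $\alpha$, so $p(\xi;\alpha_2,\zeta)\ge p(\xi;\alpha_1,\zeta)>0$ on every interval where both are defined.
\item Hence the $\alpha_2$ solution cannot vanish at any $\xi_t<1$ before the $\alpha_1$ solution does, and the latter never vanishes.
\item Therefore the $\alpha_2$ solution is positive on $(0,1)$, and the orbit at $\alpha_2$ also collapses.
\end{enumerate}
So the collapse set is an up-set in $\alpha\ge0$, i.e.\ an interval of the form $(\alpha_{\mathrm{dyn}},\infty)$ or $[\alpha_{\mathrm{dyn}},\infty)$.

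The second step is nonemptiness, so that the infimum is finite. For $\alpha>\alpha_c(\beta)$, or for any $\alpha$ when $\beta>256/3125$, Theorem~\ref{thm:static} gives that Eq.~\eqref{eq:nd} has no equilibrium in $[0,1)$. Hence $g$ has no zero there, and since $g(0)=\alpha+\beta>0$ we get $g>0$ on $[0,1)$. Then $p$ cannot first vanish at a point $\xi_t$, because the argument in Lemma~\ref{lem:pode} requires $g(\xi_t)\le0$ there. So $p>0$ on $(0,1)$ and collapse occurs. In particular $\alpha_{\mathrm{dyn}}(\beta;\zeta)\le\alpha_c(\beta)$.

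The third step is monotonicity in $\zeta$. Take $\zeta_1<\zeta_2$. By Lemma~\ref{lem:pode}, $p(\,\cdot\,;\alpha,\zeta_2)\le p(\,\cdot\,;\alpha,\zeta_1)$ wherever both are defined. So if the orbit collapses at $(\alpha,\zeta_2)$, the larger solution at $\zeta_1$ also stays positive on $(0,1)$ and collapses. The collapse set at $\zeta_2$ is therefore contained in the collapse set at $\zeta_1$, which gives $\alpha_{\mathrm{dyn}}(\beta;\zeta_2)\ge\alpha_{\mathrm{dyn}}(\beta;\zeta_1)$.

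I expect the main obstacle to be a subtlety in comparing maximal solutions, since the smaller solution may terminate first. The comparison must be read as: the existence interval of the larger solution contains that of the smaller one. This holds because the right-hand side of Eq.~\eqref{eq:app-pode} is continuous in $p$ up to $p=0$. So a larger solution cannot hit zero while the smaller one is still positive. Its only possible first zero is at a point where the smaller solution has already reached zero. There is also the behaviour at $p=0$ itself, where $\sqrt{2p}$ is non-Lipschitz. Here I would reuse the one-sided Lipschitz property already established in Lemma~\ref{lem:pode}, and would not need any new estimate.
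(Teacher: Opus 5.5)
Your proposal is correct and follows the paper's route. The paper's proof is the one-line observation that, by Lemma~\ref{lem:pode}, the collapse criterion is monotone in $p$ while $p$ is nondecreasing in $\alpha$ and nonincreasing in $\zeta$, which is exactly your first and third steps. Your second step, nonemptiness of the collapse set via Theorem~\ref{thm:static} ($g>0$ on $[0,1)$ when $\alpha>\alpha_c(\beta)$), makes explicit what the paper's proof leaves implicit; the paper supplies it only later, through the upper bound of Theorem~\ref{thm:bracket}.
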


\noindent\textit{Proof.} By Lemma~\ref{lem:pode} the collapse
criterion is monotone in \(p\), and \(p\) is nondecreasing in
\(\alpha\) and nonincreasing in \(\zeta\).\hfill\(\square\)

\begin{theorem}[Two-sided bracket]\label{thm:bracket}
Let \(\alpha_{\mathrm{MMXY}}(\beta)\) denote the undamped from-rest
threshold of Ref.~\cite{McLellan2016}, its Eq.~(4.14) at \(a=1\). In
the gap variable \(u=1-\xi_s\) of the saddle it reads
\begin{equation}
\label{eq:app-mmxy}
\begin{split}
\alpha_{\mathrm{MMXY}}(u)&=\frac{u\,(3-2u-2u^{2}-2u^{3})}{2\,(u+2)} ,\\
\beta_{\mathrm{MMXY}}(u)&=\frac{3u^{3}\,(2u-1)}{2\,(u+2)} ,
\end{split}
\end{equation}
for \(u\in[1/2,u_{\mathrm{d}}]\), where \(u_{\mathrm{d}}=0.6914140\)
is the root of \(2u^{3}+2u^{2}+2u=3\) in \((0,1)\); the curve is
extended by \(\alpha_{\mathrm{MMXY}}=0\) on
\(\beta\in[\beta_{\mathrm{d}},256/3125]\), where
\(\beta_{\mathrm{d}}=\beta_{\mathrm{MMXY}}(u_{\mathrm{d}})=0.0705227\).
Then, for every \(\zeta\ge0\) and every \(\beta\in[0,256/3125]\),
\begin{equation}
\label{eq:app-bracket}
\alpha_{\mathrm{MMXY}}(\beta)
\;\le\;\alpha_{\mathrm{dyn}}(\beta;\zeta)
\;\le\;\alpha_c(\beta) .
\end{equation}
\end{theorem}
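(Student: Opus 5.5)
The two inequalities are proved separately, and both rest on Corollary~\ref{cor:mono} and Theorem~\ref{thm:static}.

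\emph{Upper bound.} Take \(\alpha>\alpha_c(\beta)\). By item~2 of Theorem~\ref{thm:static} there is no equilibrium in \([0,1)\), so \(g(\xi)>0\) on \([0,1)\). Since \(g\) is continuous and \(g(0)=\alpha+\beta>0\), \(g\) is bounded below by a positive constant on \([0,1)\), because \(g\to\infty\) as \(\xi\to1\). Suppose \(p\) first vanished at some \(\xi_t\in(0,1)\). The touching argument in the proof of Lemma~\ref{lem:pode} then forces \(g(\xi_t)\le0\), which is a contradiction. Near \(\xi=0\) we have \(p'(0)=g(0)>0\), so \(p>0\) immediately. Hence \(p>0\) on \((0,1)\), and Lemma~\ref{lem:pode} gives finite-time collapse. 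This shows \(\alpha_{\mathrm{dyn}}\le\alpha_c\). If \(\beta>256/3125\) the same argument applies at every \(\alpha\), but the statement restricts \(\beta\) to \([0,256/3125]\) anyway.

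\emph{Lower bound.} By Corollary~\ref{cor:mono}, \(\alpha_{\mathrm{dyn}}(\beta;\zeta)\ge\alpha_{\mathrm{dyn}}(\beta;0)\). It therefore suffices to identify the undamped threshold with \(\alpha_{\mathrm{MMXY}}\), or at least to bound it below by \(\alpha_{\mathrm{MMXY}}\). At \(\zeta=0\), energy conservation gives \(p(\xi)=V(0)-V(\xi)\). From rest, collapse fails exactly when \(p\) vanishes before the saddle, which happens iff \(V(\xi_s)\ge V(0)\).

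The key step is to show that this happens for \(\alpha<\alpha_{\mathrm{MMXY}}(\beta)\). I would first establish that the boundary case \(V(\xi_s)=V(0)\), \(g(\xi_s)=0\), parametrized by the saddle gap \(u\), is exactly Eq.~\eqref{eq:app-mmxy}. Writing \(\xi=1-u\), the condition \(V(\xi_s)=V(0)\) reads
\[
\tfrac12(1-u)^2-\frac{\alpha}{u}-\frac{\beta}{3u^3}=-\alpha-\frac{\beta}{3},
\]
and together with \(\alpha u^2+\beta=(1-u)u^4\) this is again linear in \((\alpha,\beta)\). Solving the linear system is a direct computation that should return \(\alpha_{\mathrm{MMXY}}(u)\) and \(\beta_{\mathrm{MMXY}}(u)\). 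The range \(u\in[1/2,u_d]\) then follows from positivity: \(\beta\ge0\) gives \(u\ge1/2\), and \(\alpha\ge0\) gives \(u\le u_d\).

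Next I need monotonicity. At fixed \(\beta\), the quantity \(\Delta(\alpha)=V(\xi_s(\alpha))-V(0)\) should be strictly decreasing in \(\alpha\). This follows from the envelope theorem, since \(\partial_\alpha V(\xi)=-1/(1-\xi)+1<0\) for \(\xi>0\) and \(V'(\xi_s)=0\). So \(\Delta>0\) below the curve, the orbit is trapped, and \(\alpha_{\mathrm{dyn}}(\beta;0)\ge\alpha_{\mathrm{MMXY}}(\beta)\).

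On the extension \(\beta\in[\beta_d,256/3125]\) the inequality \(\alpha_{\mathrm{dyn}}\ge0=\alpha_{\mathrm{MMXY}}\) is trivial. I also need \(\beta_{\mathrm{MMXY}}\) to be monotone in \(u\) on \([1/2,u_d]\), so that the curve is a graph over \(\beta\). One could alternatively cite McLellan et al.\ for this, but a derivative check is routine.

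\emph{Main obstacle.} The hardest step is making the trapping argument rigorous in the equality and degenerate cases. These include the orbit approaching the saddle asymptotically, and the configuration where the saddle is absent or \(\xi_s\) coincides with the fold. The argument must also respect that the stable equilibrium lies below \(\xi_s\), so that \(V(0)\le V(\xi_s)\) genuinely blocks passage. I would handle this with the same energy argument as in Lemma~\ref{lem:pode}: \(E\) is nonincreasing, and crossing \(\xi_s\) requires \(E>V(\xi_s)\). Because the threshold is defined as an infimum, boundary cases do not affect the stated inequalities.
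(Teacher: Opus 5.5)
Your argument is correct, but it is routed differently from the paper's.

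For the upper bound, the paper bypasses the scalar reduction entirely. With $m=\inf_{[0,1)}g>0$ it integrates $w'+2\zeta w\ge m$ for $w=\dot\xi$ and gets $w\ge(m/2\zeta)(1-e^{-2\zeta\tau})$, or $m\tau$ at $\zeta=0$, so $\xi$ must reach $1$ in finite time. Your version rules out a turning point of $p$ because $g>0$ and then applies the criterion of Lemma~\ref{lem:pode}. That is valid, but the ``if'' half of the criterion rests on the asymptotics from the proof of Theorem~\ref{thm:exponent}, which the paper's one-line bound does not need.

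For the lower bound, the paper never passes through $\zeta=0$. Since $\dot E=-2\zeta\dot\xi^{2}\le0$ for every $\zeta\ge0$, one has $E\le V(0)$ along the orbit. The envelope identity $\frac{d}{d\alpha}[V(\xi_s)-V(0)]=(u_s-1)/u_s<0$ then gives $V(0)<V(\xi_s)$ below the curve, so the saddle is unreachable at any damping. Your reduction to the conservative case through Corollary~\ref{cor:mono} is sound but redundant. The dissipation argument you already invoke in your last paragraph works uniformly in $\zeta$ and avoids the comparison principle.

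What your version adds is the algebra the paper leaves to the citation and to computer algebra. You solve the linear system, which does return Eq.~\eqref{eq:app-mmxy}. You read $u\in[1/2,u_{\mathrm d}]$ off positivity. You check that $\beta_{\mathrm{MMXY}}$ increases on that interval; its derivative has the sign of $3u^{2}+7u-3>0$ there. The paper's phrase ``it vanishes exactly on the undamped threshold'' uses this graph property tacitly.

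Two small corrections. First, the derivative $-1/(1-\xi)+1$ you write belongs to $V(\xi_s)-V(0)$, not to $V(\xi)$; the $+1$ is $-\partial_\alpha V(0)$. Second, the degenerate cases you flag do not arise. For $\alpha<\alpha_{\mathrm{MMXY}}(\beta)$ the chain $E\le V(0)<V(\xi_s)$ is strict, which excludes even an asymptotic approach to the saddle. And any collapsing orbit must cross $\xi_s$ by continuity, wherever the node sits.
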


\noindent\textit{Proof.} Upper bound. If \(\alpha>\alpha_c(\beta)\)
then by Theorem~\ref{thm:static} there is no equilibrium, so \(g>0\)
on \([0,1)\); since \(g\) is continuous and \(g\to+\infty\) as
\(\xi\to1\), \(m=\inf_{[0,1)}g>0\). The velocity \(w=\dot\xi\) obeys
\(w'+2\zeta w\ge m\) with \(w(0)=0\), so
\((e^{2\zeta\tau}w)'\ge m\,e^{2\zeta\tau}\) and
\(w(\tau)\ge(m/2\zeta)(1-e^{-2\zeta\tau})\), which reads \(m\tau\) at
\(\zeta=0\). Hence \(\xi\) increases without bound unless it reaches
\(1\) first, so it reaches \(1\) at a finite time.

Lower bound. Since \(\dot E=-2\zeta\dot\xi^{2}\le0\) for every
\(\zeta\ge0\), \(E(\tau)\le V(0)\) along the orbit. Let
\(\xi_s\) be the saddle and \(u_s=1-\xi_s\). Because \(V'(\xi_s)=0\),
only the explicit \(\alpha\) dependence survives differentiation,
\begin{equation}
\label{eq:app-envelope}
\frac{d}{d\alpha}\bigl[V(\xi_s)-V(0)\bigr]
=-\frac{1}{u_s}+1=\frac{u_s-1}{u_s}<0 ,
\end{equation}
so \(V(\xi_s)-V(0)\) decreases strictly in \(\alpha\) at fixed
\(\beta\). It vanishes exactly on the undamped threshold, so
\(\alpha<\alpha_{\mathrm{MMXY}}(\beta)\) gives \(V(0)<V(\xi_s)\).
Reaching \(\xi_s\) would then require
\(\dot\xi^{2}/2=E-V(\xi_s)\le V(0)-V(\xi_s)<0\). The orbit therefore
stays in \([0,\xi_s)\) and never collapses, at any
\(\zeta\ge0\).\hfill\(\square\)

Equation~\eqref{eq:app-mmxy} is the threshold of
Ref.~\cite{McLellan2016} written in the gap variable. It follows from
\(V(0)=V(\xi_s)\) together with \(V'(\xi_s)=0\), a pair that is
linear in \((\alpha,\beta)\),
\begin{equation}
\label{eq:app-linear}
\begin{split}
\frac{\alpha(1-u)}{u}+\frac{\beta}{3}\bigl(u^{-3}-1\bigr)
 &=\frac{(1-u)^{2}}{2} ,\\
\frac{\alpha}{u^{2}}+\frac{\beta}{u^{4}}&=1-u ,
\end{split}
\end{equation}
with determinant \(u^{-5}(1-u)^{2}(u+2)/3\neq0\). Solving and
substituting \(x=1-u\), \(a=1\) reproduces Eq.~(4.14) of
Ref.~\cite{McLellan2016} identically, which we verified with computer
algebra. At \(\beta=0\) the parameter is \(u=1/2\), giving
\(\xi_s=1/2\) and \(\alpha_{\mathrm{MMXY}}=1/8\) against
\(\alpha_c=4/27\), the undamped step-voltage
values~\cite{LeusElata2008,Flores2017}.

Numerical check. The \(33\) boundary points of
\texttt{dynamic\_boundary\_design.json}, located by bisection on
Eq.~\eqref{eq:nd} at \(\zeta=0.1\) and plotted in
Fig.~\ref{fig:dynbound}, all satisfy Eq.~\eqref{eq:app-bracket}. The
smallest clearance above \(\alpha_{\mathrm{MMXY}}\) is
\(5.46\times10^{-3}\), at \(\beta=0.072\) where the lower bound has
already fallen to zero, and \(7.77\times10^{-3}\) away from that
endpoint; the smallest clearance below \(\alpha_c\) is
\(1.02\times10^{-2}\). The static column of the same file agrees with
Eq.~\eqref{eq:fold} to \(4.3\times10^{-15}\).

The left inequality of Eq.~\eqref{eq:app-bracket} is an equality at
\(\zeta=0\), where the undamped criterion is exactly
\(V(0)=V(\xi_s)\), and is strict at every \(\zeta>0\). Indeed, let
\(D>0\) be the energy dissipated before the orbit first reaches the
stable equilibrium \(\xi_n\) at
\(\alpha=\alpha_{\mathrm{MMXY}}\). By Eq.~\eqref{eq:app-envelope} the
barrier deficit at \(\alpha=\alpha_{\mathrm{MMXY}}+\epsilon\) is
\(c\,\epsilon+O(\epsilon^{2})\) with \(c=(1-u_s)/u_s>0\), while the
dissipation before \(\xi_n\) still exceeds \(D/2\) for \(\epsilon\)
small. For \(\epsilon<D/(2c)\) the orbit therefore arrives with
\(E<V(\xi_s)\) and cannot cross. The right inequality, by contrast,
becomes an equality once the damping is large enough.

\begin{theorem}[Strong damping closes the band]\label{thm:barrier}
Fix \(\beta\in[0,256/3125]\). If
\begin{equation}
\label{eq:app-barrier-cond}
\zeta^{2}\;\ge\;1-2\alpha_c(\beta)-4\beta ,
\end{equation}
then the from-rest orbit does not collapse at any
\(\alpha\le\alpha_c(\beta)\), so
\(\alpha_{\mathrm{dyn}}(\beta;\zeta)=\alpha_c(\beta)\). The
right-hand side of Eq.~\eqref{eq:app-barrier-cond} is largest at
\(u_f=1/\sqrt2\), where it equals \(1/\sqrt2\), so the single
condition
\begin{equation}
\label{eq:app-barrier-uniform}
\zeta\;\ge\;2^{-1/4}=0.8408964\ldots
\end{equation}
suffices at every \(\beta\in[0,256/3125]\); at \(\beta=0\) the
requirement is \(\zeta\ge\sqrt{19/27}=0.8388705\ldots\).
\end{theorem}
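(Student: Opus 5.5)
The plan is to trap the from-rest orbit at \(\alpha=\alpha_c(\beta)\) below a straight-line barrier in the phase plane that ends at the fold equilibrium. Monotonicity in \(\alpha\) then handles every smaller bias. Fix \(\beta\in[0,256/3125]\), let \(u_f\in[2/3,4/5]\) be the fold gap with \(\beta=\beta_c(u_f)\), and put \(\xi_f=1-u_f\). At \(\alpha=\alpha_c\) we have \(g(\xi_f)=g'(\xi_f)=0\). Moreover \(g''(\xi)=6\alpha/u^{4}+20\beta/u^{6}>0\) on \([0,1)\), so \(g\) is strictly convex. Hence \(g\ge0\) on \([0,\xi_f]\) and \(g\) decreases there to its double zero.

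Key step (the barrier). Consider the triangle \(R=\{0\le\xi\le\xi_f,\;v\le k(\xi_f-\xi)\}\) with slope \(k=\zeta\). On the top edge \(v=\zeta(\xi_f-\xi)\), the field points into \(R\) if \(\dot v\le\frac{d}{d\tau}[\zeta(\xi_f-\xi)]=-\zeta v\). Since \(\dot v=-2\zeta v+g\), this reads \(g(\xi)\le\zeta v=\zeta^{2}(\xi_f-\xi)\). The choice \(k=\zeta\) maximizes \((2\zeta-k)k\), which is why it is the natural slope.

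To verify the inequality I will use convexity twice. First, the chord slope \([g(\xi_f)-g(\xi)]/(\xi_f-\xi)\) increases in \(\xi\), so \(g(\xi)/(\xi_f-\xi)\le g(0)/\xi_f\). Second, the tangent-line bound \(0=g(\xi_f)\ge g(0)+g'(0)\xi_f\) gives \(g(0)/\xi_f\le-g'(0)\). Since \(-g'(0)=1-2\alpha_c-4\beta\), Eq.~\eqref{eq:app-barrier-cond} gives \(g(\xi)\le\zeta^{2}(\xi_f-\xi)\) on \([0,\xi_f)\).

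The initial point \((0,0)\) lies in \(R\), since \(0\le\zeta\xi_f\). The left edge \(\xi=0\) is harmless: there \(\dot\xi=v\), and the orbit starts with \(\ddot\xi=g(0)>0\), so it does not exit leftward into collapse. The only corner where the orbit could leave toward \(\xi>\xi_f\) is \((\xi_f,0)\). That corner is an equilibrium, so uniqueness of solutions forbids reaching it in finite time. A short comparison or first-exit-time argument should make the invariance rigorous despite the non-strict inequality. One option is to run it with \(k=\zeta-\epsilon\) where strictness helps and then let \(\epsilon\to0\). Another is to work in the scalar variable \(p\) of Lemma~\ref{lem:pode}, comparing \(\sqrt{2p}\) with \(\zeta(\xi_f-\xi)\). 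I expect this corner/tangency bookkeeping, rather than the inequality itself, to be the main obstacle.

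Given invariance, \(\xi(\tau)\le\xi_f<1\) for all \(\tau\), so there is no collapse at \(\alpha=\alpha_c\). By Corollary~\ref{cor:mono} (via Lemma~\ref{lem:pode}, where \(p\) is nondecreasing in \(\alpha\)), there is then no collapse for any \(\alpha\le\alpha_c\). Together with the upper bound of Theorem~\ref{thm:bracket}, this gives \(\alpha_{\mathrm{dyn}}=\alpha_c\).

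For the uniform constant, substitute Eq.~\eqref{eq:fold} to get \(h(u)=1-u^{2}(4-5u)-2u^{4}(3u-2)\). Its derivative factors as \(h'(u)=u(2u^{2}-1)(8-15u)\). On \([2/3,4/5]\) the factor \(8-15u\) is negative, so \(h\) rises until \(u=1/\sqrt2\) and falls afterwards. At \(u=1/\sqrt2\) one finds \(h=u=1/\sqrt2\), which gives \(\zeta\ge2^{-1/4}\). At \(\beta=0\), i.e.\ \(u=2/3\), one has \(h=19/27\), which gives \(\zeta\ge\sqrt{19/27}\).
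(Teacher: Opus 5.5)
Your proof is correct, and it reaches the paper's condition through a different barrier. The paper traps the ascending orbit below the curved barrier $S(\xi)=g_c(\xi)/\zeta$, which is the slow manifold, by a first-contact argument in $q=\dot\xi$:
\begin{itemize}
\item at a contact point, $dq/d\xi\le-\zeta$ because $g\le g_c$;
\item there, $S'=g_c'/\zeta>-\zeta$ because $g_c'$ is increasing and $g_c'(0)\ge-\zeta^{2}$.
\end{itemize}
That argument covers every $\alpha\le\alpha_c$ at once and then calls on Lemma~\ref{lem:pode} for the motion after the turning point.

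You use the straight line $v=\zeta(\xi_f-\xi)$ instead. This reduces invariance to the single inequality $g_c(\xi)\le\zeta^{2}(\xi_f-\xi)$, which you prove with a chord bound and a tangent bound. You then pass to $\alpha<\alpha_c$ through Corollary~\ref{cor:mono}.

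Both routes rest on the convexity of $g_c$, and both produce the same constant $-g_c'(0)=1-2\alpha_c-4\beta$. Your inequality says exactly that the paper's curve lies below your line. The paper's trap is therefore the tighter one, and it connects directly to the center-manifold picture used later in Proposition~\ref{prop:slowing} and in the tangency check. Yours is more elementary. Your factorization $h'(u)=u(2u^{2}-1)(8-15u)$ also gives the location of the maximum more cleanly than the paper's comparison of endpoint values.

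Two small repairs are needed.

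First, the region $R$ as you wrote it has no lower bound on $v$. It is then not a triangle, and its left edge is not invariant. Add $v\ge0$. On the bottom edge $\dot v=g\ge0$, so the closed triangle with vertices $(0,0)$, $(0,\zeta\xi_f)$, $(\xi_f,0)$ is the set you want. At $\alpha=\alpha_c$ the orbit then never turns around. Indeed, $v=0$ at any $\xi<\xi_f$ forces $\dot v=g>0$, and $(\xi_f,0)$ is an equilibrium that cannot be reached in finite time. Hence $\xi<\xi_f$ for all $\tau$.

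Second, the tangency worry disappears. Strict convexity makes the tangent step strict, $0=g(\xi_f)>g(0)+g'(0)\xi_f$. This gives $g_c(\xi)<\zeta^{2}(\xi_f-\xi)$ on all of $[0,\xi_f)$, even when the condition holds with equality. A first-exit argument then closes at once: at a first touching time on the hypotenuse you would need $\dot B\ge0$, but the strict inequality gives $\dot B<0$.

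Drop the $k=\zeta-\epsilon$ device. Since $k=\zeta$ maximizes $(2\zeta-k)k$, any other slope makes the required inequality stronger, not weaker.
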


\noindent\textit{Proof.} Write \(g_c\) for the force at
\(\alpha=\alpha_c(\beta)\) and set \(S(\xi)=g_c(\xi)/\zeta\), positive
on \([0,\xi_f)\) and zero at \(\xi_f\). With \(q=\dot\xi\) the speed
along the ascending orbit, \(dq/d\xi=g/q-2\zeta\) and
\(q(0)=0<S(0)\). Suppose \(q\) first meets \(S\) at \(\xi_0\). Since
\(g\le g_c\) for \(\alpha\le\alpha_c\),
\begin{equation}
\label{eq:app-contact}
\frac{dq}{d\xi}\bigg|_{\xi_0}=\frac{g(\xi_0)}{S(\xi_0)}-2\zeta
 \le\frac{g_c(\xi_0)\,\zeta}{g_c(\xi_0)}-2\zeta=-\zeta .
\end{equation}
Meanwhile \(S'=g_c'/\zeta\), and
\(g_c'(\xi)=2\alpha_c(1-\xi)^{-3}+4\beta(1-\xi)^{-5}-1\) strictly
increases in \(\xi\), one of \(\alpha_c,\beta\) being positive, so on
\([0,\xi_f]\) it is least at \(\xi=0\), where
\(g_c'(0)=2\alpha_c+4\beta-1\). Under
Eq.~\eqref{eq:app-barrier-cond} this gives
\(g_c'(0)\ge-\zeta^{2}\). Now \(q(0)=0<S(0)\) forces \(\xi_0>0\), so
\(g_c'(\xi_0)>g_c'(0)\ge-\zeta^{2}\) and therefore
\(S'(\xi_0)>-\zeta\ge dq/d\xi\) at \(\xi_0\), giving
\((q-S)'(\xi_0)<0\). A first crossing needs
\((q-S)'(\xi_0)\ge0\), so no contact occurs and \(q<S\) on
\([0,\xi_f)\). Since \(S(\xi_f)=0\), the speed vanishes at some
\(\hat\xi\le\xi_f\). For \(\alpha<\alpha_c\) the saddle sits at
\(\xi_s>\xi_f\ge\hat\xi\), so the turning point precedes the saddle
and Lemma~\ref{lem:pode} forbids any later crossing; for
\(\alpha=\alpha_c\) the orbit reaches \(\xi_f\) with zero speed and
converges to it.

For the uniform constant, parametrize the fold by \(u\) through
Eq.~\eqref{eq:fold}, giving
\(2\alpha_c+4\beta=4u^{2}-5u^{3}-4u^{4}+6u^{5}\). Its derivative
\(8u-15u^{2}-16u^{3}+30u^{4}\) vanishes at \(u=1/\sqrt2\), where
\(2\alpha_c+4\beta=1-1/\sqrt2\). The endpoint values \(8/27\) and
\(1024/3125\) both exceed \(1-1/\sqrt2\), so this is the minimum of
\(2\alpha_c+4\beta\) along the fold and hence the maximum of
\(1-2\alpha_c-4\beta\).\hfill\(\square\)

Theorem~\ref{thm:barrier} contradicts a limit asserted in the
literature. Reference~\cite{Flores2017} treats \(\beta=0\) in the
variables \(\alpha_F=2\zeta\) and \(\lambda=\alpha\). It establishes, in
agreement with everything we compute, that for each
\(\lambda\in(1/8,4/27)\) there is a unique damping
\(\alpha^{*}(\lambda)\) separating collapse from stable operation,
and that \(\alpha^{*}\) is continuous and strictly increasing. Its
closing step asserts in addition that \(\alpha^{*}(\lambda)\to\infty\)
as \(\lambda\to4/27\), from which the strict inequality
\(\lambda_d^{*}(\alpha_F)<4/27\) at every finite damping follows by
inversion. Theorem~\ref{thm:barrier} gives
\(\alpha^{*}(\lambda)\le2\sqrt{19/27}=1.6777\ldots\) for every
\(\lambda<4/27\), so that limit does not hold. The step at fault is
the proof by contradiction of that limit
(Ref.~\cite{Flores2017}, the theorem and its proof,
pp.~3608--3609). It reverses time at the degenerate equilibrium and
assigns the reversed orbits the slope \(\mu_+=\alpha_F\) of the
strong unstable direction. That slope is correct for the strong
unstable manifold itself, but at a saddle-node only two orbits carry
it. Every other orbit of the parabolic sector leaves tangent to the
center direction, with slope zero at every \(\alpha_F\), and the
reversed from-rest orbit is one of these: it follows the slow
manifold \(q\simeq g/(2\zeta)\), whose height near the fold is
\((\partial^{2}g/\partial\xi^{2})(\xi_f-\xi)^{2}/(4\zeta)\) and
therefore falls as \(\alpha_F\) grows, reversing the ordering in
\(\alpha_F\) that the argument requires. The defect is thus the
identification of the from-rest orbit with the one distinguished
branch, made for all \(\alpha_F\) at once, and not the value of any
eigenvalue. The same step is repeated in
Ref.~\cite{CassaniMiyasita2024}, which extends the analysis to a
general singular nonlinearity and proves the same limit
independently: its Lemma~14 (p.~21, with the slope assignment on
p.~22, resting on its Proposition~5, p.~18) gives the backward orbits
the slope \(\eta_+(\alpha_F)=\alpha_F\), taken from the
hyperbolic-saddle expression
\(\eta_\pm=\tfrac12(\alpha_F\pm\sqrt{\alpha_F^{2}-4H'})\) evaluated
where \(H'=0\), and its Theorem~6 rests on that lemma.

The tangency is checkable directly. Integrating from rest at
\(\alpha=\alpha_c\), \(\beta=0\), the ratio
\(\dot\xi/(\xi_f-\xi)\) has fallen to \(3\times10^{-4}\) at the end
of the integration, against the strong-direction value
\(2\zeta=2\), while
\(\dot\xi/(\xi_f-\xi)^{2}\) tends to \(1.1248\), \(1.3409\), and
\(2.5009\) at \(\zeta=1\), \(0.8389\), and \(0.45\), against the
center-manifold value
\((\partial^{2}g/\partial\xi^{2})(\xi_f)/(4\zeta)=9/(8\zeta)\), which
is \(1.125\), \(1.3411\), and \(2.5\). Nothing else in
either work is touched: the thresholds, their monotonicity and
continuity, and the value \(1/8\) at zero damping all agree with our
computations.

\subsection{\label{sec:app-dyn-exponent}Collapse exponent at the
movable pole}

\begin{theorem}[Collapse exponent]\label{thm:exponent}
Let \(\beta>0\) and let the orbit collapse at a finite \(\tau_*\).
Then
\begin{equation}
\label{eq:app-exponent}
\begin{split}
1-\xi(\tau)&=C\,(\tau_*-\tau)^{2/5}
 \Bigl[1+O\bigl((\tau_*-\tau)^{4/5}\bigr)\Bigr],\\
C&=\left(\frac{25\beta}{6}\right)^{1/5},
\end{split}
\end{equation}
for every \(\alpha\ge0\) and \(\zeta\ge0\). For \(\beta=0\) and
\(\alpha>0\) the exponent is \(2/3\), with
\(C=[\tfrac32\sqrt{2\alpha}\,]^{2/3}\).
\end{theorem}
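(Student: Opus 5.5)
We work in the energy picture along the final ascending stretch of the orbit, where \(\dot\xi>0\), and use \(p(\xi)=\dot\xi^{2}/2\) from Eq.~\eqref{eq:app-pode} with the gap \(u=1-\xi\) as the independent variable. The first step is to show that near collapse \(\dot\xi>0\) on some final interval. Since \(\xi\to1\) and \(g\to+\infty\) there, \(\ddot\xi+2\zeta\dot\xi=g\) is eventually large and positive. This forces \(\dot\xi\) to become and stay positive before \(\tau_*\), so the scalar reduction of Lemma~\ref{lem:pode} applies on \(u\in(0,u_0]\).

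The second step, the main one, is an asymptotic for \(p\). Integrating Eq.~\eqref{eq:app-pode} from \(u_0\) down to \(u\) gives
\[
p(u)=p(u_0)+\int_u^{u_0}\Bigl(\frac{\alpha}{s^{2}}+\frac{\beta}{s^{4}}-(1-s)\Bigr)ds-2\zeta\int_u^{u_0}\sqrt{2p(s)}\,ds .
\]
The conservative part equals \(\beta/(3u^{3})+\alpha/u+O(1)\). To bound the damping integral we use \(p\le\) conservative part \(+\,O(1)\), which holds because that integral is nonnegative. Then \(\sqrt{2p}=O(u^{-3/2})\), and its integral is \(O(u^{-1/2})\). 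Hence
\[
p=\frac{\beta}{3u^{3}}\Bigl[1+O(u^{2})+O(u^{5/2})\Bigr]
\]
for \(\beta>0\), where the \(\alpha/u\) term gives the relative \(O(u^{2})\) correction. This is the estimate \(p=[\beta/(3u^{3})+\alpha/u][1+o(1)]\) that Lemma~\ref{lem:pode} quotes. Since it uses only that \(u\) decreases to zero, it is not circular.

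The third step inverts to get time. From \(d\tau=du/\sqrt{2p}\) we have \(\sqrt{2p}=\sqrt{2\beta/3}\,u^{-3/2}[1+O(u^{2})]\). This gives
\[
\tau_*-\tau=\sqrt{\tfrac{3}{2\beta}}\int_0^{u}s^{3/2}\bigl[1+O(s^{2})\bigr]ds=\sqrt{\tfrac{3}{2\beta}}\,\tfrac25\,u^{5/2}\bigl[1+O(u^{2})\bigr].
\]
Solving for \(u\) yields \(u^{5/2}=\tfrac52\sqrt{2\beta/3}\,(\tau_*-\tau)[1+O(u^{2})]\), hence \(u^{5}=\tfrac{25}{4}\cdot\tfrac{2\beta}{3}(\tau_*-\tau)^{2}=\tfrac{25\beta}{6}(\tau_*-\tau)^{2}\). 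This gives \(C=(25\beta/6)^{1/5}\) and the exponent \(2/5\). For the error term, \(u^{2}\sim(\tau_*-\tau)^{4/5}\) reproduces the stated \(O((\tau_*-\tau)^{4/5})\).

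The main obstacle is making sure the damping contribution is no larger than \(O(u^{2})\) relative to the leading term. The crude bound gives a relative \(O(u^{5/2})\), which is smaller, so it suffices. The real danger is the \(p(u_0)+O(1)\) constant, but it is relatively \(O(u^{3})\) and is also harmless.

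For \(\beta=0\), \(\alpha>0\), the same steps give \(p=\alpha/u+O(1)+O(u^{-1/2})\), so the relative error is \(O(u^{1/2})\). Then \(\tau_*-\tau=\int_0^u\sqrt{s/(2\alpha)}\,ds=\tfrac{2}{3}u^{3/2}/\sqrt{2\alpha}\), hence \(u=[\tfrac32\sqrt{2\alpha}\,(\tau_*-\tau)]^{2/3}\). This gives exponent \(2/3\) and \(C=[\tfrac32\sqrt{2\alpha}\,]^{2/3}\); only the leading term is claimed in this case.
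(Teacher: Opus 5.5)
Your proposal is correct and follows the paper's proof essentially step for step. Dropping the nonnegative damping integral to bound $p$ from above, then feeding that bound back in to get the $O(u^{-1/2})$ correction, is exactly the paper's argument with the increasing function $S(u)=W(u)-\beta/(3u^{3})-\alpha/u$. Your inversion via $\tau_*-\tau=\int_0^u ds/\sqrt{2p}$ is equivalent to its integration of $d(u^{5/2})/d\tau$. One small remark: at $\beta=0$ the damping integral is actually $O(1)$, since $\sqrt{2p}=O(u^{-1/2})$ is integrable, so $p=\alpha/u+O(1)$ as the paper states; your looser $O(u^{-1/2})$ bound still suffices for the leading-order claim.
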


\noindent\textit{Proof.} In the gap variable, Eq.~\eqref{eq:nd} is
\(\ddot u=-2\zeta\dot u+(1-u)-\alpha u^{-2}-\beta u^{-4}\). Set
\(W(u)=\dot u^{2}/2\), so that \(\dot u=-\sqrt{2W}\) and
\(dW/du=\ddot u\). Subtracting the two singular primitives, let
\(S(u)=W(u)-\beta/(3u^{3})-\alpha/u\); the singular terms cancel
exactly and
\begin{equation}
\label{eq:app-S}
\frac{dS}{du}=2\zeta\sqrt{2W}+(1-u)>0 ,\qquad 0<u<1 .
\end{equation}
Fix \(u_1\) small. Since \(S\) increases in \(u\), \(S(u)\le S(u_1)\)
for \(u\le u_1\), which bounds \(W\) above by
\(\overline W(u)=\beta/(3u^{3})+\alpha/u+S(u_1)\). Inserting that
bound into Eq.~\eqref{eq:app-S} gives
\(dS/du\le2\zeta K u^{-3/2}+1\) on \((0,u_1]\) with
\(K=\sqrt{2\beta/3}\,[1+O(u_1^{2})]\), and integrating from \(u\) to
\(u_1\) gives \(S(u)\ge S(u_1)-4\zeta K u^{-1/2}-u_1\). The two
bounds together give
\begin{equation}
\label{eq:app-W}
W(u)=\frac{\beta}{3u^{3}}\bigl[1+O(u^{2})\bigr] ,
\end{equation}
the \(O(u^{2})\) coming from \(\alpha/u\), with the damping
contributing only \(O(u^{5/2})\). Consequently
\(\tau_*-\tau=\int_0^{u}ds/\sqrt{2W(s)}=O(u^{5/2})\) is finite, and
\begin{equation}
\label{eq:app-u52}
\frac{d}{d\tau}\,u^{5/2}=\tfrac52 u^{3/2}\dot u
=-\tfrac52\sqrt{\tfrac{2\beta}{3}}\,\bigl[1+O(u^{2})\bigr] .
\end{equation}
Integrating Eq.~\eqref{eq:app-u52} from \(\tau\) to \(\tau_*\) and
using \(u(\tau_*)=0\) yields
\(u^{5/2}=\tfrac52\sqrt{2\beta/3}\,(\tau_*-\tau)[1+O(u^{2})]\), which
is Eq.~\eqref{eq:app-exponent} with
\(C=[\tfrac52\sqrt{2\beta/3}\,]^{2/5}=(25\beta/6)^{1/5}\) and
relative error \(O(u^{2})=O((\tau_*-\tau)^{4/5})\). For \(\beta=0\)
the same argument with \(W=\alpha/u+O(1)\) gives the exponent
\(2/3\).\hfill\(\square\)

\begin{corollary}[Velocity and rapidity]\label{cor:rapidity}
Under the hypotheses of Theorem~\ref{thm:exponent},
\begin{equation}
\label{eq:app-rates}
\begin{split}
\dot\xi&=\tfrac25 C\,(\tau_*-\tau)^{-3/5}\bigl[1+o(1)\bigr],\\
\theta&=-\tfrac25\ln(\tau_*-\tau)-\ln C+o(1),\\
\dot\theta&=\frac{2/5}{\tau_*-\tau}\bigl[1+o(1)\bigr].
\end{split}
\end{equation}
\end{corollary}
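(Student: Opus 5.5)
The three rates follow from Theorem~\ref{thm:exponent} and from the energy estimate in its proof, Eq.~\eqref{eq:app-W}. Work in the gap variable $u=1-\xi=e^{-\theta}$ and write $s=\tau_*-\tau$. For the velocity I would not differentiate the asymptotic expansion \eqref{eq:app-exponent}, since an $O(\cdot)$ remainder need not survive differentiation. Instead I would use the exact identity $\dot\xi=-\dot u=\sqrt{2W(u)}$ directly. By Eq.~\eqref{eq:app-W},
\begin{equation*}
\dot\xi=\sqrt{\tfrac{2\beta}{3}}\,u^{-3/2}\bigl[1+O(u^{2})\bigr].
\end{equation*}
Substituting $u=Cs^{2/5}[1+O(s^{4/5})]$ gives $\dot\xi=\sqrt{2\beta/3}\,C^{-3/2}s^{-3/5}[1+o(1)]$. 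It remains to check the constant, namely that $\sqrt{2\beta/3}\,C^{-3/2}=\tfrac25C$. This is equivalent to $C^{5/2}=\tfrac52\sqrt{2\beta/3}$, which is exactly how $C$ was obtained in the proof of Theorem~\ref{thm:exponent}. The first line of Eq.~\eqref{eq:app-rates} then follows.

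For the rapidity, take logarithms of Eq.~\eqref{eq:app-exponent}:
\begin{equation*}
\theta=-\ln u=-\ln C-\tfrac25\ln s-\ln\bigl[1+O(s^{4/5})\bigr].
\end{equation*}
The last term is $O(s^{4/5})=o(1)$, which gives the second line. For $\dot\theta$, again avoid differentiating the expansion. Use the chain rule $\dot\theta=-\dot u/u=\dot\xi/(1-\xi)$ and combine the two proven asymptotics, the velocity law just derived and Theorem~\ref{thm:exponent}. This gives
\begin{equation*}
\dot\theta=\frac{\tfrac25Cs^{-3/5}[1+o(1)]}{Cs^{2/5}[1+o(1)]}=\frac{2/5}{s}\bigl[1+o(1)\bigr].
\end{equation*}

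The only delicate point is this refusal to differentiate asymptotic expansions term by term. Every derivative is expressed through the exact first integral $\dot u^{2}/2=W(u)$, whose leading behavior with a relative error $O(u^{2})$ is already established in Eq.~\eqref{eq:app-W}, for all $\alpha\ge0$ and $\zeta\ge0$. With that, the remaining steps are algebraic.

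In the case $\beta=0$, the same route with $W=\alpha/u+O(1)$ would give the analogous rates with exponent $2/3$. The statement only requires the case $\beta>0$.
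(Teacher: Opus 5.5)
Your proof is correct. For the second and third lines it is the paper's argument: take $-\ln$ of Eq.~\eqref{eq:app-exponent}, then form $\dot\theta=\dot\xi/(1-\xi)$ so that the powers of $\tau_*-\tau$ cancel.

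The one real difference is the velocity line. The paper obtains it by differentiating Eq.~\eqref{eq:app-exponent} directly. You instead read $\dot\xi=\sqrt{2W(u)}$ off the first integral, insert Eq.~\eqref{eq:app-W}, and check that $\sqrt{2\beta/3}\,C^{-3/2}=\tfrac25 C$ is just the definition $C^{5/2}=\tfrac52\sqrt{2\beta/3}$.

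Your route is the more careful one. A remainder known only to be $O((\tau_*-\tau)^{4/5})$ cannot be differentiated term by term. The paper's step is implicitly backed by Eq.~\eqref{eq:app-u52}, which is exactly your estimate $\dot u=-\sqrt{2\beta/3}\,u^{-3/2}[1+O(u^{2})]$, so the two arguments rest on the same fact. Your version makes that dependence explicit. It also yields sharper remainders, $O((\tau_*-\tau)^{4/5})$ in place of each $o(1)$.
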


\noindent\textit{Proof.} Differentiating Eq.~\eqref{eq:app-exponent} gives
the first line; the second is \(\theta=-\ln(1-\xi)\) evaluated on it, and the
third follows from \(\dot\theta=\dot\xi/(1-\xi)\), the powers of
\((\tau_*-\tau)\) cancelling to leave \(2/5\).\hfill\(\square\)

The velocity diverges at the pole while the rapidity \(\theta\) of
Eq.~\eqref{eq:rapidity} grows only logarithmically, which is why the
network of Sec.~\ref{sec:pinn} is trained in \(\theta\).
Corollary~\ref{cor:rapidity} also fixes the error budget. Since
\(\xi=1-e^{-\theta}\) and \(\dot\xi=e^{-\theta}\dot\theta\), a
perturbation of the network output propagates as
\(|\delta\xi|=e^{-\theta}|\delta\theta|\) and
\(|\delta\dot\xi|\le e^{-\theta}(|\delta\dot\theta|
+\dot\theta\,|\delta\theta|)\), so the velocity error carries the
extra factor \(\dot\theta=(2/5)/(\tau_*-\tau)\). On the training
window of Fig.~\ref{fig:trajectories}(c), \(T=2.5698\) against
\(\tau_*=2.62257\), the asymptotic form gives \(7.58\) and the
measured \(\dot\theta(T)\) is \(7.40\), against the measured
error ratio
\(E_\infty(\dot\xi)/E_\infty(\xi)=3.38\times10^{-2}/
2.83\times10^{-3}=11.9\).

Numerical check. Fitting \(\ln u\) against \(\ln(\tau_*-\tau)\) on the
DOP853 reference carried to the barrier
(\texttt{traj\_pullin\_ref\_full.npz}, \(\alpha=0.2\), \(\beta=0.1\),
\(\zeta=0.5\), \(\tau_*=2.6225672\)) gives the exponent \(0.400456\),
\(0.400276\), and \(0.400081\) on the windows
\(\tau_*-\tau\in[10^{-3},10^{-2}]\), \([10^{-4},10^{-3}]\), and
\([2\times10^{-5},10^{-4}]\), against \(2/5\), and the amplitude
\(0.840155\) on the last window against \(C=0.8393783\). An
independent integration at \(\beta=0\), \(\alpha=0.3\) returns
\(0.666626\) against \(2/3\).

Both exponents are instances of one rule. For a force
\(\lambda\,u^{-m}\) with \(m>1\), the balance of
Theorem~\ref{thm:exponent} gives
\begin{equation}
\label{eq:app-mrate}
u\sim\left[\frac{(m+1)^{2}\lambda}{2(m-1)}\right]^{1/(m+1)}
 (\tau_*-\tau)^{2/(m+1)} .
\end{equation}
At \(m=2\) this is the rate \(2/3\) with amplitude
\((9\alpha/2)^{1/3}\), the quenching rate known for the purely
electrostatic problem~\cite{KLNT2015}; at
\(m=4\) it is the rate \(2/5\) with amplitude
\((25\beta/6)^{1/5}\). We have not found the Casimir case in the
quenching literature, and state it here as new.

\subsection{\label{sec:app-dyn-slowing}Time scale near threshold}

Two different laws govern the divergence of the pull-in time,
separated by the damping at which the from-rest orbit stops
overshooting the fold point. Fix \(\beta\), set
\(\alpha=\alpha_c(\beta)\), and let \(q_f(\zeta)\) be the speed at
\(\xi_f\) of the from-rest orbit. By Lemma~\ref{lem:pode}, \(q_f\) is
nonincreasing in \(\zeta\), and by Theorem~\ref{thm:barrier} it
vanishes once \(\zeta\ge2^{-1/4}\). Define
\(\zeta_c(\beta)=\inf\{\zeta:q_f(\zeta)=0\}\), so that
\(\zeta_c(\beta)\le2^{-1/4}\) is finite.

\begin{proposition}[Divergence laws]\label{prop:slowing}
Let \(\zeta>0\) and let \(u_f\) be the fold gap.
\begin{enumerate}
\item For \(\zeta>\zeta_c(\beta)\),
\(\alpha_{\mathrm{dyn}}=\alpha_c\), the orbit enters the bottleneck
along the center manifold, and as \(\alpha\to\alpha_c^{+}\),
\begin{equation}
\label{eq:app-K}
\begin{split}
\tau_{\mathrm{PI}}&=K\,(\alpha-\alpha_c)^{-1/2}\bigl[1+o(1)\bigr],\\
K&=2\pi\zeta\,u_f^{2}\sqrt{\frac{2}{15u_f-8}} .
\end{split}
\end{equation}
\item For \(\zeta<\zeta_c(\beta)\),
\(\alpha_{\mathrm{dyn}}<\alpha_c\), the equilibrium reached at
\(\alpha=\alpha_{\mathrm{dyn}}\) is a hyperbolic saddle, and as
\(\alpha\to\alpha_{\mathrm{dyn}}^{+}\),
\begin{equation}
\label{eq:app-log}
\begin{split}
\tau_{\mathrm{PI}}
 &=-\frac{1}{\lambda_+}\ln\bigl(\alpha-\alpha_{\mathrm{dyn}}\bigr)+O(1),\\
\lambda_+&=-\zeta+\sqrt{\zeta^{2}+g'(\xi_s)} .
\end{split}
\end{equation}
\end{enumerate}
\end{proposition}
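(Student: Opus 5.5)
The plan treats the two regimes separately. Each reduces to a standard local passage-time estimate near an equilibrium, glued to an \(O(1)\) transit time obtained from continuous dependence on \(\alpha\) over compact time intervals away from that equilibrium. Throughout, I work in the planar system \(\dot\xi=v\), \(\dot v=-2\zeta v+g(\xi;\alpha,\beta)\) of Lemma~\ref{lem:reduction}. Collapse times are measured to \(\xi=1\). Theorem~\ref{thm:exponent} shows that the final approach to the pole contributes a time that is finite and continuous in \(\alpha\), so it does not affect the divergent part.

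\emph{Item 1.} For \(\zeta>\zeta_c(\beta)\) we have \(q_f=0\) at \(\alpha=\alpha_c\), so the from-rest orbit converges to the degenerate equilibrium \(\xi_f\). The first step is to check that it does so inside the parabolic sector, tangent to the center direction \((1,0)\) rather than along the strong stable direction. This is the tangency discussed after Theorem~\ref{thm:barrier}. I would establish it via the slow-manifold relation \(v\simeq g/(2\zeta)\), and it is also what the numerical ratio \(\dot\xi/(\xi_f-\xi)^2\to9/(8\zeta)\) confirms. Theorem~\ref{thm:static}, item 3, supplies the nondegeneracy needed to reduce to a smooth one-dimensional center manifold through \((\xi_f,0)\), depending smoothly on \(\varepsilon=\alpha-\alpha_c\). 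On that manifold the reduced flow is
\[
\dot x=\frac{1}{2\zeta}\Bigl[\frac{\varepsilon}{u_f^{2}}+\tfrac12\,\frac{15u_f-8}{u_f^{2}}\,x^{2}\Bigr]\bigl[1+O(|x|+|\varepsilon|)\bigr],
\]
where \(x=\xi-\xi_f\). The prefactor \(1/(2\zeta)\) comes from \(\lambda=0\) and the transverse eigenvalue \(-2\zeta\). For \(\varepsilon>0\), the passage time through a fixed neighborhood is \(\int dx/(a\varepsilon+bx^{2})\cdot2\zeta=2\pi\zeta/\sqrt{ab\varepsilon}\,[1+o(1)]\). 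Substituting \(a=u_f^{-2}\) and \(b=(15u_f-8)/(2u_f^{2})\) gives exactly the constant \(K\) of Eq.~\eqref{eq:app-K}. By continuous dependence, the orbit for small \(\varepsilon>0\) enters and exits this neighborhood at bounded times, which yields the stated law. The identity \(\alpha_{\mathrm{dyn}}=\alpha_c\) in this regime follows from \(q_f=0\) and the argument of Theorem~\ref{thm:barrier}: the turning point precedes the saddle, and Lemma~\ref{lem:pode} applies.

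\emph{Item 2.} For \(\zeta<\zeta_c\) the speed at the fold satisfies \(q_f>0\), so at \(\alpha=\alpha_c\) the orbit overshoots \(\xi_f\) and collapses. Continuity in \(\alpha\) then gives collapse for \(\alpha\) slightly below \(\alpha_c\), hence \(\alpha_{\mathrm{dyn}}<\alpha_c\). At \(\alpha=\alpha_{\mathrm{dyn}}\) there are two equilibria by Theorem~\ref{thm:static}. Using Lemma~\ref{lem:pode}, the threshold orbit can neither collapse (collapse is an open condition in \(\alpha\)) nor turn back strictly before the saddle (also open). Its \(p\) must therefore stay positive and tend to \(0\) at \(\xi_s\), so the orbit lies on the stable manifold of the hyperbolic saddle, whose eigenvalues are \(\lambda_\pm=-\zeta\pm\sqrt{\zeta^2+g'(\xi_s)}\). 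For \(\alpha=\alpha_{\mathrm{dyn}}+\varepsilon\), I would apply a \(C^1\) linearization at the saddle (planar Hartman theorem), smooth in the parameter. An orbit entering at signed distance \(\delta\) from the stable manifold then exits along the unstable branch after time \(-\lambda_+^{-1}\ln\delta+O(1)\).

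The main obstacle is transversality: I must show that \(\delta=c\,\varepsilon+o(\varepsilon)\) with \(c>0\). I would prove this from the variational equation of Eq.~\eqref{eq:app-pode}. The sensitivity \(\partial p/\partial\alpha\) obeys a linear equation with forcing \(u^{-2}>0\) and zero initial data. Along the threshold orbit on a compact range \([0,\xi_s-\eta]\) it is strictly positive, because the coefficient \(-2\zeta/\sqrt{2p}\) is bounded there. This strict positivity of \(\partial p/\partial\alpha\) at the entry section gives a nonzero splitting derivative, and \(\ln\delta=\ln\varepsilon+O(1)\) then completes Eq.~\eqref{eq:app-log}.
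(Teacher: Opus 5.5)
Your architecture matches the paper's. For item~1 you use an overdamped bottleneck on the center manifold; your reduction $\dot x=g/(2\zeta)\,[1+O(|x|+\varepsilon)]$ turns the paper's dominant-balance derivation into an argument and gives the same $K$. For item~2 you use passage near a hyperbolic saddle.

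The paper explicitly leaves unproved that the splitting has a nonzero $\alpha$-derivative; it fixes the order $k=1$ only numerically. Your transversality step is therefore where you attempt more than the paper, and as written it does not close. On the fixed section $\xi=\xi_0:=\xi_s(\alpha_{\mathrm{dyn}})-\eta$ the splitting is $\delta(\alpha)=p(\xi_0;\alpha)-p_{ws}(\xi_0;\alpha)$, with $p_{ws}$ the incoming branch of the saddle's stable manifold written as a solution of Eq.~\eqref{eq:app-pode}. That branch moves with $\alpha$, because both the force and the saddle move. Hence $\delta'(\alpha_{\mathrm{dyn}})=\partial_\alpha p-\partial_\alpha p_{ws}$, and strict positivity of the first term alone proves nothing: $W^{s}$ could rise at the same rate.

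The missing fact is $\partial_\alpha p_{ws}\le0$, and the comparison principle of Lemma~\ref{lem:pode} supplies it. Take $\alpha_2>\alpha_1$ and start the $\alpha_2$-solution at $(\xi_0,p_{ws}(\xi_0;\alpha_1))$. It stays above $p_{ws}(\cdot;\alpha_1)$, which is positive up to $\xi_s(\alpha_1)$. Meanwhile $\xi_s(\alpha_2)<\xi_s(\alpha_1)$, since $du_s/d\alpha=u_s^{2}/\phi'(u_s)>0$. The solution therefore passes the new saddle with positive speed and collapses, so that starting point lies strictly above $W^{s}(\alpha_2)$. Combined with $C^{1}$ dependence of $p_{ws}$ on $\alpha$ (stable manifold theorem with parameters), this gives $\delta'(\alpha_{\mathrm{dyn}})\ge\partial_\alpha p(\xi_0)>0$. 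The undamped case shows the second term is not negligible: there $\partial_\alpha p=1/u-1$ and $\partial_\alpha p_{ws}=1/u-1/u_s$, and their difference $(1-u_s)/u_s$ is Eq.~\eqref{eq:app-envelope} up to sign.

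There are also some smaller points.

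\emph{The variational coefficient.} The coefficient $2\zeta/\sqrt{2p}$ is not bounded on $[0,\xi_s-\eta]$: $p(0)=0$ and $p\simeq(\alpha+\beta)\xi$, so it grows like $\xi^{-1/2}$. The singularity is integrable, so $\partial_\alpha p(\xi)=\int_0^{\xi}u(s)^{-2}\exp\bigl[-\int_s^{\xi}2\zeta\,(2p)^{-1/2}\,dr\bigr]\,ds>0$ survives. However, differentiability in $\alpha$ near the release point should come from the smooth planar flow, since Eq.~\eqref{eq:app-pode} is not Lipschitz at $p=0$.

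\emph{The threshold orbit.} Your list of excluded fates for the threshold orbit should also contain monotone convergence to the stable node. That fate is likewise open, because the node is a hyperbolic sink.

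\emph{Tangency in item~1.} Tangency to the center direction is what produces the full $K$: an orbit arriving along the strong stable direction would join the center manifold in the middle of the bottleneck rather than at its entrance. The slow-manifold relation describes the approach once it is tangent, but does not establish it; the paper likewise takes this as the content of $\zeta>\zeta_c$. An ordering argument proves it. Integrating $dq/d(\xi_f-\xi)=2\zeta-g_c/q$ backward from $q\simeq2\zeta(\xi_f-\xi)$ shows that the incoming strong-stable branch rises strictly with $\zeta$. The from-rest orbit is nonincreasing in $\zeta$ by Lemma~\ref{lem:pode}. So once that branch has passed through the release point (the paper's geometric reading of $\zeta_c$), the from-rest orbit lies strictly below it, in the parabolic sector.

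With these repairs your route proves both items. The paper, by contrast, gives item~1 only as a dominant balance and item~2 only up to the transversality it leaves open.
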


Item 1 rests on a dominant balance and not on a proof; only item 2 is
proved. \noindent\textit{Derivation of item 1.} Put \(y=\xi-\xi_f\) and
\(\mu=\alpha-\alpha_c\), so that by Theorem~\ref{thm:static}
\(g=a\mu+\tfrac12 g_2 y^{2}+O(y^{3},\mu y)\) with \(a=u_f^{-2}\) and
\(g_2=(15u_f-8)/u_f^{2}\). Rescale \(y=\mu^{1/2}Y\),
\(\tau=\mu^{-1/2}s\). The inertial term is then \(O(\mu^{3/2})\)
while damping and force are \(O(\mu)\), so at fixed \(\zeta>0\) the
matched leading-order balance is
\(2\zeta\,dY/ds=a+\tfrac12 g_2Y^{2}+O(\mu^{1/2})\). Its passage time
from \(Y=-\infty\) to \(Y=+\infty\) is
\(2\zeta\int dY/(a+g_2Y^{2}/2)=2\pi\zeta\sqrt{2/(ag_2)}\), and
restoring \(\tau=\mu^{-1/2}s\) gives Eq.~\eqref{eq:app-K}. The
approach to the bottleneck and the descent beyond it take \(O(1)\)
and are subleading. The hypothesis \(\zeta>\zeta_c\) is what places
the incoming orbit on the center manifold, so that it enters the
bottleneck with velocity \(o(1)\) rather than \(O(1)\). The same
condition gives \(\alpha_{\mathrm{dyn}}=\alpha_c\), since at
\(\alpha=\alpha_c\) the orbit converges to \(\xi_f\) instead of
crossing it, while every \(\alpha>\alpha_c\) collapses by
Theorem~\ref{thm:bracket}, and Corollary~\ref{cor:mono} then excludes
all \(\alpha<\alpha_c\).

\noindent\textit{Proof of item 2.} Here \(q_f>0\) means the orbit
crosses \(\xi_f\) at
\(\alpha=\alpha_c\), so \(\alpha_{\mathrm{dyn}}<\alpha_c\) and the
equilibria at \(\alpha=\alpha_{\mathrm{dyn}}\) are nondegenerate. By
Lemma~\ref{lem:reduction} the one the orbit reaches is a hyperbolic
saddle with eigenvalues
\(\lambda_\pm=-\zeta\pm\sqrt{\zeta^{2}+g'}\). The orbit at
\(\alpha=\alpha_{\mathrm{dyn}}\) lies on the stable manifold, and its
distance \(d\) from that manifold on a fixed transversal is
\(C^{1}\) in \(\alpha\) by smooth dependence on parameters. The time
spent in a fixed neighborhood of a hyperbolic saddle at distance
\(d\) is \(-\lambda_+^{-1}\ln d+O(1)\), and the rest of the orbit
takes \(O(1)\).\hfill\(\square\)

Equation~\eqref{eq:app-log} needs in addition that \(d\) have a
nonvanishing \(\alpha\) derivative at
\(\alpha=\alpha_{\mathrm{dyn}}\), which we do not prove. Were it to
vanish to order \(k\), the coefficient would be \(-k/\lambda_+\)
instead. The measurement reported below, \(-1.361986\) against
\(-1/\lambda_+=-1.361981\), fixes \(k=1\) numerically.

Numerical check. Two constructions that involve no integration
horizon locate \(\zeta_c\), which Theorem~\ref{thm:barrier} has
already shown to be finite. Integrating \(dq/d\xi=g/q-2\zeta\) at
\(\alpha=\alpha_c\), \(\beta=0\) gives
\(q_f=1.382\times10^{-1}\), \(3.895\times10^{-2}\), and
\(2.027\times10^{-3}\) at \(\zeta=0.100\), \(0.300\), and \(0.390\),
and \(q_f=0\) for \(\zeta\ge0.396\); in the latter cases \(q\) reaches
zero along the center manifold, at \(\xi_f-\xi\approx6\times10^{-8}\)
where \(q\approx g/(2\zeta)\), and not by turning around.
Independently, continuing the strong stable manifold of the fold's
saddle-node backward to \(\xi=0\) and asking on which side of it the
release point lies gives \(\zeta_c=0.395919\) at \(\beta=0\). That
construction also supplies a geometric reading of \(\zeta_c\): it is
the damping at which the strong stable manifold of the fold passes
through the release point. The band closes quadratically, the same
backward construction returning widths at \(\beta=0\) of
\(\alpha_c-\alpha_{\mathrm{dyn}}=1.3482\times10^{-2}\),
\(1.5435\times10^{-3}\), \(3.618\times10^{-4}\), and
\(1.478\times10^{-7}\) at \(\zeta=0.1\), \(0.3\), \(0.35\), and
\(0.395\), consistent with
\(\alpha_c-\alpha_{\mathrm{dyn}}\approx0.173\,(\zeta_c-\zeta)^{2}\).
The first two agree with the bisection of
Sec.~\ref{sec:app-dyn-bracket} to four digits. The bound
\(\zeta_c\le2^{-1/4}\) is therefore sufficient and not sharp.

At \(\zeta=1.0\) and \(\zeta=0.7\) with \(\beta=0\), and at
\(\zeta=0.7\) with \(\beta=0.03\), all above \(\zeta_c\), the ratio
\(\tau_{\mathrm{PI}}\sqrt{\alpha-\alpha_c}/K\) reaches \(0.99985\),
\(0.99982\), and \(0.99945\) at \(\alpha-\alpha_c=10^{-8}\),
\(10^{-8}\), and \(10^{-7}\), and the local log-log slope reaches
\(-0.50014\).

At \(\zeta=0.1\), which lies below \(\zeta_c\), the second law
applies. A direct scan at \(\beta=0\) gives
\(\alpha_{\mathrm{dyn}}=0.13466609\) and a saddle at
\(\xi_s=0.4573962\) with \(g'(\xi_s)=0.685930\) and
\(\lambda_+=0.7342244\). The measured
\(d\tau_{\mathrm{PI}}/d\ln(\alpha-\alpha_{\mathrm{dyn}})\) is
\(-1.361986\) against \(-1/\lambda_+=-1.361981\). The archived phase
diagram of Fig.~\ref{fig:phase_diagram} carries the same signature.
Fitting the \(14\) pull-in cells nearest the boundary in each of
eight rows of \texttt{phase\_diagram.npz}, with
\(\alpha_{\mathrm{dyn}}\) free, the logarithmic model
\(\tau_{\mathrm{PI}}=A\ln(\alpha-\alpha_{\mathrm{dyn}})+B\) has the
smaller residual in every row and returns \(A\in[-1.435,-1.264]\),
while the power-law model
\(\ln\tau_{\mathrm{PI}}=p\ln(\alpha-\alpha_{\mathrm{dyn}})+c\)
returns \(p\in[-0.221,-0.193]\), far from \(-1/2\). At \(\beta=0\)
the fitted \(\alpha_{\mathrm{dyn}}=0.134667\) matches the bisected
value to \(10^{-6}\). The reported maximum
\(\tau_{\mathrm{PI}}=15.75\) corresponds, through the fitted law, to
a distance \(5.7\times10^{-6}\) from the boundary, consistent with
the \(1.005\times10^{-3}\) grid spacing shared among \(182\) boundary
rows.

The pull-in time at \(\zeta=0.1\) therefore diverges at the dynamic
boundary as \(-\ln(\alpha-\alpha_{\mathrm{dyn}})\), through escape
from a hyperbolic saddle, and not as the \(-1/2\) power of a
saddle-node bottleneck. The \(-1/2\) law of Eq.~\eqref{eq:app-K}
governs the quasi-static regime \(\zeta>\zeta_c\), in which the two
boundaries coincide. This is why the \(\zeta=0.7\) run of
Sec.~\ref{sec:results}\,B finds the dynamic boundary on the static
fold: at that damping the coincidence is exact, and the residual
\(\Delta\alpha=-6.6\times10^{-5}\) reported there is the
resolution of the sweep and not a physical gap.

\subsection{\label{sec:app-dyn-thermal}The thermal shift is one-sided}

With the correction of Eq.~\eqref{eq:lifshitz-split} the force gains
a term \(\beta\gamma/(1-\xi)^{3}\) with \(\gamma=\kappa T\ge0\), so
the equilibrium condition becomes
\(\alpha u^{-2}+\beta\gamma u^{-3}+\beta u^{-4}=1-u\).

\begin{proposition}[Monotonicity in temperature]\label{prop:thermal}
Solve the equilibrium condition for \(\alpha\),
\begin{equation}
\label{eq:app-Agamma}
\begin{split}
A_\gamma(u)&=(1-u)\,u^{2}-\frac{\beta\gamma}{u}-\frac{\beta}{u^{2}} ,\\
\alpha_c(\beta;\gamma)&=\max_{0<u<1}A_\gamma(u) .
\end{split}
\end{equation}
Then \(\partial A_\gamma/\partial\gamma=-\beta/u\), so
\(\alpha_c(\beta;\gamma)\) is strictly decreasing in \(\gamma\) for
\(\beta>0\) and independent of \(\gamma\) for \(\beta=0\), with
\begin{equation}
\label{eq:app-dalpha}
\begin{split}
\frac{\partial\alpha_c}{\partial\gamma}&=-\frac{\beta}{u_f}\le0 ,\\
\frac{\Delta V_{\mathrm{PI}}}{V_{\mathrm{PI}}}
 &=-\frac{\beta\gamma}{2\,u_f\,\alpha_c(\beta)}+O(\gamma^{2}) ,
\end{split}
\end{equation}
\(u_f\) being the maximizer. At \(\alpha=0\) the Casimir ceiling is
\(\beta^{*}(\gamma)=\max_{0<u<1}(1-u)u^{4}/(1+\gamma u)\), which is
strictly decreasing in \(\gamma\) with
\(d\beta^{*}/d\gamma=-(1-u^{*})(u^{*})^{5}/(1+\gamma u^{*})^{2}\),
equal to \(-1024/15625=-0.0655360\) at \(\gamma=0\). Heating
therefore contracts the stable region and never enlarges it, and
leaves it unchanged when \(\beta=0\).
\end{proposition}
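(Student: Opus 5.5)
The argument is an envelope (Danskin) argument applied to the one-parameter family \(A_\gamma(u)\) of Eq.~\eqref{eq:app-Agamma}. Multiplying the equilibrium condition \(\alpha u^{-2}+\beta\gamma u^{-3}+\beta u^{-4}=1-u\) by \(u^{2}>0\) expresses every equilibrium as \(\alpha=A_\gamma(u)\). The stable region at fixed \((\beta,\gamma)\) is therefore \(\alpha\le\max_{0<u<1}A_\gamma\), by the same count of roots as in Theorem~\ref{thm:static}. That count needs \(A_\gamma\) to be unimodal on \((0,1)\). I expect this to hold for small \(\gamma\) by continuity from the \(\gamma=0\) analysis, since \(-\beta\gamma/u\) only adds a concave, increasing perturbation. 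I will not rely on unimodality for the monotonicity claim, however: defining \(\alpha_c(\beta;\gamma)\) as the maximum is enough.

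\textbf{Monotonicity.} Differentiating gives \(\partial A_\gamma/\partial\gamma=-\beta/u\), which is \(\le0\) pointwise and \(<0\) when \(\beta>0\).

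\emph{Strict decrease.} Take \(\gamma_1<\gamma_2\) and let \(u_2\) be a maximizer for \(\gamma_2\). It exists because \(A_\gamma\to-\infty\) as \(u\to0^{+}\) when \(\beta>0\), and \(A_\gamma(1)<0\). Then
\(\alpha_c(\gamma_2)=A_{\gamma_2}(u_2)<A_{\gamma_1}(u_2)\le\alpha_c(\gamma_1)\).

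\emph{The case \(\beta=0\).} Here \(A_\gamma\) does not depend on \(\gamma\), so \(\alpha_c=4/27\) is unchanged.

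\emph{Derivative.} Danskin's theorem gives \(\partial\alpha_c/\partial\gamma=\partial_\gamma A_\gamma(u_f)=-\beta/u_f\). It requires a unique maximizer that is nondegenerate. At \(\gamma=0\) this is Theorem~\ref{thm:static}: the maximum of \(A_0(u)=(1-u)u^{2}-\beta u^{-2}\) is the fold, and \(A_0''(u_f)\neq0\) follows from \(\partial^{2}g/\partial\xi^{2}\neq0\) in Eq.~\eqref{eq:app-nondeg}. The implicit function theorem then continues the maximizer smoothly in \(\gamma\).

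\textbf{Voltage shift.} Since \(V_{\mathrm{PI}}\propto\sqrt{\alpha_c}\), we have \(\Delta V_{\mathrm{PI}}/V_{\mathrm{PI}}=\tfrac12\,\Delta\alpha_c/\alpha_c+O(\gamma^{2})\). Substituting \(\partial\alpha_c/\partial\gamma=-\beta/u_f\) gives the second line of Eq.~\eqref{eq:app-dalpha}.

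\textbf{Casimir ceiling.} At \(\alpha=0\) the condition \(A_\gamma(u)\ge0\) solved for \(\beta\) reads \(\beta\le(1-u)u^{4}/(1+\gamma u)\), so \(\beta^{*}(\gamma)\) is the stated maximum. The integrand decreases strictly in \(\gamma\) for \(u>0\), so the same max-comparison argument gives strict decrease. The envelope derivative is \(\partial_\gamma[(1-u)u^{4}/(1+\gamma u)]=-(1-u)u^{5}/(1+\gamma u)^{2}\), evaluated at the maximizer \(u^{*}\). At \(\gamma=0\) the maximizer is \(u^{*}=4/5\) from Theorem~\ref{thm:static}, which gives \(-(1/5)(4/5)^{5}=-1024/15625\). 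The contraction statement then follows: the stable set \(\{\alpha\le\alpha_c(\beta;\gamma)\}\) shrinks as \(\gamma\) increases, and the ceiling moves inward.

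\textbf{Main obstacle.} The hardest step is justifying the derivative formulas for \(\gamma>0\), where a unique, nondegenerate maximizer is not automatic. I would first prove it for \(\gamma\) in a neighbourhood of \(0\) by the implicit function theorem as above. To cover all \(\gamma\), I would then try to show that \(u^{2}A_\gamma'(u)\) has the same sign structure as \(\psi\) in the proof of Theorem~\ref{thm:static}, which would give unimodality. If that fails, Danskin's one-sided derivatives still give \(\partial\alpha_c/\partial\gamma\le-\beta/\max u_f<0\), and that already suffices for the monotonicity claims.
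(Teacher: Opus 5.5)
Your proposal is correct and follows the paper's proof step for step. Both use linearity of the equilibrium condition in $\alpha$, the pointwise comparison $A_{\gamma_2}(u_2)<A_{\gamma_1}(u_2)\le\alpha_c(\beta;\gamma_1)$ for strict decrease, the envelope theorem at the interior maximizer for $-\beta/u_f$, the relation $V_{\mathrm{PI}}\propto\sqrt{\alpha_c}$ for the voltage shift, and the identical argument with $u^{*}=4/5$ for $\beta^{*}$; your extra care over uniqueness and nondegeneracy of the maximizer goes beyond what the paper writes, and the obstacle you flag clears directly, since for $\beta>0$ the function $u^{3}A_\gamma'(u)=2u^{4}-3u^{5}+\beta\gamma u+2\beta$ is positive at $u=0$, rises then falls, and is negative at $u=1$ whenever $\beta(\gamma+2)<1$, so $A_\gamma$ is unimodal with $A_\gamma''(u_f)<0$ for every physically relevant $\gamma$.
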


\noindent\textit{Proof.} The equilibrium condition is linear in
\(\alpha\), which gives Eq.~\eqref{eq:app-Agamma}. Equilibria exist
if and only if \(\alpha\le\max_u A_\gamma\), and the fold is the
equality. For \(\gamma_2>\gamma_1\) and \(\beta>0\), with \(u_2\) the
maximizer at \(\gamma_2\),
\(\alpha_c(\beta;\gamma_2)=A_{\gamma_2}(u_2)<A_{\gamma_1}(u_2)
\le\alpha_c(\beta;\gamma_1)\). The derivative in
Eq.~\eqref{eq:app-dalpha} is the envelope theorem applied at the
interior maximizer, and the voltage relation follows from
\(V_{\mathrm{PI}}\propto\sqrt{\alpha_c}\). The argument for
\(\beta^{*}\) is identical, the \(\gamma\) derivative of
\((1-u)u^{4}/(1+\gamma u)\) being negative on \((0,1)\); at
\(\gamma=0\) the maximizer is \(u^{*}=4/5\), giving
\(-(1/5)(4/5)^{5}\). For \(\beta=0\), \(A_\gamma\) does not contain
\(\gamma\).\hfill\(\square\)

Solving the two fold conditions with the extra term also gives the
closed form
\(\alpha_c(u;\gamma)=u^{2}(4-5u+3\gamma u-4\gamma u^{2})/(\gamma u+2)\)
and \(\beta_c(u;\gamma)=u^{4}(3u-2)/(\gamma u+2)\), which reduce to
Eq.~\eqref{eq:fold} at \(\gamma=0\).

Numerical check. At \(d=100\,\mathrm{nm}\), where
\(\kappa=1.016\times10^{-4}\,\mathrm{K^{-1}}\), the first-order
prediction \(\beta^{*}=256/3125-0.0655360\,\gamma\) gives
\(0.081254\) at \(100\,\mathrm{K}\) and \(0.079922\) at
\(300\,\mathrm{K}\), against \(0.081260\) and \(0.079971\) from the
exact maximization and \(0.0813\) and \(0.0799\) as reported in
Sec.~\ref{sec:results}\,C. At \(\beta=0.03\), where
\(u_f=0.735143\) and \(\alpha_c=0.0876272\),
Eq.~\eqref{eq:app-dalpha} gives
\(\Delta V_{\mathrm{PI}}/V_{\mathrm{PI}}=-0.237\%\) and
\(-0.710\%\), against \(-0.2368\%\) and \(-0.7120\%\) from the exact
fold and the \(-0.24\%\) and \(-0.71\%\) quoted in the main text.

\subsection{\label{sec:app-dyn-gronwall}An a posteriori bound from the
collocation residual}

The bound below certifies the network error from its own residual,
without reference to a converged solution. Write
\(z=(\theta,\dot\theta)\), let
\(F(z)=(\dot\theta,f(\theta,\dot\theta))\) with \(f\) the right-hand
side of Eq.~\eqref{eq:thetaode}, and let \(z_{\rm NN}\) be the network
pair, which satisfies \(\dot z_{\rm NN}=F(z_{\rm NN})+(0,r_\theta)\)
with \(r_\theta\) the residual of Eq.~\eqref{eq:residual-theta}.

\begin{theorem}[A posteriori error bound]\label{thm:gronwall}
Let \(T<\tau_*\), and suppose \(\theta\) and \(\theta_{\rm NN}\) both
take values in \([0,\Theta]\) and their derivatives in
\([-\Omega,\Omega]\) on \([0,T]\). Set
\begin{equation}
\label{eq:app-L}
L=\max\Bigl\{1,\;
 e^{\Theta}+3\alpha e^{3\Theta}+5\beta e^{5\Theta}
 +2(\Omega+\zeta)\Bigr\} .
\end{equation}
Then \(L\) is a Lipschitz constant for \(F\) in the norm
\(\|z\|=\max(|\theta|,|\dot\theta|)\), and if the initial conditions
are exact, as they are under the hard ansatz of
Eq.~\eqref{eq:hardic-theta},
\begin{equation}
\label{eq:app-gronwall}
\bigl\|\theta_{\rm NN}-\theta\bigr\|_{\infty,[0,T]}
\;\le\;\frac{\|r_\theta\|_{\infty,[0,T]}}{L}\,
 \bigl(e^{LT}-1\bigr) ,
\end{equation}
together with the following local form. Fix \(R>0\), let
\(L_R(\sigma)\) be Eq.~\eqref{eq:app-L} evaluated with \(\Theta\) and
\(\Omega\) replaced by \(\theta(\sigma)+R\) and
\(|\dot\theta(\sigma)|+R\), and put
\begin{equation}
\label{eq:app-gronwall-local}
B_R(\tau)=\|r_\theta\|_{\infty}\int_0^{\tau}
 \exp\!\left[\int_s^{\tau}L_R(\sigma)\,d\sigma\right]ds .
\end{equation}
If \(B_R\le R\) on \([0,\tau]\), then
\(|\theta_{\rm NN}-\theta|\le B_R\) there. Because
\(|d\xi/d\theta|=e^{-\theta}\le1\) for \(\theta\ge0\), the same
numbers bound the error in \(\xi\).
\end{theorem}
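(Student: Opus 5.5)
The plan is a standard Grönwall argument on the first-order system $\dot z=F(z)$, done in three steps: a Lipschitz bound for $F$ on the relevant box, the integral inequality for the error, and a continuation argument for the local form.

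\textbf{Lipschitz constant.} On the box $Q=[0,\Theta]\times[-\Omega,\Omega]$, which is convex, we bound the Jacobian of $F$ in the operator norm induced by $\|z\|=\max(|\theta|,|\dot\theta|)$. That norm is the maximum absolute row sum.
\begin{itemize}
\item The first row is $(0,1)$, with row sum $1$.
\item The second row is $(\partial_\theta f,\partial_{\dot\theta}f)$. From Eq.~\eqref{eq:thetaode},
\[
\partial_\theta f=-e^{\theta}+3\alpha e^{3\theta}+5\beta e^{5\theta},\qquad
\partial_{\dot\theta}f=2\dot\theta-2\zeta .
\]
Their absolute sum is at most $e^{\Theta}+3\alpha e^{3\Theta}+5\beta e^{5\Theta}+2\Omega+2\zeta$, by the triangle inequality and monotonicity of the exponentials on $[0,\Theta]$.
\end{itemize}
Hence the norm of $DF$ is at most $L$ of Eq.~\eqref{eq:app-L}. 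The mean value inequality on the convex set $Q$ then gives $\|F(z_1)-F(z_2)\|\le L\|z_1-z_2\|$.

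\textbf{Grönwall step.} Let $e=z_{\rm NN}-z$. Exact initial data give $e(0)=0$, and
\[
\dot e=F(z_{\rm NN})-F(z)+(0,r_\theta).
\]
Both trajectories lie in $Q$ by hypothesis, so integrating yields
\[
\|e(\tau)\|\le \|r_\theta\|_\infty\,\tau+L\int_0^\tau\|e(s)\|\,ds .
\]
The integral form of Grönwall's lemma with forcing $\|r_\theta\|_\infty$ then gives $\|e(\tau)\|\le(\|r_\theta\|_\infty/L)(e^{L\tau}-1)$. Alternatively, one compares with the solution of $y'=Ly+\|r_\theta\|_\infty$, $y(0)=0$. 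Since $|\theta_{\rm NN}-\theta|\le\|e\|$, Eq.~\eqref{eq:app-gronwall} follows on $[0,T]$.

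\textbf{Local form.} Here the box is not given a priori, so we use a continuation argument, which I expect to be the main obstacle.
\begin{enumerate}
\item Let $\tau_1=\sup\{t\le\tau:\|e\|\le R \text{ on } [0,t]\}$. This is positive, because $e(0)=0$ and $e$ is continuous.
\item On $[0,\tau_1]$ the segment between $z(\sigma)$ and $z_{\rm NN}(\sigma)$ lies in $[\theta(\sigma)-R,\theta(\sigma)+R]\times[-|\dot\theta(\sigma)|-R,|\dot\theta(\sigma)|+R]$. There the same Jacobian computation gives the pointwise Lipschitz constant $L_R(\sigma)$.
\item Only the upper bound on $\theta$ matters in this computation. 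The term $-e^\theta$ is bounded in absolute value by $e^{\theta+R}$ even if $\theta$ dips below zero, so the $\max\{1,\cdot\}$ in $L_R$ still dominates.
\item The time-dependent Grönwall inequality, solving $y'=L_R(\sigma)y+\|r_\theta\|_\infty$, then gives $\|e\|\le B_R$ on $[0,\tau_1]$.
\item Suppose $\tau_1<\tau$. At $\tau_1$ we would need $\|e(\tau_1)\|=R$, while $\|e(\tau_1)\|\le B_R(\tau_1)\le R$. The delicate point is the equality case. I would close it by running the argument with $R'$ slightly larger than $R$, or by using the strict inequality $B_R<R$ and then passing to the limit.
\item This yields $\tau_1=\tau$ and $|\theta_{\rm NN}-\theta|\le B_R$ on $[0,\tau]$.
\end{enumerate}

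\textbf{Transfer to $\xi$.} The error in $\xi$ follows from the mean value theorem applied to $\xi=1-e^{-\theta}$, using $|d\xi/d\theta|=e^{-\theta}\le1$ for $\theta\ge0$. This uses that both $\theta$ and $\theta_{\rm NN}$ are nonnegative, as in the global hypothesis.
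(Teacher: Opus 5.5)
Your argument follows the paper's proof step for step. It bounds $DF$ by row sums over a convex box, applies the integral Gr\"onwall inequality with constant forcing $\|r_\theta\|_\infty$, runs a continuation argument in the tube of radius $R$ for the local form, and uses the mean value theorem for $\xi$. On that last step, your remark that both $\theta$ and $\theta_{\rm NN}$ must be nonnegative is more explicit than the paper.

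The one step you leave open is the equality case of the continuation, and your first remedy does not close it. Replacing $R$ by $R'>R$ also enlarges $L_{R'}$, and hence $B_{R'}$. So when $B_R(\tau)=R$, nothing guarantees $B_{R'}\le R'$. Even if it held, Gr\"onwall in the wider tube only returns the weaker bound $B_{R'}$. Your second remedy works only if the limit is taken in the endpoint $\tau$, and that itself requires $B_R(\tau')<R$ for $\tau'<\tau$.

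The observation that settles it, and the one the paper uses, is that $B_R$ is strictly increasing, since it solves $B_R'=L_R B_R+\|r_\theta\|_\infty$ with $L_R\ge1$. Suppose $\tau_1<\tau$. Then $R=\|e(\tau_1)\|\le B_R(\tau_1)$, so $B_R>0$ on $[\tau_1,\tau]$. Hence $B_R'\ge L_R B_R>0$ there, and $B_R(\tau)>B_R(\tau_1)\ge R$, which contradicts $B_R(\tau)\le R$. No perturbation of $R$ or $\tau$ is needed.
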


\noindent\textit{Proof.} The Jacobian of \(F\) has rows \((0,1)\) and
\((\partial_\theta f,\partial_{\dot\theta}f)\), with
\(\partial_\theta f=-e^{\theta}+3\alpha e^{3\theta}
+5\beta e^{5\theta}\) and
\(\partial_{\dot\theta}f=2\dot\theta-2\zeta\). The operator norm
induced by \(\|\cdot\|\) is the largest row sum of absolute values,
which on the stated set is bounded by Eq.~\eqref{eq:app-L}; the set
is convex, so \(L\) is a Lipschitz constant. The error
\(e=z_{\rm NN}-z\) obeys \(\dot e=F(z_{\rm NN})-F(z)+(0,r_\theta)\),
hence \(D^{+}\|e\|\le L\|e\|+|r_\theta|\) with \(\|e(0)\|=0\), and the
integral form of Gronwall's inequality gives
Eq.~\eqref{eq:app-gronwall}. For the local form, assume \(B_R\le R\)
on \([0,\tau]\) and let \(\tau_R\) be the supremum of the times
\(t\le\tau\) with \(\|e\|\le R\) on \([0,t]\), which is positive
because \(e(0)=0\). On \([0,\tau_R]\) the segment joining
\(z(\sigma)\) to \(z_{\rm NN}(\sigma)\) lies in the tube of radius
\(R\) about the reference orbit, where \(L_R(\sigma)\) bounds the
Jacobian, so \(D^{+}\|e\|\le L_R\|e\|+|r_\theta|\), and Gronwall
gives \(\|e\|\le B_R\) there. If \(\tau_R<\tau\) then
\(\|e(\tau_R)\|=R\le B_R(\tau_R)\), while \(B_R\) is strictly
increasing with \(B_R(\tau)\le R\), which forces \(\tau_R=\tau\).
Hence \(\|e\|\le B_R\) on \([0,\tau]\).
Finally \(|\xi_{\rm NN}-\xi|=|e^{-\theta_{\rm NN}}-e^{-\theta}|
\le|\theta_{\rm NN}-\theta|\) by the mean value
theorem.\hfill\(\square\)

Two caveats precede the numbers. The logged training loss is a mean
square over collocation points, so
\(\sqrt{\mathcal L_{\rm ODE}^{\theta}}\) is a root-mean-square and not
a supremum; and a residual sampled at \(N\) points does not majorize
the continuous residual, which would require either a quadrature
bound or a bound on \(\|\dot r_\theta\|_\infty\) together with the
fill distance. The evaluation below substitutes
\(\|r_\theta\|_\infty\approx\sqrt{\mathcal L_{\rm ODE}^{\theta}}\) and
is a certificate only under that substitution.

Numbers for the pull-in run of Fig.~\ref{fig:trajectories}(c)
(\(\alpha=0.2\), \(\beta=0.1\), \(\zeta=0.5\), \(T=2.5698\)): the
final logged loss is
\(\mathcal L_{\rm ODE}^{\theta}=1.1935\times10^{-6}\), so
\(\|r_\theta\|\approx1.09\times10^{-3}\). On the window,
\(\Theta=1.3562\) and \(\Omega=7.3972\), giving \(L=495.2\), of which
\(440.4\) comes from the \(5\beta e^{5\Theta}\) term alone. Then
\(LT=1272\) and Eq.~\eqref{eq:app-gronwall} is vacuous: the bound
overflows. The local form Eq.~\eqref{eq:app-gronwall-local}, taken
with \(R\) equal to the target error so that the hypothesis
\(B_R\le R\) is self-consistent, stays finite but still runs far
above the measured error. It certifies
\(|\theta_{\rm NN}-\theta|\le2.83\times10^{-3}\), the measured
sup-norm displacement error, up to \(\tau=0.663\)
(\(\xi=0.053\)), and \(\le10^{-2}\) up to \(\tau=0.953\)
(\(\xi=0.101\)) and \(\le10^{-1}\) up to \(\tau=1.240\)
(\(\xi=0.158\)). Widening the tube from the reference orbit to
radius \(R\) costs little: a bound evaluated on the reference orbit
alone, which is not a certificate, would reach \(0.666\),
\(0.968\), and \(1.442\).

That gap is the expected behavior of a Gronwall estimate, which
propagates a worst-case Lipschitz constant rather than the actual
contraction along the flow. Its value here is qualitative, the
scaling that follows being an order estimate and not a bound: the
certificate degrades as \(e^{5\theta}\), so the certifiable horizon
shrinks like \(\theta_{\max}^{-1}\) as the pole is approached, while
by Theorem~\ref{thm:exponent} \(\theta_{\max}\) grows only like
\(-\tfrac25\ln(\tau_*-T)\). A quantitative certificate over the whole
window needs a bound that follows the flow, such as a logarithmic
norm or a validated integrator, which lies outside the scope of this
work.

%% file: sections/sm_body.tex
%

\section{\label{sec:S-training}Training protocol for the trajectory
networks}

This Supplemental Material collects the settings, tolerances, and validation
numbers behind the figures of the main text. Abbreviations are respelled here
because the two files are read separately: PINN for physics-informed neural
network, ODE for ordinary differential equation, NEMS for
nanoelectromechanical system, RK4 for the fixed-step fourth-order Runge--Kutta
scheme, DOP853 for the adaptive eighth-order Dormand--Prince scheme, RMSE for
root-mean-square error, and L-BFGS for the limited-memory
Broyden--Fletcher--Goldfarb--Shanno optimizer.

Three networks represent the three dynamical regimes of the dimensionless
equation of motion. Each maps the dimensionless time $\tau$ to the normalized
displacement $\xi$, or to the rapidity $\theta=-\log(1-\xi)$ in the pull-in
case. Table~\ref{tab:S1} lists every setting that was varied between them.

Two entries carry the design decisions. The stable trajectory runs to
$T=60$, close to eight damped periods, where a plain $\tanh$ network floors
near $E_\infty\approx6.5\times10^{-3}$. Two Fourier input modes at the damped
frequency $\omega_d=\sqrt{1-\zeta^{2}}\approx0.8$, and not at the undamped
$\omega=1$, together with $1800$ L-BFGS steps, bring the error to
$1.05\times10^{-3}$. The pull-in trajectory needs no Fourier
features: its window ends at $\tau_*\approx2.62$, before a single period
completes, and the difficulty there is the movable pole rather than
oscillation.

Under the rapidity parametrization together with the hard initial-condition
ansatz, both the initial-condition loss and the soft pull-in barrier vanish
identically, so their weights $\lambda_1$ and $\lambda_2$ are inert and are
listed only for completeness. For the raw parametrization, where the barrier
is active, $N\sim512$--$1024$ collocation points and $\lambda_2\sim10$ are
adequate starting values.

\input{sections/sm_tab_s1_training}

\section{\label{sec:S-foldpinn}Parametric surrogate for the fold locus}

The second network takes the control pair $(\alpha,\beta)$ as input and returns
the equilibrium gap, trained on the force-balance residual alone with no
labeled solution data. Table~\ref{tab:S2} gives its architecture and its
accuracy against the closed-form fold.

Two accuracy figures describe the same network, and quoting either alone
misrepresents it. The mean deviation of the learned fold from the analytic
locus is $7.9\times10^{-8}$ in $\alpha$. That number flatters the surrogate:
the fold is a double-root locus, where the pull-in indicator is second-order
insensitive to the underlying field, so a field error of order $\epsilon$
displaces the boundary by order $\epsilon^{2}$. The fair measure is the error
of the equilibrium field itself, $\mathrm{RMSE}(u_*)=3.1\times10^{-3}$, with a
pointwise maximum of $0.126$. That maximum is confined to the corner
$\alpha,\beta\to0$, where the equilibrium sits at $\xi\to0$ and the minimizer
that defines $u_*$ is ill-conditioned, since the force balance there is flat in
$u$ over a wide interval. Away from that corner the field error stays below
$10^{-2}$.

Cell-by-cell classification separates the two error scales. Against the static
fold the surrogate labels $100.00\%$ of grid cells correctly. Against the
from-rest RK4 outcome it reaches $94.70\%$, which is exactly the score of the
static fold itself on the same test: the residual $5.30\%$ is the
kinetic-overshoot band of Sec.~\ref{sec:S-phase}, not a defect of the network:
a surrogate fitted to the static problem cannot see a dynamic effect.

\input{sections/sm_tab_s2_foldpinn}

\section{\label{sec:S-phase}Numerical construction of the phase diagram}

The phase diagram integrates the equation of motion from rest on a
$200\times200$ grid, once at $\zeta=0.1$ and once at $\zeta=0.7$.
Table~\ref{tab:S3} records the grid, the integrator settings, and the summary
statistics quoted in the main text.

Three choices set the reported numbers. The pull-in event fires at
$\xi\ge1-\epsilon$ with $\epsilon=10^{-6}$, which corresponds to a physical gap
of $0.1\,\mathrm{pm}$ at $d=100\,\mathrm{nm}$ and is far inside any regime the
continuum force model describes; the event time $\tau_{\mathrm{PI}}$ is
insensitive to $\epsilon$ because the last decade of gap closure occupies a few
parts in $10^{3}$ of the trajectory. The horizon $\tau_{\max}=200$ is $20$
damping times at $\zeta=0.1$, long enough that a cell not having collapsed by
then has settled. A guard flags any trajectory reaching $\xi<-10^{-9}$; no cell
triggered it in either sweep.

The dynamic boundary lies inside the static fold by
$\Delta\alpha$, with mean $0.0117$, median $0.0121$, and maximum $0.0141$ over
$182$ boundary samples at $\zeta=0.1$. Raising the damping to $\zeta=0.7$ drives
the mean gap to $-6.6\times10^{-5}$, one fifteenth of a grid cell in $\alpha$ and
of the opposite sign, so the two boundaries coincide to grid resolution and the
quasi-static picture is recovered. The overshoot band covers $5.30\%$ of the
sampled plane at $\zeta=0.1$, matching the classification deficit of
Sec.~\ref{sec:S-foldpinn} to the last digit, as it must, since both count the
same cells.

A cross-check compares the pull-in flag and the event time from the
phase-diagram sweep against an independent integration of the same cells. The
two agree on
every cell, with a largest event-time difference of $8.6\times10^{-14}$ and no
difference in $\xi$ at all.

\input{sections/sm_tab_s3_phase}

\section{\label{sec:S-rk4}Fixed-step Runge--Kutta at the movable pole}

The claim in the main text is representational rather than a statement about
accuracy: away from collapse RK4 converges at its nominal order, and at the
pole it steps into a region the model does not define. Table~\ref{tab:S4}
supports both halves.

Over $[0,0.98\tau_*]$ the sup-norm error falls from $7.6\times10^{-5}$ at
$h=0.05$ to $5.3\times10^{-11}$ at $h=0.001$. For $h\le0.01$ the ratio between
successive refinements reproduces the fourth-order prediction, $16.0$ against
$15.6$, $39.1$ against $38.8$, and $16.0$ against $16.1$, so the scheme attains
its nominal order on the regular part of the trajectory. The two coarsest steps
fall short of it, reaching $47\%$ and $51\%$ of the predicted ratio. Near the
pole the behavior does not improve with $h$. All six step sizes carry the state
past
$\xi=1$, where $(1-\xi)^{-4}$ changes sign and the attractive load turns into a
repulsive one. The largest finite value reached afterwards is erratic, ranging
from $46$ to $4057$ without any trend in $h$, which is the signature of a
divergent map rather than of a discretization error. The estimated collapse
time is not monotone in $h$ either.

An error-controlled solver behaves differently on the same problem. DOP853 at
$\mathrm{rtol}=\mathrm{atol}=10^{-12}$ refines its step where the field
steepens and halts at $\xi=0.999996$, a minimum gap of $3.9\times10^{-6}$,
with $\tau_*=2.6225672$, no overshoot, and no non-finite value. The rapidity
network keeps $\xi<1$ by construction, since $\xi=1-e^{-\theta}$ with $\theta$
finite.

\input{sections/sm_tab_s4_rk4}

\section{\label{sec:S-sens}Design sensitivities by automatic differentiation}

Differentiating the trained fold surrogate with respect to its inputs returns
design sensitivities without a finite-difference stencil.
Table~\ref{tab:S5} compares three of them against a central difference of the
closed-form pipeline at devices~A and~B, which bracket the Casimir loading of
the tabulated devices.

The three elasticities $\partial\ln V_{\mathrm{PI}}/\partial\ln p$ read
$+2.789$ for the gap, $-0.758$ for the area, and $+0.758$ for the stiffness.
The last two are equal and opposite because $V_{\mathrm{PI}}$ depends on $A$
and $k$ only through the ratio $A/k$ at fixed $\beta$, so the surrogate
reproduces a symmetry of the problem it was never told about. The gap
elasticity is not the naive $3/2$ of the purely electrostatic law
$V_{\mathrm{PI}}\propto d^{3/2}$: the Casimir loading makes $\alpha_c$ itself a
function of $d$ through $\beta\propto d^{-5}$, and the chain rule adds the
difference. That correction grows with the loading: at device~B, where
$\beta=0.052$, the gap elasticity is $+6.042$ against $+2.789$ at device~A,
while the area and stiffness elasticities move to $\mp1.408$ and keep their
equal-and-opposite relation. The quantity being differentiated is therefore
far from constant over the device range, and the surrogate follows it: the
relative error against the closed form is $3.4\times10^{-6}$ or better at
device~A and $1.4\times10^{-6}$ or better at device~B.

The boundary slope $d\alpha_c/d\beta$ is available in closed form as $-1/u^{2}$
and provides a stricter test, since it is a derivative of the fold rather than
of a quantity built from it. Differentiating through the network gives
$-2.0691$, $-1.8854$, and $-1.7117$ at $\beta=0.01$, $0.026$, and $0.05$,
against analytic values that differ in the sixth digit; the largest relative
error over the three points is $7.3\times10^{-6}$.

\input{sections/sm_tab_s5_sensitivity}

\section{\label{sec:S-inverse}Gradient-based inverse design}

The same surrogate inverts a specification. Fixing $A=100\,\mu\mathrm{m}^{2}$
and $k=0.5\,\mathrm{N/m}$ and targeting $V_{\mathrm{PI}}=0.30\,\mathrm{V}$,
Adam descends on the gap through the differentiable boundary.
Table~\ref{tab:S6} lists the outcome.

The optimizer reaches its tolerance after $201$ steps and
$1.42\,\mathrm{s}$, recovering $d=97.0364\,\mathrm{nm}$ against the analytic
root $97.0364\,\mathrm{nm}$, an absolute error of
$1.7\times10^{-5}\,\mathrm{nm}$.
The realized pull-in voltage lands within $8.8\times10^{-9}\,\mathrm{V}$ of the
target. Applied instead to the
from-rest dynamic boundary $\alpha_{\mathrm{dyn}}(\beta)$, which is defined
only implicitly by the integrated equation of motion, a surrogate fitted to
$33$ RK4-located boundary points with a fit RMSE of $1.5\times10^{-4}$ in
$\alpha$ returns $d\alpha_{\mathrm{dyn}}/d\beta$ as $-2.06$, $-1.80$, and
$-1.64$ at $\beta=0.01$, $0.03$, and $0.05$. Each matches a central finite
difference of the independently recomputed boundary to $0.09\%$, $0.01\%$, and
$0.5006\%$ respectively, at a finite-difference step $h=2.5\times10^{-3}$ and a
bisection tolerance of $10^{-6}$ in $\alpha$.

\input{sections/sm_tab_s6_inverse}

\section{\label{sec:S-devices}Representative device parameters}

Table~\ref{tab:S7} gives the full parameter set of the four devices summarized
in the main text, including the columns omitted there. The geometries are
order-of-magnitude engineering values spanning published Casimir MEMS and NEMS
oscillators rather than the parameters of any one fabricated device; no
quantitative benchmark against a measured pull-in voltage is claimed.

The effective mass follows from the quoted stiffness and resonance,
$m=k/(2\pi f_0)^{2}$, and the damping ratio from the quality factor,
$\zeta=1/(2Q)$. All four devices lie below the Casimir ceiling
$\beta^{*}=256/3125$, so each has a finite pull-in voltage. Device~C, with a
$200\,\mathrm{nm}$ gap and a stiffness of $5\,\mathrm{N/m}$, sits at
$\beta=8.1\times10^{-5}$ and is electrostatically dominated: its
$\xi_{\mathrm{PI}}=0.3331$ differs from the classical $1/3$ in the fourth
decimal. Device~B, at a $50\,\mathrm{nm}$ gap, carries $\beta=0.052$, which is
$63\%$ of the ceiling, and its pull-in displacement has moved to $0.2331$.

\input{sections/sm_tab_s7_devices}

\section{\label{sec:S-bandscan}Closure of the kinetic-overshoot band}

The band between the from-rest threshold and the static fold shrinks with
damping and closes at a finite damping ratio. Table~\ref{tab:S8} resolves that
closure at $\beta=0$ on eleven values of $\zeta$, so the approach can be read
rather than inferred from the four values quoted in the appendix.

The widths are located by bisection on the crossing of the saddle, a
finite-time decidable event, and the classification is safe because the node
below it is hyperbolic and is reached on a damping time. That criterion is
unrelated to the backward continuation of the strong stable manifold used in
the appendix, and the two agree to $6\times10^{-6}$ at the two damping ratios
where both were run. The scan stops at $\zeta=0.35$, where the width is still
$3.6\times10^{-4}$: closer to $\zeta_c$ the width falls below what a finite
integration horizon resolves, which is the one weakness of this criterion.

The closure is quadratic. Fitting $\alpha_c-\alpha_{\mathrm{dyn}}=
C(\zeta_c-\zeta)^{2}$ over $\zeta\ge0.20$ with $\zeta_c=0.3959$ gives
$C=0.167$, against $C=0.173$ from the four appendix values over a wider range;
the two differ by $4\%$, the spread expected from fitting a leading-order law
on different intervals.

\input{sections/sm_tab_s8_bandscan}

\section{\label{sec:S-archive}Archive layout and reproducibility}

The analysis code, the trained weights, the training logs, and the scripts that
regenerate every figure and every number quoted above are archived at
Zenodo~\cite{Zenodo2026} under a fixed seed of $42$, CPU-only, in double
precision. A full rerun through \texttt{run\_all.py} reproduces the archived
outputs bit-for-bit on the same interpreter.

The mapping from archived file to published quantity is direct.
\texttt{logs/history\_\{stable,growing,pullin\}.json} hold the configurations
and loss histories of Table~\ref{tab:S1}; \texttt{logs/pinn\_boundary.json}
holds Table~\ref{tab:S2}; \texttt{logs/phase\_diagram.json} holds
Table~\ref{tab:S3} and the data behind Fig.~2 of the main text;
\texttt{results/rk4\_stepstudy.csv} and
\texttt{results/verification\_summary.json} hold Table~\ref{tab:S4} and the
per-regime errors of the main text; \texttt{results/inverse\_design.json} holds
Tables~\ref{tab:S5} and~\ref{tab:S6} and Fig.~5;
\texttt{results/dynamic\_boundary\_design.json} holds Fig.~6;
\texttt{results/lifshitz\_boundaries.csv} holds Fig.~3; and
\texttt{results/device\_table.csv} holds Table~\ref{tab:S7} and Table~I of the
main text. The LaTeX fragments of every table above are emitted from those
files by \texttt{src/make\_sm\_tables.py}, so no number in this Supplemental
Material was transcribed by hand.

%% file: sections/sm_tab_s1_training.tex
\begin{table}[htb]
\caption{\label{tab:S1}%
Training protocol of the three single-trajectory physics-informed
neural networks. All runs are CPU-only, in double precision, and
seeded with $42$. Wall-clock times were measured on one core.}
\begin{ruledtabular}
\begin{tabular}{lccc}
Setting & Stable & Growing & Pull-in\\
\colrule
Parametrization & raw & raw & rapidity\\
$\alpha$ & $0.05$ & $0.0724219$ & $0.2$\\
$\beta$ & $1.000\times10^{-4}$ & $0.0407373$ & $0.1$\\
$\zeta$ & $0.6$ & $1$ & $0.5$\\
Training window $T$ & $60$ & $30$ & $2.57$\\
Hidden layers & $4$ & $4$ & $4$\\
Width & $64$ & $64$ & $64$\\
Fourier modes & $2$ & $0$ & $0$\\
Fourier $\omega$ & $0.8$ & $1$ & $1$\\
Collocation $N$ & $3000$ & $2000$ & $1600$\\
Adam steps & $1500$ & $2000$ & $3000$\\
Adam rate & $2.0\times10^{-3}$ & $1.0\times10^{-3}$ & $1.0\times10^{-3}$\\
L-BFGS steps & $1800$ & $500$ & $500$\\
Final loss & $2.42\times10^{-7}$ & $1.04\times10^{-6}$ & $1.19\times10^{-6}$\\
Wall clock (s) & $236$ & $108$ & $124$\\
\end{tabular}
\end{ruledtabular}
\end{table}

%% file: sections/sm_tab_s2_foldpinn.tex
\begin{table}[htb]
\caption{\label{tab:S2}%
Parametric equilibrium surrogate: architecture, optimization, and
accuracy against the closed-form fold of the main text. The last two
rows classify each cell of an $(\alpha,\beta)$ grid as pull-in or
stable.}
\begin{ruledtabular}
\begin{tabular}{lc}
Quantity & Value\\
\colrule
Input dimension & $2$\\
Hidden layers & $3$\\
Width & $64$\\
Collocation points & $8100$\\
Adam steps & $4000$\\
L-BFGS steps & $300$\\
Final residual loss & $1.69\times10^{-7}$\\
Wall clock (s) & $78$\\
\colrule
RMSE of the pull-in gap $u_*$ & $3.10\times10^{-3}$\\
Maximum $|\Delta u_*|$ & $1.26\times10^{-1}$\\
Mean $|\Delta\alpha|$ on the fold & $7.95\times10^{-8}$\\
Maximum $|\Delta\alpha|$ on the fold & $3.45\times10^{-7}$\\
$\alpha_c$ at $\beta=0$, surrogate & $0.1481478$\\
$\alpha_c$ at $\beta=0$, exact $4/27$ & $0.1481481$\\
$\beta^{*}$ at $\alpha=0$, surrogate & $0.0819200$\\
$\beta^{*}$ at $\alpha=0$, exact $256/3125$ & $0.0819200$\\
Cells matching the static fold (\%) & $100.00$\\
Cells matching from-rest RK4 (\%) & $94.70$\\
\end{tabular}
\end{ruledtabular}
\end{table}

%% file: sections/sm_tab_s3_phase.tex
\begin{table}[htb]
\caption{\label{tab:S3}%
Construction of the pull-in phase diagram of the main text. Each grid
cell is integrated from rest until the gap closes or the horizon is
reached. The gap $\Delta\alpha$ is the horizontal offset between the
dynamic boundary and the static fold.}
\begin{ruledtabular}
\begin{tabular}{lc}
Quantity & Value\\
\colrule
Grid in $\alpha$ & $200$ points on $[0,0.2]$\\
Grid in $\beta$ & $200$ points on $[0,0.09]$\\
Integrator & fixed-step RK4\\
Step $\Delta\tau$ & $0.005$\\
Horizon $\tau_{\max}$ & $200$\\
Pull-in criterion & $\xi\ge1-\epsilon$, $\epsilon=1\times10^{-6}$\\
Escape guard & $\xi<-1\times10^{-9}$\\
Cells that escaped & $0$\\
\colrule
Damping $\zeta$ & $0.1$\\
Dynamic pull-in fraction & $0.7343$\\
Static pull-in fraction & $0.6813$\\
Overshoot band, fraction of cells & $0.0530$\\
Mean gap $\Delta\alpha$ & $0.01171$\\
Median gap $\Delta\alpha$ & $0.01206$\\
Maximum gap $\Delta\alpha$ & $0.01407$\\
Boundary samples & $182$\\
Range of $\tau_{\mathrm{PI}}$ & $2.17$--$15.75$\\
\colrule
Damping $\zeta$ & $0.7$\\
Dynamic pull-in fraction & $0.6810$\\
Mean gap $\Delta\alpha$ & $-6.63\times10^{-5}$\\
\colrule
Fold endpoint $\alpha_c(\beta=0)$ & $0.148148$\\
Fold endpoint $\beta^{*}(\alpha=0)$ & $0.081920$\\
Integration time (s) & $38$ and $48$\\
\end{tabular}
\end{ruledtabular}
\end{table}

%% file: sections/sm_tab_s4_rk4.tex
\begin{table}[htb]
\caption{\label{tab:S4}%
Fixed-step fourth-order Runge--Kutta at the movable pole, for the
pull-in case $\alpha=0.2$, $\beta=0.1$, $\zeta=0.5$. The error column
is the sup-norm in $\xi$ over $[0,0.98\tau_*]$, away from the pole;
$\xi_{\max}$ is the largest finite value reached after the scheme
steps past $\xi=1$. Every step size overshoots.}
\begin{ruledtabular}
\begin{tabular}{cccccc}
$h$ & Steps & $E_\infty(\xi)$ & $\tau_{\mathrm{PI}}$ & $|\Delta\tau_{\mathrm{PI}}|$ & $\xi_{\max}$\\
\colrule
$0.05$ & 84 & $7.60\times10^{-5}$ & $2.6012$ & $2.14\times10^{-2}$ & $166$\\
$0.02$ & 210 & $4.18\times10^{-6}$ & $2.6216$ & $9.33\times10^{-4}$ & $46$\\
$0.01$ & 420 & $5.12\times10^{-7}$ & $2.6200$ & $2.55\times10^{-3}$ & $4057$\\
$0.005$ & 840 & $3.28\times10^{-8}$ & $2.6206$ & $2.00\times10^{-3}$ & $142$\\
$0.002$ & 2099 & $8.46\times10^{-10}$ & $2.6220$ & $5.49\times10^{-4}$ & $2530$\\
$0.001$ & 4197 & $5.27\times10^{-11}$ & $2.6222$ & $3.30\times10^{-4}$ & $177$\\
\end{tabular}
\end{ruledtabular}
\end{table}

%% file: sections/sm_tab_s5_sensitivity.tex
\begin{table*}[htb]
\caption{\label{tab:S5}%
Design sensitivities of the pull-in voltage taken by automatic
differentiation through the fold surrogate, against a finite-difference
evaluation of the closed-form fold, at the two devices that bracket the
Casimir loading of Table~I. Device~A carries $\beta=0.026$ and
device~B $\beta=0.052$. The elasticity is
$\partial\ln V_{\mathrm{PI}}/\partial\ln p$.}
\begin{ruledtabular}
\begin{tabular}{llcccc}
Device & Derivative & Surrogate & Reference & Rel. error & Elasticity\\
\colrule
A & $\partial V_{\mathrm{PI}}/\partial d$ (V/m) & $9.1396\times10^{6}$ & $9.1396\times10^{6}$ & $3.4\times10^{-6}$ & $2.7889$\\
 & $\partial V_{\mathrm{PI}}/\partial A$ (V/m$^{2}$) & $-2.4834\times10^{9}$ & $-2.4833\times10^{9}$ & $2.4\times10^{-6}$ & $-0.7578$\\
 & $\partial V_{\mathrm{PI}}/\partial k$ (Vm/N) & $4.9667\times10^{-1}$ & $4.9667\times10^{-1}$ & $2.4\times10^{-6}$ & $0.7578$\\
\colrule
B & $\partial V_{\mathrm{PI}}/\partial d$ (V/m) & $4.0067\times10^{7}$ & $4.0066\times10^{7}$ & $1.4\times10^{-6}$ & $6.0419$\\
 & $\partial V_{\mathrm{PI}}/\partial A$ (V/m$^{2}$) & $-1.8679\times10^{10}$ & $-1.8679\times10^{10}$ & $1.2\times10^{-6}$ & $-1.4084$\\
 & $\partial V_{\mathrm{PI}}/\partial k$ (Vm/N) & $2.3349\times10^{-1}$ & $2.3349\times10^{-1}$ & $1.2\times10^{-6}$ & $1.4084$\\
\end{tabular}
\end{ruledtabular}
\end{table*}

%% file: sections/sm_tab_s6_inverse.tex
\begin{table}[htb]
\caption{\label{tab:S6}%
Gradient-based inverse design through the same surrogate: the smallest
gap that still holds a target pull-in voltage of
$0.30\,\mathrm{V}$ at fixed area and stiffness.}
\begin{ruledtabular}
\begin{tabular}{lc}
Quantity & Value\\
\colrule
Recovered gap $d$ (nm) & $97.03641$\\
Closed-form gap $d$ (nm) & $97.03639$\\
Absolute error in $d$ (nm) & $1.70\times10^{-5}$\\
Recovered $\beta$ & $0.030223$\\
Absolute error in $V_{\mathrm{PI}}$ (V) & $8.80\times10^{-9}$\\
Adam steps to tolerance & $201$\\
Wall clock (ms) & $6303$\\
Final objective & $7.74\times10^{-17}$\\
\end{tabular}
\end{ruledtabular}
\end{table}

%% file: sections/sm_tab_s7_devices.tex
\begin{table*}[htb]
\caption{\label{tab:S7}%
Parameters of the representative devices. The Casimir strength is
$\beta=\pi^{2}\hbar cA/(240kd^{5})$ and the effective mass follows
from $k$ and $f_0$. The last column is the classical-limit thermal
bound on the pull-in voltage at $300\,\mathrm{K}$ derived in the
Appendix.}
\begin{ruledtabular}
\begin{tabular}{llcccccccc}
Device & Description & $A$ ($\mu\mathrm{m}^{2}$) & $d$ (nm) & $k$ (N/m) & $Q$ & $\beta$ & $\xi_{\mathrm{PI}}$ & $V_{\mathrm{PI}}$ (V) & $\Delta V_{\mathrm{PI}}$ (\%)\\
\colrule
A & MEMS gold plate & $100$ & $100$ & $0.5$ & $10000$ & $0.02600$ & $0.2717$ & $0.3277$ & $-0.57$\\
B & sub-100 nm NEMS & $25$ & $50$ & $2$ & $1000$ & $0.05201$ & $0.2331$ & $0.3316$ & $-1.07$\\
C & stiff NEMS & $100$ & $200$ & $5$ & $5000$ & $0.00008$ & $0.3331$ & $3.6564$ & $-0.00$\\
D & intermediate-gap NEMS & $50$ & $75$ & $1$ & $2000$ & $0.02739$ & $0.2693$ & $0.4198$ & $-0.46$\\
\end{tabular}
\end{ruledtabular}
\end{table*}

%% file: sections/sm_tab_s8_bandscan.tex
\begin{table}[htb]
\caption{\label{tab:S8}%
Closure of the kinetic-overshoot band with damping, at $\beta=0$.
The width is $\alpha_c-\alpha_{\mathrm{dyn}}$, located by bisection on
the crossing of the saddle. The last column is the quadratic law
$0.167\,(\zeta_c-\zeta)^{2}$ with $\zeta_c=0.3959$, fitted over
$\zeta\ge0.20$.}
\begin{ruledtabular}
\begin{tabular}{ccc}
$\zeta$ & $\alpha_c-\alpha_{\mathrm{dyn}}$ & quadratic law\\
\colrule
$0.05$ & $1.804\times10^{-2}$ & $1.992\times10^{-2}$\\
$0.08$ & $1.524\times10^{-2}$ & $1.662\times10^{-2}$\\
$0.10$ & $1.348\times10^{-2}$ & $1.458\times10^{-2}$\\
$0.13$ & $1.102\times10^{-2}$ & $1.177\times10^{-2}$\\
$0.16$ & $8.788\times10^{-3}$ & $9.266\times10^{-3}$\\
$0.20$ & $6.165\times10^{-3}$ & $6.390\times10^{-3}$\\
$0.24$ & $3.972\times10^{-3}$ & $4.047\times10^{-3}$\\
$0.28$ & $2.234\times10^{-3}$ & $2.237\times10^{-3}$\\
$0.30$ & $1.544\times10^{-3}$ & $1.532\times10^{-3}$\\
$0.32$ & $9.756\times10^{-4}$ & $9.596\times10^{-4}$\\
$0.35$ & $3.618\times10^{-4}$ & $3.510\times10^{-4}$\\
\end{tabular}
\end{ruledtabular}
\end{table}